\documentclass[journal, final]{IEEEtran}
\IEEEoverridecommandlockouts
\newif\ifisdraft
\isdrafttrue
\usepackage[utf8]{inputenc}
\usepackage[T1]{fontenc}
\usepackage{microtype}

\usepackage{amsmath,amssymb,amsfonts}
\usepackage{mathtools}
\usepackage{bm}
\usepackage{amsthm}

\usepackage{booktabs}
\usepackage{array}
\usepackage{tabularx}
\usepackage{makecell}
\usepackage{enumitem}
\usepackage{cuted}
\usepackage{multicol}
\usepackage{comment}

\usepackage{algorithm}
\usepackage{algpseudocode}

\usepackage{graphicx}
\usepackage{xcolor}
\usepackage{tikz}
\usetikzlibrary{arrows.meta,calc,fit,positioning,shapes.misc}

\definecolor{headercolor}{gray}{0.82}
\definecolor{zebracolor}{gray}{0.95}
\definecolor{lightgray}{gray}{0.93}
\definecolor{darkred}{rgb}{0.6,0,0}

\usepackage{cite}
\usepackage{doi}
\usepackage{bibunits}
\defaultbibliographystyle{IEEEtran}
\defaultbibliography{references}

\usepackage{hyperref}

\ifisdraft
  \hypersetup{
    colorlinks=true,
    linkcolor=red!50!black,
    citecolor=blue!50!black,
    urlcolor=black
  }
\else
  \hypersetup{hidelinks}
\fi

\usepackage{cleveref}

\newtheorem{theorem}{Theorem}[section]

\newtheorem{definition}[theorem]{Definition}
\newtheorem{remark}[theorem]{Remark}
\newtheorem{proposition}[theorem]{Proposition}

\newtheorem{corollary}[theorem]{Corollary}

\providecommand{\Ftwo}{\mathbb{F}_2}
\providecommand{\calC}{\mathcal{C}}

\providecommand{\xor}{\oplus}
\providecommand{\one}{\mathbf{1}}

\DeclareMathOperator*{\argmin}{arg\,min}

\newcommand{\orcid}[1]{%
  \href{https://orcid.org/#1}{\includegraphics[width=10pt]{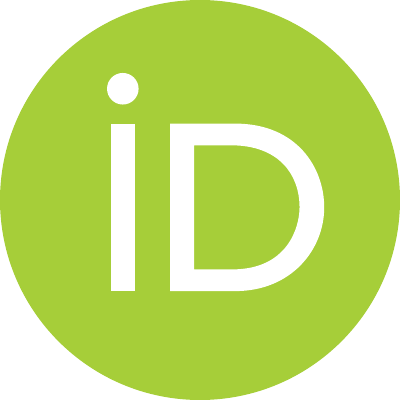}}%
}

\makeatletter

\let\lpgrand@original@section\section
\let\lpgrand@original@subsection\subsection

\NewDocumentCommand{\lpgrand@section}{s o m}{%
  \IfBooleanTF{#1}{%
    \lpgrand@original@section*{#3}%
  }{%
    \IfNoValueTF{#2}{%
      \lpgrand@original@section{#3}%
      \addcontentsline{atoc}{section}{%
        \protect\numberline{\thesection}#3}%
    }{%
      \lpgrand@original@section[#2]{#3}%
      \addcontentsline{atoc}{section}{%
        \protect\numberline{\thesection}#2}%
    }%
  }%
}

\NewDocumentCommand{\lpgrand@subsection}{s o m}{%
  \IfBooleanTF{#1}{%
    \lpgrand@original@subsection*{#3}%
  }{%
    \IfNoValueTF{#2}{%
      \lpgrand@original@subsection{#3}%
      \addcontentsline{atoc}{subsection}{%
        \protect\numberline{\thesubsection}#3}%
    }{%
      \lpgrand@original@subsection[#2]{#3}%
      \addcontentsline{atoc}{subsection}{%
        \protect\numberline{\thesubsection}#2}%
    }%
  }%
}

\newcommand{\AppendixOnlyTOC}{%
  \lpgrand@original@section*{Appendix Contents}%
  \begingroup
    \footnotesize
    \setcounter{tocdepth}{2}%
    \setlength{\parskip}{0pt}%
    \@starttoc{atoc}%
  \endgroup
  \let\section\lpgrand@section
  \let\subsection\lpgrand@subsection
}

\makeatother

\usepackage{xr-hyper}
\makeatletter
\let\savedbibcite\bibcite
\def\bibcite#1#2{}
\let\bibcite\savedbibcite
\makeatother

\begin{document}

\begin{bibunit}

\title{Low-Pathwidth GRAND: Exact Likelihood-Ordered Enumeration for BPSK Transmission over Correlated Gaussian Noise}

\author{Behrooz~Razeghi~$\!{\orcid{0000-0001-9568-4166}}$,~\IEEEmembership{Senior~Member,~IEEE}%
\thanks{B.~Razeghi is with Harvard University, Cambridge, MA, USA (e-mail: behroozrazeghi@seas.harvard.edu).}%
\thanks{This work was supported by the Swiss National Science Foundation (SNSF) under Grant No.~222339.}%
\thanks{\urlstyle{sf}Code: \url{https://github.com/BehroozRazeghi/lp-grand}.}%
}

\maketitle

% less than 200 words version
%
\begin{abstract}
The finite-block maximum-likelihood (ML) guarantee of soft-input GRAND requires querying noise-effect patterns in nonincreasing conditional-likelihood order. Under correlated Gaussian noise, additive reliability metrics and independent-block approximations need not preserve this order because the matched metric contains cross-coordinate interactions; the first codebook hit need not induce an ML codeword. We develop Low-Pathwidth GRAND (LP-GRAND) for binary phase-shift keying (BPSK) with precision matrix $Q$. The candidate-dependent part of the Gaussian negative log-likelihood is an observation-dependent quadratic pseudo-Boolean energy whose interaction graph has edge $\{i,j\}$ exactly when $Q_{ij}\neq0$. If $Q$ has half-bandwidth at most $\nu$, this energy admits a trellis with at most $2^\nu$ states per layer; a path decomposition of width $w$ yields at most $2^{w+1}$ bag assignments per layer. In real arithmetic, suffix dynamic programming and best-first complete-path enumeration enumerate patterns in nondecreasing energy. With complete enumeration and no abandonment, the first codebook hit induces an ML codeword for any nonempty binary codebook with equiprobable codewords. LP-GRAND agreed with exhaustive codeword ML in all $10{,}000$ frames for two $[20,12]$ codes. At nominal $E_b/N_0=2$ dB, its empirical BLER was lower than that of each block-based approximation for six $[64,52]$ codes.
\end{abstract}

\begin{IEEEkeywords}
Guessing random additive noise decoding (GRAND), maximum-likelihood decoding, correlated Gaussian noise, sparse precision matrices, pathwidth. 
\end{IEEEkeywords}

\section{Introduction}

Soft decoders built around a code-specific representation---such as a factor graph, trellis, polarization transform, or concatenated construction---typically couple the decoding procedure to a particular code family~\cite{gallager1962low,kschischang2001factor,berrou1993near, arikan2009channel,tal2015list}. Consequently, a receiver that supports multiple short codebooks, cyclic-redundancy-check (CRC)-constrained codebooks, random linear codes, or codebooks without dedicated soft decoders may require distinct decoder implementations. Guessing Random Additive Noise Decoding (GRAND) instead separates the generation of candidate noise-effect patterns from the codebook-dependent membership test~\cite{duffy2019capacity,duffy2022ordered}. Rather than enumerating codewords directly, a soft-input GRAND decoder orders candidate noise-effect patterns according to the received vector and the channel model and tests whether the candidate word induced by each pattern belongs to the codebook. In the binary hard-decision representation considered here, a noise-effect pattern is represented by a vector $z\in\{0,1\}^n$ and induces the candidate word $y\xor z$, where $y\in\{0,1\}^n$ is the hard-decision word. Codebook membership is therefore determined by whether $y\xor z\in\calC$.

The code-agnostic membership interface of GRAND is not sufficient by itself to guarantee finite-block maximum-likelihood (ML) decoding; the guarantee also depends on the candidate query order~\cite{duffy2019capacity}. For equiprobable codewords and a fixed received vector, querying noise-effect patterns $z$ in nonincreasing order of the likelihoods of their induced words $y\xor z$, with ties resolved by a fixed deterministic rule, ensures that the first queried pattern satisfying $y\xor z\in\calC$ induces an ML codeword, provided that the search is not terminated before a codeword is found. Such premature termination is referred to as abandonment in the GRAND literature. The number of membership queries is the position of this first codebook hit in the ordered candidate list and therefore depends on both the likelihood order induced by the observed received vector and channel model and the particular codebook. For a binary linear code with parity-check matrix $H\in\Ftwo^{(n-k)\times n}$, codebook membership is equivalent to the zero-syndrome condition $H(y\xor z)^{\mathsf T}=\mathbf{0}_{n-k}$, where the matrix-vector product is evaluated over $\Ftwo$. Thus, once a codebook-membership test is available, the objective of soft-input GRAND is to generate candidate noise-effect patterns in an order consistent with the conditional likelihoods of their induced words for the observed received vector.

For a memoryless channel, the conditional likelihood of a candidate word factorizes across coordinates, allowing soft-input GRAND queries to be generated from coordinatewise reliability information~\cite{solomon2020soft}. Several practical query generators replace the exact conditional-likelihood order induced by the observed received vector and the channel model with a coordinate-additive reliability surrogate or with an approximation that assumes independence across nonoverlapping blocks~\cite{duffy2021guessing,duffy2022ordered,duffy2023using}. For channels with memory, however, the matched conditional likelihood generally does not factorize across coordinates. Correlated receiver noise, residual equalization error, and temporally localized interference can induce statistical dependence that appears as cross-coordinate interaction terms in the matched negative log-likelihood metric over binary hard-decision noise-effect patterns. Interleaving can separate symbols that are locally coupled in the physical symbol sequence, but it requires additional buffering and may place statistically coupled coordinates far apart in the coordinate ordering used by the decoder~\cite{an2022keep,duffy2023using}. Consequently, for low-latency receivers that preserve the physical symbol order, coordinate-additive query metrics and block-independence approximations can omit interaction terms present in the matched negative log-likelihood and therefore need not preserve the candidate order induced by the matched channel likelihood.

ORBGRAND provides a computationally structured soft-input query generator by replacing the exact conditional-likelihood order induced by the received observation with an integer-valued surrogate constructed from reliability ranks~\cite{duffy2022ordered}. Specifically, the bit positions are ordered by nondecreasing reliability magnitude, with rank one assigned to the least reliable position. Let $\tilde z_i\in\{0,1\}$ indicate whether the coordinate assigned reliability rank $i$ is included in the candidate noise-effect pattern. Basic ORBGRAND assigns $\tilde z$ the logistic weight $w_L(\tilde z)= \sum_{i=1}^n i \, \tilde z_i$ and generates patterns in nondecreasing order of this weight. Piecewise-linear ORBGRAND variants replace the linear rank-weight model $i\mapsto i$ with an integer-valued piecewise-linear approximation to the ordered reliability-magnitude profile~\cite{duffy2022ordered, riaz2024sub}. These additive rank-domain metrics permit query generation through structured integer-partition methods and support parallelizable pattern generation.

The additive rank-domain metric used by basic ORBGRAND permits computationally structured pattern enumeration but cannot represent the cross-coordinate interaction terms present in the negative log-likelihood metric matched to a channel with memory. Its query order therefore need not coincide with the candidate order induced by the matched likelihood. ORBGRAND-AI models statistical dependence within nonoverlapping blocks by computing likelihoods for alternative assignments within each block and then applying an ORBGRAND-style query generator under an independence approximation across blocks~\cite{duffy2023using}. Thus, it uses joint likelihood information within each block but omits statistical dependence between different blocks. Increasing the block length $b$ allows the local likelihood calculation to include a larger group of jointly modeled coordinates. However, a BPSK block of length $b$ has $2^b-1$ nonbaseline assignments relative to its selected baseline assignment. Consequently, the number of local substitutions grows exponentially with $b$, whereas cross-block dependence remains absent from the approximating global query-ordering metric.

An exact treatment of channel memory has also been developed for finite-block linear Gaussian intersymbol-interference (ISI) channels through SGRAND-ISI, which constructs an ML-equivalent GRAND ordering under its specified channel model~\cite{li2026grand_isi}. In the overlapping ISI setting, SGRAND-ISI and the precision-domain formulation considered here induce the same candidate-word order whenever they represent the same finite-block Gaussian likelihood. The present work instead addresses the more general problem of constructing this matched order directly from a supplied sparse precision matrix through a validated bounded-width path decomposition.

The proposed formulation therefore avoids the across-block independence approximation by using the exact negative log-likelihood metric induced by the specified Gaussian model whenever the corresponding precision-induced interaction graph admits a path decomposition of bounded width. For BPSK transmission over correlated Gaussian noise, let $Q=\Sigma^{-1}$ denote the precision matrix. For a fixed observation $r$, the candidate-dependent part of the Gaussian negative log-likelihood associated with the candidate word $y\xor z$, where $z\in\{0,1\}^n$, can be represented exactly as a quadratic pseudo-Boolean energy, up to an additive constant independent of $z$. The interaction graph of this energy contains the edge $\{i,j\}$, for $1\leq i<j\leq n$, if and only if $Q_{ij}\neq0$. For stationary first-order Gauss--Markov noise, $Q$ is tridiagonal, and the resulting energy has the chain form $W_r(z)=\sum_{i=1}^n\alpha_i z_i+ \sum_{i=1}^{n-1}\beta_i z_i z_{i+1}$. Under the coordinate order $1,\ldots,n$, binary noise-effect patterns are in one-to-one correspondence with complete paths through a two-state trellis, and the total cost of the path associated with $z$ equals $W_r(z)$. Consequently, enumerating distinct complete paths in nondecreasing total cost, with equal-cost paths ordered by a fixed deterministic rule, produces the GRAND query order induced by the matched Gaussian likelihood. In particular, the first $K$ emitted paths induce the first $K$ candidate queries in this order.

We refer to the proposed decoder as Low-Pathwidth GRAND (LP-GRAND). Unlike additive reliability-rank surrogates and block-independence approximations, LP-GRAND orders candidate noise-effect patterns according to a quadratic energy that is equal, up to a candidate-independent constant, to the Gaussian negative log-likelihood determined by the observed received vector and the specified precision matrix. Given a validated path decomposition of width $w$ for the corresponding precision-induced interaction graph, the resulting layered graphical representation retains all unary and pairwise terms of this quadratic energy and supports enumeration of complete paths in nondecreasing energy. The graphical metric is independent of the codebook; the codebook is accessed only through the membership test. Consequently, in real arithmetic, LP-GRAND preserves GRAND's code-independent candidate-generation interface while producing the candidate order induced by the matched Gaussian likelihood.

The main algorithmic challenge is that standard sum-product and min-sum recursions do not produce the ordered sequence of binary noise-effect patterns required by GRAND~\cite{kschischang2001factor}. On an acyclic graphical model, sum-product computes the partition function and exact marginal distributions, whereas min-sum identifies a minimum-energy assignment. Neither recursion directly generates a ranked sequence of complete assignments. LP-GRAND addresses this challenge by combining suffix dynamic programming with best-first enumeration of complete paths in a layered directed acyclic graph. For a precision matrix with half-bandwidth at most $\nu$, the resulting trellis has at most $2^\nu$ states per layer and $\mathcal{O}(n2^\nu)$ total size. Thus, for fixed $\nu$, the graphical representation grows linearly with the blocklength $n$.

\vspace{-3pt}

\subsection{Contributions}

The principal contributions of this work are as follows:
\begin{enumerate}[leftmargin=*]
\item For a fixed received observation $r$ and a specified symmetric positive-definite precision matrix $Q$, we derive an observation-dependent quadratic pseudo-Boolean energy $W_r(z)$ satisfying $E_r(z)=E_r(0)+W_r(z)$, where $E_r(z)$ is the candidate-dependent quadratic term in the Gaussian negative log-likelihood associated with the candidate word
induced by $z$. We prove that, in real arithmetic, enumerating all $z\in \{0,1\}^n$ in nondecreasing order of $W_r(z)$, with ties resolved by a fixed deterministic rule, ensures that the first codebook hit is an ML codeword for any nonempty binary codebook with equiprobable codewords. Decoding under a finite membership-query budget is treated separately.
\item For a precision matrix with half-bandwidth at most $\nu$, we construct a trellis with at most $2^\nu$ states per layer and combine suffix dynamic programming with best-first complete-path enumeration to generate the exact candidate order induced by the matched Gaussian likelihood. For an interaction graph induced by a general sparse precision matrix, we construct and validate a path decomposition from a specified vertex ordering and build the associated layered graphical representation and complete-path enumerator. A path decomposition of width $w$ yields a layered representation with at most $2^{w+1}$ states per layer, one for each binary assignment to the bag associated with that layer. The enumeration is exact with respect to the specified quadratic energy for any validated path decomposition, whereas the size of the layered representation depends on the width of the decomposition produced from the selected vertex ordering. The vertex-ordering heuristics are not guaranteed to produce a minimum-width path decomposition. The decomposition width bounds the number of states per layer but does not bound the number of membership queries required to reach the first codebook hit.
\item We establish conditions for exact candidate ordering and sufficient conditions for preserving strict pairwise order under coefficient quantization and precision-matrix mismatch. Enumeration using the complete quadratic energy retains all unary and pairwise terms induced by the specified precision matrix. By contrast, explicit sparsification yields an order that is exact for the metric induced by the approximating precision matrix $\widetilde Q$, but not necessarily for the metric induced by the original precision matrix $Q$. We derive both uniform and pair-specific sufficient conditions under which the strict order of two candidates is preserved under these perturbations. The same graphical representation also supports exact computation of the partition function over the full binary assignment space $\{0,1\}^n$.
\end{enumerate}

\vspace{3pt}

See Appendix~\ref{app:sec:related_work} for extended related work.

% \subsection{Organization}

\vspace{-1pt}

\section{Preliminaries and Problem Formulation}
\label{sec:model}

Throughout, binary vectors are written as row vectors, whereas real-valued signal, noise, and received-signal vectors are written as column vectors. Addition of binary vectors is performed componentwise over $\mathbb{F}_2$ and is denoted by $\xor$. Let $\calC \subseteq \{0,1\}^n$ be a nonempty binary codebook of cardinality $|\calC|=M$. If $\calC$ is a binary linear $[n,k]$ code, then $M=2^k$. For each $c\in \{0,1\}^n$, define the BPSK modulation mapping $x(c) \triangleq \bigl(1-2c_1,\ldots,1-2c_n\bigr)^{\mathsf T} \in \{-1,+1\}^n$.

Let $C$ be uniformly distributed over $\calC$. The transmitted BPSK signal is $x(C)$, and the received vector is $R=x(C)+N$, $N\sim\mathcal{N}\bigl(\mathbf{0}_n,\Sigma\bigr)$, where $\Sigma\in\mathbb{R}^{n\times n}$ is symmetric positive definite. Let $Q\triangleq\Sigma^{-1}$ denote the corresponding precision matrix, and let $r\in\mathbb{R}^n$ be a realization of $R$. Let $y$ denote the coordinatewise sign-decision reference word, defined by $y_i\triangleq 0$ if $r_i\geq 0$ and $y_i\triangleq 1$ otherwise, for $i=1,\ldots,n$. Thus, ties at $r_i=0$ are resolved in favor of $y_i=0$. Finally, let $s\triangleq x(y)\in\{-1,+1\}^n$ denote the BPSK-modulated hard-decision vector.
For a binary noise-effect pattern $z\in\{0,1\}^n$, define the induced candidate word as $\widehat{c}(z)\triangleq y\xor z$ and its BPSK-modulated representation as
\begin{equation}
x_z \triangleq x\bigl(\widehat{c}(z)\bigr) = s\odot\bigl(\mathbf{1}_n-2z^{\mathsf T}\bigr),
\label{eq:xz_def}
\end{equation}
where $\odot$ denotes the Hadamard product and $\mathbf{1}_n\in\mathbb{R}^n$ is the length-$n$ all-ones column vector. Because binary vectors are represented as row vectors, $z^{\mathsf T}$ is viewed as a column vector in $\mathbb{R}^n$ in~\eqref{eq:xz_def}. For fixed $y$, the mapping $z\mapsto y\xor z$ is a bijection on $\{0,1\}^n$.

\begin{remark}
For fixed $r$ and any candidate word $c\in\{0,1\}^n$, the vector $z=y(r)\xor c$ uniquely identifies the coordinates at which $c$ differs from the hard-decision word $y(r)$. Hence, $z$ provides an observation-dependent parameterization of the candidate words and is not assumed to be a realization of a codeword-independent binary additive-noise process. We use the term ``noise effect'' for the candidate pattern indexed by $z$ in the GRAND query sequence. The query order over $z$ is induced by the conditional Gaussian likelihood $f_{R|C}\bigl(r\mid y(r)\xor z\bigr)$.
\end{remark}

For a fixed observation $r$, the Gaussian negative log-likelihood of the candidate word $\widehat{c}(z)$ is
\begin{equation}
\!\!
-\log f_{R|C}\bigl(r\mid\widehat{c}(z)\bigr) = E_r(z)
+\frac{n}{2}\log(2\pi) +\frac{1}{2}\log\det(\Sigma),
\end{equation}
where 
\begin{equation}
E_r(z) \triangleq \frac{1}{2}(r-x_z)^{\mathsf T}Q(r-x_z)
\label{eq:energy}
\end{equation}
is the candidate-dependent quadratic term in the Gaussian negative log-likelihood associated with the candidate word induced by $z$. The final two terms are independent of $z$. Because $C$ is uniformly distributed over $\calC$, the sets of ML and maximum a posteriori (MAP) codewords coincide, and their common decision set is
\begin{equation}
\argmin_{c\in\calC} \frac{1}{2}(r-x(c))^{\mathsf T}Q \,(r-x(c)).
\label{eq:ml_codeword}
\end{equation}

Because the mapping $z\mapsto y\xor z$ is bijective, ordering the noise-effect patterns $z\in\{0,1\}^n$ in nondecreasing order of $E_r(z)$ is equivalent to ordering their induced candidate words in nonincreasing order of the conditional likelihood $f_{R|C}\bigl(r\mid\widehat{c}(z)\bigr)$. Consequently, a GRAND decoder that follows this query order, resolves equal-energy candidates using a fixed deterministic tie-breaking rule, and continues until a codebook hit occurs ensures that the first pattern satisfying $y\xor z\in\calC$ induces an ML codeword~\cite{duffy2019capacity}. If multiple codewords attain the minimum in~\eqref{eq:ml_codeword}, the tie-breaking rule selects one of them.

% ======================================================================
\section{Gaussian Noise-Effect Metric}
\label{sec:metric}
% ======================================================================

Let $D_s\triangleq\operatorname{diag}(s_1,\ldots,s_n)$ and $a\triangleq r-s$. Here, $r$, $s$, $a$, and $x_z$ are real-valued column vectors, whereas $z\in\{0,1\}^n$ is a binary row vector and $z^{\mathsf T}$ denotes its real-valued column representation. From \eqref{eq:xz_def}, $x_z = s\odot\bigl(\mathbf 1_n- 2z^{\mathsf T}\bigr) = s-2D_s z^{\mathsf T}$. Hence,
\begin{equation}
r-x_z = a+2D_s z^{\mathsf T}.
\label{eq:residual_linear}
\end{equation}
Substituting \eqref{eq:residual_linear} into \eqref{eq:energy} and using the symmetry of $Q$ gives
\begin{align}
E_r(z)
&=
\frac{1}{2} \bigl(a+2D_s z^{\mathsf T}\bigr)^{\mathsf T} Q \bigl(a+2D_s z^{\mathsf T}\bigr) \nonumber\\
&=
E_r(0) +2zD_sQa +2zD_sQD_s z^{\mathsf T},
\label{eq:energy_expand}
\end{align}
where $E_r(0) = \frac{1}{2}a^{\mathsf T}Qa$. 
% \begin{equation}
% E_r(0) = \frac{1}{2}a^{\mathsf T}Qa.
% \label{eq:zero_flip_energy}
% \end{equation}

% \subsection{Quadratic Pseudo-Boolean Form}

\begin{proposition}[Quadratic pseudo-Boolean representation]
\label{prop:quadratic_metric}
For fixed observation $r$, let $s=x(y)$, $a=r-s$, and $Q=\Sigma^{-1}$. Then
\begin{equation}
E_r(z)=E_r(0)+W_r(z),
\label{eq:energy_decomposition}
\end{equation}
where $W_r:\{0,1\}^n\to\mathbb R$ is the quadratic pseudo-Boolean function
\begin{equation}
W_r(z) = \sum_{i=1}^n \alpha_i z_i + \sum_{1\leq i<j\leq n}\beta_{ij}z_i z_j,
\label{eq:quadratic_metric}
\end{equation}
with coefficients
\begin{equation}
\alpha_i = 2s_i(Qa)_i+2Q_{ii},
\qquad i=1,\ldots,n,
\label{eq:alpha}
\end{equation}
and
\begin{equation}
\beta_{ij} = 4s_i s_jQ_{ij},
\qquad 1\leq i<j\leq n.
\label{eq:beta}
\end{equation}
\end{proposition}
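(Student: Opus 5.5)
We start from the expansion \eqref{eq:energy_expand}, which already gives $E_r(z)=E_r(0)+2zD_sQa+2zD_sQD_sz^{\mathsf T}$. The plan is to rewrite the linear and quadratic pieces in coordinates and then use the Boolean identity $z_i^2=z_i$ to move the diagonal of the quadratic form into the linear part. This leaves a pseudo-Boolean polynomial with only strictly off-diagonal pairs.

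\emph{Linear piece.} Since $D_s$ is diagonal, $(D_sQa)_i=s_i(Qa)_i$. Hence
\[
2zD_sQa=\sum_{i=1}^n 2s_i(Qa)_i z_i .
\]

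\emph{Quadratic piece.} The matrix $D_sQD_s$ has entries $s_is_jQ_{ij}$. We split
\[
2zD_sQD_sz^{\mathsf T}=2\sum_{i}s_i^2Q_{ii}z_i^2+2\sum_{i\neq j}s_is_jQ_{ij}z_iz_j .
\]
In the diagonal sum we use two facts: $s_i^2=1$ because $s\in\{-1,+1\}^n$, and $z_i^2=z_i$ because $z_i\in\{0,1\}$. The diagonal contribution is therefore $\sum_i 2Q_{ii}z_i$, which is linear in $z$ and combines with the linear piece to give $\alpha_i$ as in \eqref{eq:alpha}.

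\emph{Off-diagonal sum.} Because $Q$ is symmetric, the $(i,j)$ and $(j,i)$ terms are equal. The off-diagonal sum thus collapses to $\sum_{i<j}4s_is_jQ_{ij}z_iz_j$, which yields $\beta_{ij}$ as in \eqref{eq:beta}. Collecting terms gives \eqref{eq:energy_decomposition} with $W_r$ of the form \eqref{eq:quadratic_metric}. Since $W_r$ is a real polynomial of degree at most two in the Boolean variables, it is by definition a quadratic pseudo-Boolean function.

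There is no real obstacle here. The only points needing care are bookkeeping. First, the factor of $2$ in front of the quadratic form combines with symmetrization to produce the $4$ in $\beta_{ij}$. Second, the reduction $z_i^2=z_i$ is what makes the identity hold on $\{0,1\}^n$, although it is not an identity of real quadratic forms. Third, the row/column convention in \eqref{eq:energy_expand} must be handled consistently: the scalar $a^{\mathsf T}QD_sz^{\mathsf T}$ equals $zD_sQa$ by symmetry of $Q$ and $D_s$. If \eqref{eq:energy_expand} is to be rederived rather than cited, the single step to verify is the expansion of $\tfrac12(a+2D_sz^{\mathsf T})^{\mathsf T}Q(a+2D_sz^{\mathsf T})$. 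Its cross terms give $\tfrac12\cdot 2\cdot 2\,zD_sQa=2zD_sQa$, and its quadratic term gives $\tfrac12\cdot4\,zD_sQD_sz^{\mathsf T}$.
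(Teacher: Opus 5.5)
Your proposal is correct and follows the paper's proof essentially step for step. Like the paper, you start from the expansion \eqref{eq:energy_expand}, write the linear term coordinatewise, and split $2zD_sQD_sz^{\mathsf T}$ into diagonal and off-diagonal parts, then use $s_i^2=1$, $z_i^2=z_i$, and the symmetry of $Q$ to obtain $\alpha_i$ and $\beta_{ij}$ (your additional check of the cross-term bookkeeping behind \eqref{eq:energy_expand} is also correct).
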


\begin{proof}
See Appendix~\ref{app:proof_prop_quadratic_metric}.
\end{proof}

\begin{remark}[Precision-induced interaction graph]
\label{rem:precision_graph}
Define the undirected graph $\mathcal{G}_Q=(\mathcal{V},\mathcal{E}_Q)$ with $\mathcal{V}=\{1,\ldots,n\}$ and edge set
\begin{equation}
\mathcal{E}_Q = \bigl\{ \{i,j\}: 1\leq i<j\leq n,\; Q_{ij}\neq0 \bigr\}.
\end{equation}
Because $s_i s_j\in\{-1,+1\}$, \eqref{eq:beta} implies $\beta_{ij}\neq0$ if and only if  $Q_{ij}\neq0$.
Hence, $\mathcal{G}_Q$ is the interaction graph of $W_r$ and is determined solely by the off-diagonal sparsity pattern of $Q$, rather than that of $\Sigma$. In particular, its edge set is independent of the observation $r$. The unary coefficients depend on $r$ through both $a$ and $s$. For each $\{i,j\}\in \mathcal{E}_Q$, the magnitude $|\beta_{ij}|=4|Q_{ij}|$ is independent of $r$, whereas its sign depends on $r$ through $s_i s_j$.
\end{remark}

Further details on the stationary first-order Gauss--Markov model, precision-matrix modifications, whitening, and the equivalent finite-block linear-Gaussian formulation are given in Appendix~\ref{ssec:gm_supplement_start}.

\section{LP-GRAND Decoding}
\label{sec:decoder}
% =======================================

Let $z^{(1)},z^{(2)},\ldots,z^{(2^n)}$ be a permutation of $\{0,1\}^n$ such that
\begin{equation}
W_r\bigl(z^{(1)}\bigr) \leq
W_r\bigl(z^{(2)}\bigr) \leq
\cdots \leq W_r\bigl(z^{(2^n)}\bigr),
\label{eq:order}
\end{equation}
with equal-energy patterns ordered by the deterministic tie-breaking policy of the enumerator. Since $E_r(z)=E_r(0)+W_r(z)$ and the Gaussian negative log-likelihood differs from $E_r(z)$ only by a candidate-independent constant, the ordering in~\eqref{eq:order} arranges the induced candidate words $y\xor z^{(1)},\ldots,y\xor z^{(2^n)}$ in nondecreasing negative log-likelihood, or equivalently, in nonincreasing conditional likelihood evaluated at $r$.
For an arbitrary binary codebook $\calC$, define the membership test
\begin{equation}
\Phi:\{0,1\}^n\to\{0,1\}, \qquad
\Phi(c) \triangleq \one\{c\in\calC\}.
\label{eq:membership_test}
\end{equation}
For a binary linear $[n,k]$ code with parity-check matrix $H_{\mathrm{pc}}\in\Ftwo^{(n-k)\times n}$, $\Phi(c)=1$ if and only if $H_{\mathrm{pc}}c^{\mathsf T}=\mathbf{0}_{n-k}$, where the matrix-vector product is evaluated over $\Ftwo$.

\begin{algorithm}[t]
\caption{LP-GRAND Decoding}
\label{alg:lp_grand}
\begin{algorithmic}[1]
\Require Received vector $r\in\mathbb{R}^n$; 
specified precision matrix $Q=Q^{\mathsf T}\succ0$; 
membership test $\Phi:\{0,1\}^n\to\{0,1\}$; 
trellis or path-decomposition structure $\mathcal D$ that represents all unary terms and every interaction edge $\{i,j\}\in\mathcal E_Q$ exactly once; 
membership-query budget $\tau\in\{1,\ldots,2^n\}$.
% \Ensure An element of
% $\{c\in\{0,1\}^n:\Phi(c)=1\}\cup\{\mathsf{ABANDON}\}$.
\Ensure A codeword $\widehat c$ satisfying $\Phi(\widehat c)=1$, if one
is found within $\tau$ membership queries; otherwise, an abandonment
declaration.

\State Set $y_i\gets0$ if $r_i\geq0$, and $y_i\gets1$ otherwise,
for $i=1,\ldots,n$.
\State Set
$s\gets(1-2y_1,\ldots,1-2y_n)^{\mathsf T}$,
$a\gets r-s$, and $h\gets Qa$.

\For{$i=1,\ldots,n$}
    \State $\alpha_i\gets2s_i h_i+2Q_{ii}$.
\EndFor

\State $\mathcal E_Q\gets
\bigl\{\{i,j\}:1\leq i<j\leq n,\ Q_{ij}\neq0\bigr\}$.

\ForAll{$\{i,j\}\in\mathcal E_Q$}
    \State $\beta_{ij}\gets4s_i s_jQ_{ij}$.
\EndFor

\State Define the coefficient representation of
$\displaystyle
W_r(z)\gets
\sum_{i=1}^{n}\alpha_i z_i+
\!\!\! \sum_{\{i,j\}\in\mathcal E_Q} \!\!\! \beta_{ij}z_i z_j, \,\, \forall z\in\{0,1\}^n$.

\State $\mathsf{Gen}\gets
\Call{InitializeExactEnumerator}{W_r,\mathcal D}$.

\Statex \textbf{Enumerator contract:}
Successive calls to $\Call{Next}{\mathsf{Gen}}$ return every $z\in\{0,1\}^n$ exactly once in nondecreasing $W_r(z)$, with equal-energy configurations ordered by the enumerator's deterministic tie-breaking policy.

\For{$q=1,\ldots,\tau$}
    \State $z\gets\Call{Next}{\mathsf{Gen}}$.
    \State $\widehat c\gets y\xor z$.
    \If{$\Phi(\widehat c)=1$}
        \State \Return $\widehat c$.
    \EndIf
\EndFor

% \State \Return $\mathsf{ABANDON}$.
\State \Return an abandonment declaration.
\end{algorithmic}
\end{algorithm}

\begin{theorem}[ML optimality under complete enumeration]
\label{thm:ml}
Let $\calC\subseteq\{0,1\}^n$ be a nonempty binary codebook, and consider $R=x(C)+N$, $N\sim\mathcal{N}\bigl(\mathbf{0}_n,\Sigma\bigr)$, $\Sigma=\Sigma^{\mathsf T}\succ0$, where $C$ is uniformly distributed over $\calC$. For a fixed received observation $r$, suppose that the enumerator in Algorithm~\ref{alg:lp_grand} emits the patterns in the order defined by~\eqref{eq:order}. If $\tau=2^n$, then the first codebook-hit index
\begin{equation}
q_\star \triangleq
\min \bigl\{ q\in\{1,\ldots,2^n\}: y\xor z^{(q)}\in\calC \bigr\}
\label{eq:first_codebook_hit}
\end{equation}
is well defined, and $\widehat{c}_{\mathrm{ML}}\triangleq y\xor z^{(q_\star)}$ is an ML codeword. Because the prior on $\calC$ is uniform, $\widehat{c}_{\mathrm{ML}}$ is also a MAP codeword.
\end{theorem}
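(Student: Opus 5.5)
The plan is a direct argument in three steps: show that $q_\star$ exists, relate the order in~\eqref{eq:order} to codeword likelihoods via Proposition~\ref{prop:quadratic_metric}, and then compare the first hit with an arbitrary codeword.

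\emph{Step 1: $q_\star$ is well defined.} Since $\calC$ is nonempty, pick any $c\in\calC$ and set $z=y\xor c$. The map $z\mapsto y\xor z$ is a bijection on $\{0,1\}^n$, and $z^{(1)},\ldots,z^{(2^n)}$ is a permutation of $\{0,1\}^n$. Hence $z=z^{(q)}$ for some $q\le 2^n$. The set in~\eqref{eq:first_codebook_hit} is therefore a nonempty subset of a finite set, so its minimum exists. Because $\tau=2^n$, Algorithm~\ref{alg:lp_grand} reaches query $q_\star$ before the budget runs out, returns $y\xor z^{(q_\star)}$, and does not abandon.

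\emph{Step 2: relate $W_r$ to the codeword metric.} For any $c\in\{0,1\}^n$, let $z_c\triangleq y\xor c$, so that $x_{z_c}=x(c)$ by~\eqref{eq:xz_def}. Then the objective in~\eqref{eq:ml_codeword} equals $E_r(z_c)$. By Proposition~\ref{prop:quadratic_metric}, this is $E_r(0)+W_r(z_c)$, where $E_r(0)$ does not depend on $c$. Consequently, minimizing~\eqref{eq:ml_codeword} over $\calC$ is the same as minimizing $W_r(y\xor c)$ over $c\in\calC$.

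\emph{Step 3: optimality of the first hit.} Take any $c\in\calC$ and let $q$ be its index, so $z^{(q)}=y\xor c$. By minimality of $q_\star$ we have $q\ge q_\star$. The nondecreasing chain~\eqref{eq:order} then gives $W_r(z^{(q_\star)})\le W_r(z^{(q)})$. Since $\widehat c_{\mathrm{ML}}=y\xor z^{(q_\star)}\in\calC$, Step~2 shows that $\widehat c_{\mathrm{ML}}$ attains the minimum in~\eqref{eq:ml_codeword}, i.e., it is an ML codeword.

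\emph{MAP claim.} Under the uniform prior, $P(C=c\mid R=r)\propto f_{R|C}(r\mid c)$ on $\calC$, so the ML and MAP sets coincide, as already noted in Section~\ref{sec:model}.

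None of these steps is a real obstacle. The only point needing care is that ties do not matter: the argument uses only the weak inequality from~\eqref{eq:order}, so whatever tie-breaking rule the enumerator applies still yields a minimizer. Exact real arithmetic is assumed, as the statement implicitly does.
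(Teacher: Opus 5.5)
Your proposal is correct and follows essentially the same route as the paper's proof. Both arguments obtain nonemptiness of the codebook-compatible pattern set from the bijection $z\mapsto y\xor z$, reduce likelihood maximization over $\calC$ to minimizing $W_r$ via $E_r(z)=E_r(0)+W_r(z)$, and conclude that the first hit in the nondecreasing order minimizes $W_r$ over that set. The differences are only presentational: you make the index comparison $q\geq q_\star$ explicit where the paper simply asserts the minimization, and you work with the quadratic ML objective directly rather than writing the density with the constant $K_\Sigma$.
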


\begin{proof}
See Appendix~\ref{app:proof_thm_ml}.
\end{proof}

\begin{corollary}[Zero-syndrome membership test]
\label{cor:linear_membership}
In the setting of Theorem~\ref{thm:ml}, suppose that $\calC$ is a binary linear $[n,k]$ code with parity-check matrix $H_{\mathrm{pc}}\in\Ftwo^{(n-k)\times n}$. Then
\begin{equation}
\Phi(y\xor z)=1 \quad\Longleftrightarrow\quad
H_{\mathrm{pc}}(y\xor z)^{\mathsf T} = \mathbf{0}_{n-k},
\label{eq:syndrome_membership}
\end{equation}
where the matrix-vector product is evaluated over $\Ftwo$. Hence, the candidate order is determined by $r$ and $Q$ through $W_r$, whereas codebook dependence is confined to the membership test $\Phi$.
\end{corollary}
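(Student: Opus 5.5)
The corollary follows from two facts: the definition of linear codes via a parity-check matrix, and the structure of Algorithm~\ref{alg:lp_grand}. The proof therefore has two parts, the equivalence~\eqref{eq:syndrome_membership} and the interpretive claim about where codebook dependence enters.

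\textbf{Equivalence.} A binary linear $[n,k]$ code with parity-check matrix $H_{\mathrm{pc}}$ is by definition the null space
\begin{equation*}
\calC=\{c\in\{0,1\}^n: H_{\mathrm{pc}}c^{\mathsf T}=\mathbf{0}_{n-k}\}
\end{equation*}
over $\Ftwo$. Here $H_{\mathrm{pc}}$ is assumed to have full row rank $n-k$, so that the null space has dimension $k$. Applying~\eqref{eq:membership_test} with $c=y\xor z$ gives $\Phi(y\xor z)=1$ if and only if $y\xor z\in\calC$, which holds if and only if $H_{\mathrm{pc}}(y\xor z)^{\mathsf T}=\mathbf{0}_{n-k}$. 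If the reader prefers $H_{\mathrm{pc}}$ not to have full rank, the equivalence still holds. One simply takes $\calC$ to be defined as this null space, and the rank condition is needed only for $|\calC|=2^k$.

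\textbf{Separation of roles.} By Proposition~\ref{prop:quadratic_metric}, the coefficients $\alpha_i$ in~\eqref{eq:alpha} and $\beta_{ij}$ in~\eqref{eq:beta} are functions of $r$ (through $y$, $s$, $a$) and of $Q$ only. Hence $W_r$, and with it the permutation~\eqref{eq:order}, is determined by $r$, $Q$, and the fixed tie-breaking policy of the enumerator. None of these depends on $\calC$. In Algorithm~\ref{alg:lp_grand}, the codebook appears only in the line that evaluates $\Phi(\widehat c)$. Theorem~\ref{thm:ml} then applies unchanged with $\Phi$ realized as the syndrome test. The theorem's hypothesis is only that the codebook is nonempty, and a linear code always contains the zero word, so the first hit with $\tau=2^n$ is an ML codeword.

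\textbf{Main obstacle.} No step is genuinely hard; this is bookkeeping. The one point that needs care is the tie-breaking policy: I will state explicitly that it is a fixed rule of the enumerator and is not allowed to depend on $\calC$. Under that convention, the query order depends only on $(r,Q)$, as the corollary claims.
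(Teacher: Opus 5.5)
Your proposal is correct and matches the paper's reasoning. The paper gives no separate proof: it treats \eqref{eq:syndrome_membership} as immediate from the null-space definition of a parity-check matrix (stated just before Algorithm~\ref{alg:lp_grand}), and it treats the separation claim as a consequence of Proposition~\ref{prop:quadratic_metric}, whose coefficients depend only on $r$ and $Q$; your remark that the tie-breaking rule must not depend on $\calC$ is a sound addition, and your full-row-rank caveat is automatic here, since $\dim\calC=k$ forces $\operatorname{rank}H_{\mathrm{pc}}=n-k$.
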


% ========================================
\section{Graphical Likelihood-Ordered Enumeration}
\label{sec:graphical}
% ========================================

By Theorem~\ref{thm:ml}, LP-GRAND is ML under complete enumeration when the noise-effect patterns are generated in nondecreasing order of $W_r$. In this section we construct layered graphical representations that generate this order exactly when the interaction graph admits a path decomposition of bounded width. We first consider precision matrices that are banded under the coordinate ordering $1,\ldots,n$ and then extend the construction to general path decompositions.

\subsection{Banded Precision Matrices and Trellis Construction}

\begin{definition}[Half-bandwidth]
\label{def:half_bandwidth}
A symmetric matrix $Q\in\mathbb{R}^{n\times n}$ has half-bandwidth at
most $\nu$, where $0\leq\nu\leq n-1$, if $Q_{ij}=0$ whenever
$|i-j|>\nu$.
\end{definition}

If $Q$ has half-bandwidth at most $\nu$, then $\beta_{ij}=0$ whenever
$|i-j|>\nu$, and \eqref{eq:quadratic_metric} becomes
\begin{equation}
W_r(z) = \sum_{i=1}^n \alpha_i z_i
+ \sum_{i=1}^n \sum_{d=1}^{\min\{\nu,i-1\}} \beta_{i-d,i}z_{i-d}z_i.
\label{eq:banded_metric}
\end{equation}
Each pairwise term $\beta_{i-d,i}z_{i-d}z_i$ is included in the branch cost at layer $i$, corresponding to the larger of its two coordinate indices.
For $\nu\geq1$, define the zero-padded extension of $z$ by
\begin{equation}
\bar z_j \triangleq
\begin{cases}
0,   & j\leq0,\\
z_j, & 1\leq j\leq n.
\end{cases}
\label{eq:zero_padded_bits}
\end{equation}
At the beginning of layer $i$, define the state
\begin{equation}
u_i \triangleq
\bigl( \bar z_{i-\nu}, \bar z_{i-\nu+1},
\ldots, \bar z_{i-1} \bigr)
\in \{0,1\}^{\nu}.
\label{eq:banded_state}
\end{equation}
Thus, $u_i$ contains the preceding $\nu$ binary coordinates, with zero padding for indices not exceeding zero. In particular, $u_1=(0,\ldots,0)\in\{0,1\}^{\nu}$.

For a state $u=(u_1,\ldots,u_\nu)\in\{0,1\}^{\nu}$ at layer $i$ and a branch label $b\in\{0,1\}$, define the branch cost
\begin{equation}
\gamma_i(u,b) \triangleq
\alpha_i b + \sum_{d=1}^{\min\{\nu,i-1\}} \beta_{i-d,i}u_{\nu-d+1}b.
\label{eq:branch}
\end{equation}
The successor-state mapping is
\begin{equation}
T(u,b) \triangleq
\begin{cases}
(b), & \nu=1,\\
(u_2,\ldots,u_\nu,b), & \nu\geq2.
\end{cases}
\label{eq:state_transition}
\end{equation}
A branch labeled $b=z_i$ connects $u_i$ to $u_{i+1}=T(u_i,z_i)$.
For $\nu=0$, define the state space as $\{\varnothing\}$ and set $\gamma_i(\varnothing,b) \triangleq \alpha_i b$, $T(\varnothing,b) \triangleq \varnothing$.

\begin{theorem}[Exact trellis representation]
\label{thm:trellis}
Suppose that $Q$ has half-bandwidth at most $\nu$, where $0\leq\nu\leq n-1$. Consider the layered trellis specified by the branch costs $\gamma_i$ and transition mapping $T$, with initial state $u_1=(0,\ldots,0)$ for $\nu\geq1$ and $u_1=\varnothing$ for $\nu=0$. Then length-$n$ paths beginning at $u_1$ are in one-to-one correspondence with binary patterns $z\in\{0,1\}^n$. For the path labeled by $z=(z_1,\ldots,z_n)$,
\begin{equation}
\sum_{i=1}^n \gamma_i(u_i,z_i) = W_r(z),
\label{eq:path_cost_identity}
\end{equation}
where $u_i$ denotes the state at the beginning of layer $i$.
\end{theorem}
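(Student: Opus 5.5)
The plan is to prove the two assertions separately: first the bijection between paths and patterns, then the cost identity~\eqref{eq:path_cost_identity}.

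\emph{Bijection.} The transition mapping $T$ in~\eqref{eq:state_transition} is deterministic, and at every state there are exactly two outgoing branches, labeled $b=0$ and $b=1$. Hence a length-$n$ path starting at the fixed initial state $u_1$ is determined uniquely by its label sequence $(z_1,\ldots,z_n)\in\{0,1\}^n$. Conversely, every label sequence is realizable, because both branches exist at every state. So the map from paths to label sequences is a bijection. The case $\nu=0$ is immediate, since there is a single state $\varnothing$ and the path is simply its label sequence.

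\emph{State invariant.} Before computing costs, I would establish by induction on $i$ that the state reached at the beginning of layer $i$ along the path labeled $z$ equals $(\bar z_{i-\nu},\ldots,\bar z_{i-1})$, which is exactly the definition~\eqref{eq:banded_state}.
\begin{itemize}
\item Base case: $u_1$ is the all-zero vector, which matches the zero padding $\bar z_j=0$ for $j\leq 0$.
\item Inductive step: the shift $T(u_i,z_i)=(u_{i,2},\ldots,u_{i,\nu},z_i)$ produces $(\bar z_{i-\nu+1},\ldots,\bar z_i)$, which equals $u_{i+1}$. For $\nu=1$ the state is simply $(z_i)$.
\end{itemize}
A consequence of this invariant is the identification $u_{i,\nu-d+1}=\bar z_{i-d}$ for $1\leq d\leq\nu$. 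The only place that needs care is this index bookkeeping $\nu-d+1\leftrightarrow i-d$. I expect it to be the main (if minor) obstacle.

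\emph{Cost identity.} Substituting the invariant into the branch cost~\eqref{eq:branch} gives
\[
\gamma_i(u_i,z_i)=\alpha_i z_i+\sum_{d=1}^{\min\{\nu,i-1\}}\beta_{i-d,i}\,z_{i-d}z_i.
\]
The restriction $d\leq i-1$ guarantees $i-d\geq1$, so only genuine coordinates appear and the padded entries never contribute. Summing over $i=1,\ldots,n$ gives exactly the right-hand side of~\eqref{eq:banded_metric}.

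It remains to check that~\eqref{eq:banded_metric} equals $W_r(z)$ as given in~\eqref{eq:quadratic_metric}. Each unordered pair $i'<j$ with $j-i'\leq\nu$ appears exactly once, namely at layer $j$ with $d=j-i'$. Every pair with $j-i'>\nu$ has $\beta_{i'j}=4s_{i'}s_jQ_{i'j}=0$ by the half-bandwidth assumption (Definition~\ref{def:half_bandwidth}) together with Proposition~\ref{prop:quadratic_metric}. Therefore every pairwise term of $W_r$ is counted exactly once, and no term is missing or duplicated. This proves~\eqref{eq:path_cost_identity}.
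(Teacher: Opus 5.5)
Your proposal is correct and follows essentially the same route as the paper: the bijection comes from the label sequence, then $(u_i)_{\nu-d+1}=z_{i-d}$ for $d\le\min\{\nu,i-1\}$ is substituted into the branch cost, and summing layer by layer counts each pairwise term once, at its larger index. You are somewhat more explicit than the paper in two places: you prove by induction that the recursively generated state agrees with the zero-padded definition~\eqref{eq:banded_state}, and you recheck that~\eqref{eq:banded_metric} equals~\eqref{eq:quadratic_metric} under bandedness, whereas the paper reads both off directly from the definitions.
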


\begin{proof}
See Appendix~\ref{app:proof_thm_trellis}.
\end{proof}

\begin{corollary}[Likelihood-ordered trellis enumeration]
\label{cor:exact_trellis_queries}
Under the assumptions of Theorem~\ref{thm:trellis}, suppose that an enumerator emits distinct length-$n$ trellis paths beginning at the initial state in nondecreasing total path cost, using the deterministic tie-breaking policy defining the sequence in~\eqref{eq:order}. Then, for every $1\leq K\leq2^n$, the branch-label sequences of the first $K$ emitted paths are the first $K$ noise-effect patterns in the LP-GRAND ordering, and their induced words are the first $K$ candidate queries. The trellis has at most $2^\nu$ states per layer, at most $(n+1)2^\nu$ states across all layers, and at most $n2^{\nu+1}$ branches. Consequently, for fixed $\nu$, both the number of states and the number of branches grow linearly with $n$.
\end{corollary}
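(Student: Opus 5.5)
The corollary reduces to Theorem~\ref{thm:trellis} plus a counting argument; I will treat the ordering claim and the size claims separately.

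\textbf{Ordering claim.} By Theorem~\ref{thm:trellis}, the map $\pi$ sending a length-$n$ path from $u_1$ to its branch-label sequence $z$ is a bijection onto $\{0,1\}^n$, and the path cost equals $W_r(z)$ by~\eqref{eq:path_cost_identity}. So an enumeration of distinct paths in nondecreasing total cost, composed with $\pi$, is an enumeration of distinct patterns in nondecreasing $W_r$.

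The remaining point is that the emitted sequence and the sequence $z^{(1)},z^{(2)},\ldots$ of~\eqref{eq:order} agree term by term, not merely as multisets of energies. Here I use the hypothesis that the enumerator applies the same deterministic tie-breaking policy that defines~\eqref{eq:order}, read as a policy on patterns via $\pi$. Then both sequences are the unique listing of $\{0,1\}^n$ that is sorted by $W_r$ and ordered within each energy class by that policy, hence they coincide. I will argue this by induction on $K$: assuming the first $K-1$ terms agree, the $K$th emitted pattern is the policy-minimal element among the remaining patterns of minimal energy, and so is $z^{(K)}$.

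The induced words $y\xor z^{(k)}$ are then the first $K$ candidate queries of Algorithm~\ref{alg:lp_grand}, since the query loop simply applies $z\mapsto y\xor z$.

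\textbf{Size claims.} These are direct counting from~\eqref{eq:banded_state} and~\eqref{eq:state_transition}.
\begin{itemize}
\item For $\nu\ge1$, the states at layer $i$ lie in $\{0,1\}^\nu$, so each layer has at most $2^\nu$ states.
\item For $\nu=0$, each layer has the single state $\varnothing$, which equals $2^0$.
\item There are $n+1$ state layers, $u_1,\ldots,u_{n+1}$, giving at most $(n+1)2^\nu$ states in total.
\item Each state at layers $1,\ldots,n$ has exactly two outgoing branches, $b\in\{0,1\}$, giving at most $n\cdot2^\nu\cdot2=n2^{\nu+1}$ branches.
\end{itemize}
Linearity in $n$ for fixed $\nu$ follows immediately.

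\textbf{Main obstacle.} The only delicate step is making precise that the tie-breaking policy transfers between paths and patterns. I will handle it by defining the policy on patterns and pulling it back to paths through $\pi$. Everything else is bookkeeping.
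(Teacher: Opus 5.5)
Your proposal is correct and takes the paper's approach: the bijection and cost identity of Theorem~\ref{thm:trellis}, together with the shared tie-breaking policy, carry the nondecreasing path-cost order over to the LP-GRAND pattern order and its induced queries, and the size bounds come from counting at most $2^\nu$ states in each of $n+1$ layers, with at most two outgoing branches per nonterminal state. Your induction on $K$ within equal-cost classes just spells out the tie-breaking transfer that the paper states in one sentence.
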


\begin{proof}
See Appendix~\ref{app:proof_cor_exact_trellis_queries}.
\end{proof}

\begin{remark}[Negative branch costs]
\label{rem:negative_branch_costs}
The branch costs in~\eqref{eq:branch} need not be nonnegative. This does not affect the path-cost identity in Theorem~\ref{thm:trellis} or the resulting ordering of paths. If a path-enumeration method requires nonnegative branch costs, let $\mathcal{A}_i$ denote the set of branches at layer $i$ and define
\begin{equation}
m_i \triangleq \min_{(u,b)\in\mathcal{A}_i}\gamma_i(u,b),
\qquad
\gamma_i'(u,b) \triangleq \gamma_i(u,b)-m_i.
\label{eq:shifted_branch_cost}
\end{equation}
Then $\gamma_i'(u,b)\geq0, \forall (u,b)\in\mathcal{A}_i$. Because every length-$n$ path beginning at the initial state contains exactly one branch at each layer,
\begin{equation}
\sum_{i=1}^n\gamma_i'(u_i,z_i) =
W_r(z)-\sum_{i=1}^n m_i.
\end{equation}
The offset $\sum_{i=1}^n m_i$ is independent of the path. Hence, this
layerwise shift preserves all strict path-cost comparisons, tie classes, and the resulting path order.
\end{remark}

\subsection{Bounded-Pathwidth Graphical Enumeration}

The trellis associated with a banded precision matrix is a special case of the separator-state representation induced by a path decomposition. This subsection extends that representation to binary pairwise energies whose interaction graphs admit path decompositions of low width.

\begin{definition}[Path decomposition]
\label{def:path_decomposition}
Let $\mathcal{G}=(\mathcal{V},\mathcal{E})$ be a finite undirected graph. A path decomposition of $\mathcal{G}$ is a sequence of bags $\mathcal{B}_1,\ldots,\mathcal{B}_m$, where $\mathcal{B}_t\subseteq\mathcal{V}$, satisfying
\begin{enumerate}[label=(\roman*),leftmargin=*]
\item $\displaystyle\bigcup_{t=1}^m\mathcal{B}_t=\mathcal{V}$;
\item for every edge $\{i,j\}\in\mathcal{E}$, there exists $t\in\{1,\ldots,m\}$ such that $\{i,j\}\subseteq\mathcal{B}_t$;
\item for every $v\in\mathcal{V}$, the index set
$\bigl\{t\in \{1,\ldots,m\}: v\in\mathcal{B}_t\bigr\}$ is an interval of $\{1,\ldots,m\}$.
\end{enumerate}
The width of the decomposition is $\max_{1\leq t\leq m}\bigl(|\mathcal{B}_t|-1\bigr)$, and the pathwidth $\operatorname{pw}(\mathcal{G})$ is the minimum width over all path decompositions of $\mathcal{G}$.
\end{definition}

Given a vertex ordering $\pi=(\pi_1,\ldots,\pi_n)$, define the active frontier immediately before processing $\pi_t$ as
\begin{equation}
\mathcal{F}_t(\pi) \triangleq
\left\{ \pi_i: 1\leq i<t, \, \exists j \in \{t,\ldots,n\}
\text{ s.t. }
\{\pi_i,\pi_j\}\in\mathcal{E} \right\}.
\label{eq:active_frontier}
\end{equation}
Define the associated bag by
\begin{equation}
\mathcal{B}_t \triangleq \mathcal{F}_t(\pi) \cup \{\pi_t\},
\qquad t=1,\ldots,n.
\label{eq:frontier_bag}
\end{equation}
The sequence $(\mathcal{B}_t)_{t=1}^n$ forms a path decomposition. Vertex coverage holds because $\pi_t\in\mathcal{B}_t$ for every $t$. Moreover, if $\{\pi_i,\pi_j\} \in \mathcal{E}$ with $i<j$, then $\pi_i\in\mathcal{F}_j(\pi)$, and hence $\{\pi_i,\pi_j\}\subseteq\mathcal{B}_j$. For each $i\in\{1,\ldots,n\}$, define
\begin{equation}
\ell_i \triangleq \max\left( \{i\} \cup
\left\{ j\in\{i+1,\ldots,n\}: \{\pi_i,\pi_j\}\in\mathcal{E}\right\} \right).
\end{equation}
Then $\pi_i\in\mathcal{B}_t$ if and only if $i\leq t\leq\ell_i$. Hence, the set of bag indices containing $\pi_i$ is the interval $\{i,\ldots,\ell_i\}$, which establishes the running-intersection property.

The frontier width associated with a vertex ordering $\pi$ is
\begin{equation}
w_{\mathcal{G}}(\pi) \triangleq \max_{1\leq t\leq n}
\left|\mathcal{F}_t(\pi)\right|.
\label{eq:induced_frontier_width}
\end{equation}
Since $\pi_t\notin\mathcal{F}_t(\pi)$, $|\mathcal{B}_t|-1=|\mathcal{F}_t(\pi)|$, and therefore the path decomposition in~\eqref{eq:frontier_bag} has width $w_{\mathcal{G}}(\pi)$. By the vertex-separation characterization of pathwidth, $\operatorname{pw}(\mathcal{G})=\min_{\pi}w_{\mathcal{G}}(\pi)$, where the minimum is taken over all vertex orderings of $\mathcal{G}$~\cite{kinnersley1992vertex}.

We evaluate four deterministic vertex orderings: the coordinate ordering $1,\ldots,n$ and three graph-based heuristics, namely reverse Cuthill--McKee (RCM)~\cite{chan1980linear}, greedy minimum degree~\cite{george1989evolution}, and greedy minimum fill~\cite{rose1972graph,kjaerulff1990triangulation}. For the minimum-degree and minimum-fill heuristics, we take the elimination sequence directly as $\pi$, without reversal. For each ordering $\pi$, we construct the frontier bags on the original interaction graph and compute the induced width $w_{\mathcal{G}}(\pi)$ according to~\eqref{eq:active_frontier}--\eqref{eq:induced_frontier_width}. None of the three graph-based heuristics is guaranteed to produce an ordering $\pi$ satisfying $w_{\mathcal{G}}(\pi)=\operatorname{pw}(\mathcal{G})$.

\begin{theorem}[Exact path-decomposition representation]
\label{thm:pathwidth}
Let $\mathcal{G}=(\mathcal{V},\mathcal{E})$ be a finite undirected graph, and consider the binary pairwise energy
\begin{equation}
W(z) =\sum_{i\in\mathcal{V}}\theta_i(z_i)
+ \!\! \! \sum_{\{i,j\}\in\mathcal{E}} \theta_{\{i,j\}}(z_i,z_j),
\quad z\in\{0,1\}^{\mathcal{V}},
\label{eq:general_pairwise_energy}
\end{equation}
where $\theta_i:\{0,1\}\to\mathbb{R}$ and $\theta_{\{i,j\}}:\{0,1\}^2\to\mathbb{R}$.
Let $\mathcal{B}_1,\ldots,\mathcal{B}_m$ be a path decomposition of
$\mathcal{G}$ of width $w$. Then there exists a layered directed acyclic graph containing at most $2^{w+1}$ bag-assignment states in each layer such that its source-to-sink paths are in one-to-one correspondence with the assignments $z\in\{0,1\}^{\mathcal{V}}$. The local costs can be assigned so that each unary and pairwise term in
\eqref{eq:general_pairwise_energy} is included exactly once, and the
total cost of the path corresponding to $z$ is $W(z)$.
Consequently, an enumerator that emits distinct source-to-sink paths in nondecreasing total cost emits the corresponding assignments in nondecreasing order of $W(z)$. For $W=W_r$, applying the deterministic tie-breaking policy defining the sequence in~\eqref{eq:order} produces the LP-GRAND ordering of
noise-effect patterns and hence the corresponding candidate-word query order.
\end{theorem}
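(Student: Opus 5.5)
The plan is to build the layered DAG explicitly from the bags and then check three things: that the paths correspond to assignments, that the costs add up correctly, and that the final ordering claim follows as it did for the trellis.

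\textbf{Construction.} Layer $t\in\{1,\ldots,m\}$ has state set $\{0,1\}^{\mathcal{B}_t}$, which has at most $2^{w+1}$ elements. We add a source before layer $1$ and a sink after layer $m$. There is an edge from $\sigma\in\{0,1\}^{\mathcal{B}_t}$ to $\sigma'\in\{0,1\}^{\mathcal{B}_{t+1}}$ if and only if $\sigma$ and $\sigma'$ agree on $\mathcal{B}_t\cap\mathcal{B}_{t+1}$. The source connects to every state of layer $1$, and every state of layer $m$ connects to the sink.

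To assign costs, give each vertex $i$ the index $t(i)=\min\{t: i\in\mathcal{B}_t\}$; this exists by property (i). Give each edge $\{i,j\}\in\mathcal{E}$ the index $t(\{i,j\})=\min\{t:\{i,j\}\subseteq\mathcal{B}_t\}$; this exists by property (ii). The cost of entering state $\sigma$ at layer $t$ is
\[
c_t(\sigma)=\sum_{i:\,t(i)=t}\theta_i(\sigma_i)+\sum_{\{i,j\}:\,t(\{i,j\})=t}\theta_{\{i,j\}}(\sigma_i,\sigma_j).
\]
This cost is well defined because every variable it uses lies in $\mathcal{B}_t$. Since each term is assigned to exactly one index, every unary and pairwise term is included exactly once.

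\textbf{Bijection.} Given $z$, its restrictions $\sigma_t=z|_{\mathcal{B}_t}$ are consistent on consecutive overlaps, so they form a source-to-sink path. Distinct $z$ give distinct paths, because the bags cover $\mathcal{V}$.

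For the converse, take a path $(\sigma_1,\ldots,\sigma_m)$. Consecutive states agree on overlaps, but we also need agreement on $\mathcal{B}_s\cap\mathcal{B}_t$ for nonadjacent $s<t$. This is where property (iii) enters, and it is the main point of the proof. If $v\in\mathcal{B}_s\cap\mathcal{B}_t$, then by the interval property $v\in\mathcal{B}_u$ for every $u$ with $s\le u\le t$. Chaining the consecutive agreements then gives $\sigma_s(v)=\sigma_t(v)$. Hence defining $z_v=\sigma_t(v)$ for any bag $t$ containing $v$ is well defined, and the path is exactly the image of this $z$. This shows the map from assignments to paths is onto, and together with injectivity it is a one-to-one correspondence.

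\textbf{Cost identity and ordering.} Along the path of $z$, the total cost is $\sum_t c_t(\sigma_t)$. Each $c_t$ evaluates its terms at the entries of $z$, so the sum is $W(z)$. Consequently, emitting distinct paths in nondecreasing total cost emits assignments in nondecreasing $W$. Negative costs, if an enumerator cannot handle them, are dealt with by the layerwise shift of Remark~\ref{rem:negative_branch_costs}.

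\textbf{Specialization to $W_r$.} Take $\mathcal{G}=\mathcal{G}_Q$, $\theta_i(b)=\alpha_i b$ and $\theta_{\{i,j\}}(b,b')=\beta_{ij}bb'$. By Proposition~\ref{prop:quadratic_metric} and Remark~\ref{rem:precision_graph}, the resulting $W$ equals $W_r$. Applying the fixed tie-breaking rule then yields exactly the sequence~\eqref{eq:order}, and the argument concludes as in Corollary~\ref{cor:exact_trellis_queries}: the first $K$ emitted paths give the first $K$ noise-effect patterns and hence the first $K$ candidate queries.
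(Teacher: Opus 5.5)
Your proposal is correct and follows the paper's proof essentially step by step. Both use one state per assignment to $\mathcal{B}_t$, edges between consecutive layers exactly when the assignments agree on $\mathcal{B}_t\cap\mathcal{B}_{t+1}$, each unary and pairwise term charged to the first bag containing its variables, and the interval property to make the global assignment well defined. Your explicit chaining argument for nonadjacent bags simply spells out what the paper states in one line.
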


\begin{proof}
See Appendix~\ref{app:proof_thm_pathwidth}.
\end{proof}

Applying sum-product to the same layered representation, with each local cost $\lambda_t(a_t)$ assigned the factor $\exp\{-\lambda_t(a_t)\}$, computes the full-space partition function. Equivalently, the computation can be implemented by a log-sum-exp recursion in the log domain.

For a precision matrix $Q$ with half-bandwidth at most $\nu$, define $\mathcal{B}_t \triangleq \bigl\{\max\{1,t-\nu\},\ldots,t\bigr\}$, $t=1,\ldots,n$. These bags form a path decomposition of the interaction graph of $Q$ with width at most $\nu$. Indeed, every edge $\{i,j\}$ with $i<j$ satisfies $j-i\leq\nu$ and is therefore contained in $\mathcal{B}_j$. Moreover, each vertex $i$ belongs to the consecutive bags $\mathcal{B}_i,\ldots,\mathcal{B}_{\min\{n,i+\nu\}}$. For $t=1,\ldots,n-1$, let $\mathcal{S}_t\triangleq \mathcal{B}_t\cap\mathcal{B}_{t+1}$. Then $|\mathcal{S}_t|\leq\nu$, and, for $\nu\geq1$, the trellis state $u_{t+1}$ is the zero-padded representation of the binary assignment to $\mathcal{S}_t$, with components corresponding to nonpositive indices fixed to zero. For $\nu=0$, both the separator and the trellis state are empty. Hence, the banded trellis is a separator-state specialization of the general bag-assignment construction.

\begin{remark}[State and transition counts]
\label{rem:pathwidth_accounting}
The bound $2^{w+1}$ applies to the number of bag-assignment states in each layer. Between consecutive bags, a pair of assignments is compatible if the assignments agree on $\mathcal{B}_t\cap\mathcal{B}_{t+1}$. Each compatible pair corresponds uniquely to an assignment on $\mathcal{B}_t\cup\mathcal{B}_{t+1}$. Consequently, the number of transitions between these layers is $2^{|\mathcal{B}_t\cup\mathcal{B}_{t+1}|} \leq 2^{2(w+1)}$. Our implementation groups assignments according to their restrictions to the separator $\mathcal{B}_t\cap\mathcal{B}_{t+1}$, thereby avoiding exhaustive pairwise compatibility testing. This separator-based indexing does not reduce the worst-case number of compatible transitions. For the banded trellis, each nonterminal state has at most two outgoing branches. Backward suffix dynamic programming therefore evaluates at most $n2^{\nu+1}$ branches and stores at most $(n+1)2^\nu$ suffix costs.
\end{remark}

\begin{remark}[Output-sensitive enumeration complexity]
\label{rem:output_sensitive_cost}
For the banded best-first enumerator, let $B_K$ denote the number of extract-min operations performed before the first $K$ length-$n$ trellis paths are emitted, and let $H_K$ denote the maximum priority-queue size during this process. With a binary heap and constant-time branch extension and key evaluation, the priority-queue time required to emit these paths is $\mathcal{O} \bigl(B_K\log(1+H_K)\bigr)$, because each extracted nonterminal path prefix generates at most two insertions. The priority queue stores at most $H_K$ records. If path reconstruction stores one predecessor record for each generated path prefix, it requires an additional $\mathcal{O}(B_K)$ records. Neither bounded pathwidth nor fixed $\nu$ provides a nontrivial codebook-independent upper bound on the number of membership queries required to reach the first codebook hit. In the worst case, Algorithm~\ref{alg:lp_grand} tests all $2^n$ noise-effect patterns.
\end{remark}

Appendix~\ref{app:k_shortest_paths} describes the suffix-guided best-first enumerator, establishes that it emits complete paths in nondecreasing order of total cost, and gives output-sensitive bounds on its time and storage requirements.

The bounds for coefficient quantization and precision-matrix mismatch, as well as the results on the full-space partition function, auxiliary-model first-hit correctness estimates, and abandonment, are given in Appendices~\ref{sec:robustness} and~\ref{sec:softout}.

\section{Numerical Evaluation}
\label{sec:numerical_evaluation}
% ======================================================================

We evaluate: (i) agreement of the pseudo-Boolean metric, the ordering of all $2^n$ candidate words, and the partition function with exhaustive computation; (ii) exact enumeration on nonbanded interaction graphs of low pathwidth; (iii) agreement between LP-GRAND and direct codeword-wise ML decoding for small fixed codes; (iv) agreement among independently implemented exact enumerators using the same candidate metric; (v) BLER and algorithmic work over a random-code ensemble; (vi) sensitivity to the code rate, noise correlation coefficient, and membership-query budget; (vii) performance for selected fixed codes at the principal blocklength considered; and (viii) the effects of coefficient quantization and precision-matrix mismatch on the candidate ordering, together with the empirical calibration of the auxiliary-model first-hit
correctness estimates.

Pointwise BLER estimates are reported with $95\%$ Wilson score intervals \cite{wilson1927probable}. We report the algorithm-specific work measures separately: codebook-membership tests, priority-queue extract-min operations, generated partial paths, branch evaluations, local metric evaluations, rejected generator states, and peak frontier or list size. Because these counters represent different operations, we do not combine them into a single complexity measure. Wall-clock times are reported as our software-level measurements only and are not used as measures of hardware complexity.

\subsection{Experimental Protocol and Reproducibility}

For the channel model, we use stationary first-order Gauss--Markov noise, $N_1\sim\mathcal{N}(0,\sigma^2)$ and $N_i=\rho N_{i-1}+V_i$, $i=2,\ldots,n$, where $V_2,\ldots,V_n\stackrel{\mathrm{i.i.d.}}{\sim}\mathcal{N}\bigl(0,\sigma^2(1-\rho^2)\bigr)$ are independent of $N_1$, and $|\rho|<1$. This construction gives $\operatorname{Cov}(N_i,N_j)=\sigma^2\rho^{|i-j|}$. For a binary code of rate $R_{\mathrm c}\triangleq k/n$, unit-energy BPSK, and a specified nominal value of $(E_b/N_0)_{\mathrm{dB}}$, the marginal noise variance is set to
\begin{equation}
\sigma^2 = \left[ 2R_{\mathrm c} 10^{(E_b/N_0)_{\mathrm{dB}}/10} \right]^{-1}.
\label{eq:numerical_noise_variance}
\end{equation}
For $\rho=0$, this is the standard AWGN relation $\sigma^2=N_0/2$. For $\rho\neq0$, $(E_b/N_0)_{\mathrm{dB}}$ serves as a nominal SNR parameter defined through the marginal noise variance. This parameterization keeps $\operatorname{Var}(N_i)=\sigma^2$ fixed as $\rho$ varies, but it does not preserve the joint noise distribution. In particular, varying $\rho$ changes the innovation variance, power spectral density, finite-block covariance determinant, and entropy rate. The correlation sweep therefore compares channels at fixed marginal noise variance, rather than at fixed innovation variance or fixed spectral shape.

In the random-code ensemble experiment, we use $[n,k]=[64,52]$, $\rho=0.5$, and a membership-query budget of $q_{\max}=20000$. For each frame, we construct a systematic generator matrix of the form $G=[\,I_k\;\;P\,], P\in\mathbb{F}_2^{k\times(n-k)}$, where the entries of $P$ are drawn independently from $\operatorname{Bernoulli}(1/2)$, and select the transmitted message uniformly from $\{0,1\}^k$. Within each frame, all methods use the same code realization, transmitted message, and received vector. A query-based decoder that does not return a codeword within $q_{\max}$ membership tests declares abandonment, which is counted as a block error. The reported BLER is averaged over the random systematic code ensemble, transmitted messages, and channel realizations.
For the baseline denoted ExactMemoryless-GRAND, we replace $\Sigma$ by $\sigma^2 I_n$ and enumerate candidates in nondecreasing order of the resulting exact memoryless Gaussian metric
$W_{\mathrm{mem},r}(z)\triangleq \frac{2}{\sigma^2}\sum_{i=1}^{n}|r_i|z_i$. Thus, this baseline retains the coordinatewise reliability magnitudes but ignores the noise correlation. The results labeled basic ORBGRAND use the logistic weight $w_{\mathrm L}(\tilde z)=\sum_{i=1}^{n}i\tilde z_i$, with rank one assigned to the least reliable coordinate; no piecewise-linear approximation to the ordered reliability magnitudes is used.

Let $b$ divide $n$, let $J\triangleq n/b$, and partition $\{1,\ldots,n\}$ into the consecutive blocks $B_j \triangleq \{(j-1)b+1,\ldots,jb\}$, $j=1,\ldots,J$. We define the exact block-product metric by
\begin{equation}
E_{\mathrm{blk}}(c) \triangleq \frac{1}{2} \sum_{j=1}^{J}
\bigl(r_{B_j}-x(c)_{B_j}\bigr)^{\mathsf T}
\Sigma_{B_j,B_j}^{-1}
\bigl(r_{B_j}-x(c)_{B_j}\bigr),
\label{eq:block_product_metric}
\end{equation}
where $\Sigma_{B_j,B_j}$ is the principal submatrix of $\Sigma$ indexed by $B_j$. In the implementation, each block energy is evaluated using a Cholesky factor of $\Sigma_{B_j,B_j}$, without explicitly forming its inverse. Up to terms independent of $c$, \eqref{eq:block_product_metric} is the negative logarithm of the product of the exact Gaussian marginal likelihoods of the individual blocks. We refer to exact enumeration under this block-product approximation as ExactBlockProduct; here, ``exact'' qualifies the enumeration of $E_{\mathrm{blk}}$, not the matched full-block Gaussian likelihood.

For each block $B_j$, we evaluate all $2^b$ binary assignments and select a minimum-energy assignment using a fixed deterministic tie-breaking rule. This assignment serves as the blockwise baseline. Each remaining assignment defines one substitution relative to the baseline and is assigned its local excess energy. A global candidate selects either the baseline or one alternative assignment in each block, so its excess block-product metric is the sum of the corresponding local excess energies. Under complete enumeration, the resulting subset-sum enumerator outputs all candidates in nondecreasing $E_{\mathrm{blk}}$, with ties resolved deterministically.
The block-product metric is not obtained by simply retaining the corresponding principal blocks of the full precision matrix, since $\Sigma_{B_j,B_j}^{-1} \neq \bigl(\Sigma^{-1}\bigr)_{B_j,B_j}$ in general. Relative to the matched full-block likelihood, the block-product approximation removes dependence across block boundaries and may also alter the within-block precision coefficients. For the first-order Gauss--Markov model, the altered diagonal coefficients occur at the artificial block boundaries.

For comparison, we implemented the blockwise approximate-independence construction described for ORBGRAND-AI~\cite{duffy2023using}. Our implementation uses the same nonbaseline block assignments and local excess energies as ExactBlockProduct. We pool the substitutions from all blocks, sort them by increasing excess energy using a fixed deterministic tie-breaking rule, and assign each substitution its
position in this ordered list as its reliability rank. Conflict-free sets of substitutions are then enumerated in nondecreasing sum of their ranks. ExactBlockProduct and our ORBGRAND-AI implementation therefore use the same local alternatives but assign them exact excess-energy weights and rank weights, respectively.

Under the valid-membership-test budget used in the main experiments, a generator state containing more than one substitution from the same block is discarded before the codebook-membership test and does not count toward $q_{\max}$. It is nevertheless included in the priority-queue-removal and conflict-rejection counts. We report codebook-membership tests, priority-queue removals, conflict rejections, local block-energy evaluations, and peak priority-queue size separately. We also evaluate the alternative convention in which every removed generator state counts toward the budget. The reported ORBGRAND-AI results are obtained from our implementation of the published rank-based construction under these stated budget conventions, so we do not claim to reproduce a particular external software or hardware implementation.

\subsection{Exact Cost-Order Validation for Stored Coefficients}
\label{ssec:complete_order_validation}

Our analytical exactness results are stated in real arithmetic. We therefore validate the cost ordering induced by the coefficients stored by the implementation in IEEE~754 binary64 format, without introducing floating-point comparison tolerances. For each pair $n\in\{8,10,12\}$, $\nu\in\{0,1,2,3,4\}$, $\nu<n$, we generated five banded symmetric positive-definite matrices from banded Cholesky factors, followed by diagonal rescaling. For three of the five instances associated with each $(n,\nu)$ pair, the generation parameters and received-vector scale were chosen to produce smaller positive differences between candidate costs. The resulting test set contains $75$ instances.

For each instance, all $2^n$ configurations were evaluated. Each stored binary64 unary and pairwise coefficient was converted exactly to a dyadic rational. We then multiplied all coefficients by a common positive power of two that cleared their denominators. This produced an integer metric with the same strict comparisons and equal-cost classes as the exact real-valued metric defined by the stored binary64 coefficients. The integer costs were evaluated using arbitrary-precision integer arithmetic.

For each instance, we sorted all configurations by exact integer cost and compared these costs with the output sequences of three separately implemented enumerators: the suffix-guided best-first LP-GRAND enumerator, the state-list $K$-best Viterbi dynamic program \cite{seshadri1994list} with $K=2^n$, and an integer-key trellis enumerator. For each enumerator, we verified that every configuration appeared exactly once and that the exact integer costs were nondecreasing along the output sequence. Configurations with the same exact cost could appear in any order. Equal-cost classes were determined by exact integer equality, without a numerical tolerance.

\begin{table}[!b]
\centering
\caption{Exact cost-order validation for the stored binary64 coefficients. A failure denotes an instance in which an enumerator omitted or duplicated a configuration, or returned a configuration with a smaller exact integer cost after one with a larger cost. The minimum positive gap is the smallest difference between distinct exact integer costs after rescaling to the original metric units. The final column gives the maximum absolute difference between the trellis and exhaustive floating-point evaluations of $\log Z_r$.}
\label{tab:exactness}
\scriptsize
\setlength{\tabcolsep}{2.6pt}
\resizebox{\columnwidth}{!}{%
\begin{tabular}{ccccccc}
\toprule
$n$
& Cases
& \shortstack{Best-first\\failures}
& \shortstack{State-list\\failures}
& \shortstack{Integer-key\\failures}
& \shortstack{Minimum positive\\gap}
& \shortstack{Maximum\\$|\Delta\log Z_r|$}
\\
\midrule
$8$
& $25$
& $0$
& $0$
& $0$
& $5.66\!\times\!10^{-12}$
& $8.88\!\times\!10^{-16}$
\\
$10$
& $25$
& $0$
& $0$
& $0$
& $3.33\!\times\!10^{-12}$
& $1.78\!\times\!10^{-15}$
\\
$12$
& $25$
& $0$
& $0$
& $0$
& $1.23\!\times\!10^{-13}$
& $1.78\!\times\!10^{-15}$
\\
\bottomrule
\end{tabular}}
\end{table}

For every test instance, all three enumerators returned every configuration exactly once in nondecreasing exact integer cost. Configurations with equal costs could occur in different orders. Across all tested configurations, the largest absolute difference between the direct quadratic-form value $E_r(z)-E_r(0)$ and the pseudo-Boolean value $W_r(z)$ was $5.68\times10^{-14}$. The largest absolute difference between $W_r(z)$ and the sum of the corresponding trellis branch costs was $2.84\times10^{-14}$. These differences result from evaluating algebraically equivalent expressions in binary64 arithmetic; the ordering comparisons were performed using exact integer costs. The tests therefore verify the cost order defined by the stored binary64 coefficients. They do not measure the rounding error between the stored coefficients and the real-arithmetic coefficients defined by~\eqref{eq:alpha} and~\eqref{eq:beta}.

\subsection{Nonbanded Low-Pathwidth Validation}
\label{ssec:nonbanded_validation}

For the small-instance validation, we generated five sparse symmetric positive-definite precision matrices for each combination of $n\in\{8,10,12\}$ and interaction-graph topology: path, binary tree, or ladder. The weights were generated independently for each matrix, and an independent random coordinate permutation was then applied. For each permuted instance, we considered the coordinate, reverse Cuthill--McKee, minimum-degree, and minimum-fill orderings and constructed the corresponding frontier bags using \eqref{eq:frontier_bag}. This procedure produced $180$ graph--ordering pairs. For each pair, we compared the complete cost order over all $2^n$ assignments and the log-partition function with exhaustive computation. Every assignment appeared exactly once in nondecreasing exact stored-coefficient cost; assignments having the same cost could appear in any order. The maximum absolute difference between the graphical and exhaustive evaluations of the log-partition function was $8.88\times10^{-16}$.

For the larger nonbanded instances, we generated randomly permuted path, binary-tree, and ladder precision graphs for $n\in\{64,128,256\}$. We evaluated the same four vertex orderings using the induced frontier width in~\eqref{eq:induced_frontier_width} and recorded the half-bandwidth under the coordinate ordering separately. For each graph--ordering pair with induced width at most six, we enumerated the first $K=1024$ assignments. Under the RCM ordering, the median induced widths were one for the path instances and two for the ladder instances. Across the $54$ feasible path and ladder graph--ordering pairs, the compared enumerators returned the same sequence of exact stored-coefficient cost levels and the same set of assignments at each level. Over the first $1024$ assignments, the maximum absolute difference between direct evaluation of the pseudo-Boolean energy and the corresponding graphical path cost was $8.88\times10^{-16}$.

For the permuted binary-tree instances, the induced widths obtained from the tested orderings ranged from $16$ to $65$. None of these instances satisfied the width-six limit, and they were therefore not enumerated. The induced frontier widths under the coordinate and RCM orderings are shown in the left panel of Fig.~\ref{fig:pathwidth_timing}. These tests evaluate the graphical enumeration procedure on synthetic low-pathwidth interaction graphs; they do not estimate decoding performance averaged over a channel or code ensemble.

To test end-to-end decoding with a precision matrix that is nonbanded under the coordinate ordering, we fixed a random linear $[64,52]$ code and applied a fixed random coordinate permutation to the covariance matrix of stationary first-order Gauss--Markov noise with $\rho=0.5$. The resulting precision matrix has coordinate-order half-bandwidth $56$, whereas the RCM ordering has induced frontier width one. We evaluated all methods on the same $50$ frames at the nominal operating point $(E_b/N_0)_{\mathrm{dB}}=2$. LP-GRAND using the RCM decomposition produced no observed block errors and required an average of $102.8$ codebook-membership tests. Table~\ref{tab:nonbanded_bler} reports the results for all methods. This experiment verifies end-to-end decoding for this permuted-chain channel and fixed code; it does not establish average performance over a broader class of sparse channels or codes.

\begin{table}[!b]
\centering
\caption{Fixed-code results for the permuted first-order Gauss--Markov precision chain with $n=64$, $k=52$, $\rho=0.5$, nominal $(E_b/N_0)_{\mathrm{dB}}=2$, and $50$ common frames. The coordinate-order half-bandwidth is $56$, and the RCM-induced frontier width is one. The intervals are $95\%$ Wilson score intervals.}
\label{tab:nonbanded_bler}
\scriptsize
\setlength{\tabcolsep}{3.0pt}
\resizebox{\columnwidth}{!}{%
\begin{tabular}{lrrr}
\toprule
Method
& Errors/frames
& Observed BLER $[95\%\ \mathrm{CI}]$
& \shortstack{Average valid\\membership tests}
\\
\midrule
LP-GRAND, RCM decomposition
& $0/50$
& $0.00\ [0.000,0.071]$
& $102.8$
\\
ExactBlockProduct, $b=8$
& $18/50$
& $0.36\ [0.241,0.499]$
& $987.7$
\\
ORBGRAND-AI (our implementation), $b=8$
& $17/50$
& $0.34\ [0.224,0.478]$
& $1055.6$
\\
ExactMemoryless-GRAND
& $25/50$
& $0.50\ [0.366,0.634]$
& $1690.2$
\\
\bottomrule
\end{tabular}}
\end{table}

\subsection{Direct ML Verification and Same-Metric Exact Enumeration}
\label{ssec:direct_ml_and_exact_generators}

We first verify LP-GRAND against exhaustive codeword ML decoding for two fixed binary $[20,12]$ codes: a systematic random linear code (RLC) and a systematic CRC-8 code. The CRC generator polynomial is
\begin{equation}
g(x)=x^8+x^2+x+1.
\end{equation}
The CRC encoder processes the message bits MSB first, starts from the all-zero register state, and applies no final XOR. For each received vector, exhaustive decoding evaluates the matched Gaussian energy of all $2^{12}=4096$ codewords.

Each code is evaluated over $5000$ frames at nominal $(E_b/N_0)_{\mathrm{dB}}=2$ and $\rho=0.5$. Let $\widehat E_{r,\mathrm{bin64}}(c)$ denote the value obtained by direct IEEE~754 binary64 evaluation of $\frac{1}{2}\bigl(r-x(c)\bigr)^{\mathsf T}Q\bigl(r-x(c)\bigr)$. For each frame, define the binary64 reference set
\begin{equation}
\mathcal M_{\mathrm{bin64}}(r) \triangleq \operatorname*{arg\,min}_{c\in\calC} \widehat E_{r,\mathrm{bin64}}(c).
\label{eq:computed_ml_set}
\end{equation}
A decoder agrees with the exhaustive reference when its output belongs to $\mathcal M_{\mathrm{bin64}}(r)$. Equality between stored binary64 energy values is tested exactly, without a numerical tolerance.

Neither code produced a frame with more than one minimizer under the stored binary64 energies. LP-GRAND returned the unique element of $\mathcal M_{\mathrm{bin64}}(r)$ in every tested frame. Its decisions and observed BLER therefore agreed with exhaustive codeword decoding on all tested frames. This is a finite-precision comparison and does not quantify the difference between the stored binary64 metric and its real-arithmetic counterpart.

\begin{figure*}[t]
\centering
\includegraphics[width=0.8\linewidth]
{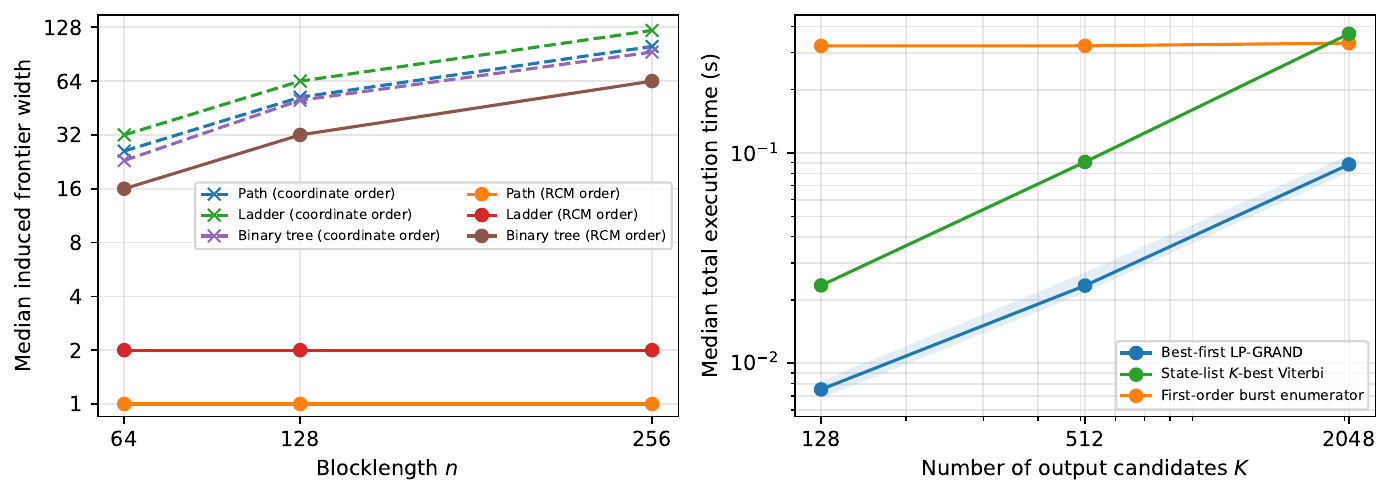}
\caption{Left: induced frontier width under the coordinate and reverse Cuthill--McKee orderings for randomly permuted path, binary-tree, and ladder interaction graphs with $n\in\{64,128,256\}$; the vertical axis uses a base-two logarithmic scale. Right: median total execution time, with interquartile ranges, of the three same-metric candidate generators for the two-tap finite-block ISI model as a function of $K$. The exact cost sequences agree for the first $K$ outputs; any differences in candidate order are confined to equal-cost classes. The timing results correspond to our reference Python implementations.}
\label{fig:pathwidth_timing}
\end{figure*}

\begin{table}[!b]
\centering
\caption{Exhaustive codeword-ML verification for two fixed $[20,12]$ codes over $5000$ frames per code at nominal $(E_b/N_0)_{\mathrm{dB}}=2$ and $\rho=0.5$. The BLER intervals are $95\%$ Wilson score intervals. Reference-set agreement is the fraction of frames for which the decoder output belongs to $\mathcal M_{\mathrm{bin64}}(r)$ in~\eqref{eq:computed_ml_set}.}
\label{tab:direct_ml}
\scriptsize
\setlength{\tabcolsep}{2.6pt}
\resizebox{\columnwidth}{!}{%
\begin{tabular}{llccc}
\toprule
Code
& Method
& Observed BLER $[95\%\ \mathrm{CI}]$
& \shortstack{Average valid\\membership tests}
& \shortstack{Reference-set\\agreement}
\\
\midrule
RLC
& Exhaustive ML decoding
& $0.0198\ [0.0163,0.0240]$
& --
& $1.0000$
\\
RLC
& LP-GRAND
& $0.0198\ [0.0163,0.0240]$
& $8.29$
& $1.0000$
\\
RLC
& ORBGRAND-AI (our implementation), $b=4$
& $0.0360\ [0.0312,0.0415]$
& $12.30$
& $0.9736$
\\
RLC
& Basic ORBGRAND
& $0.1248\ [0.1159,0.1342]$
& $37.37$
& $0.8770$
\\
\midrule
CRC-8
& Exhaustive codeword ML
& $0.0206\ [0.0170,0.0249]$
& --
& $1.0000$
\\
CRC-8
& LP-GRAND
& $0.0206\ [0.0170,0.0249]$
& $8.72$
& $1.0000$
\\
CRC-8
& ORBGRAND-AI (our implementation), $b=4$
& $0.0404\ [0.0353,0.0462]$
& $12.93$
& $0.9728$
\\
CRC-8
& Basic ORBGRAND
& $0.1306\ [0.1215,0.1402]$
& $37.34$
& $0.8730$
\\
\bottomrule
\end{tabular}}
\end{table}

We next compare three separately implemented candidate generators under a common finite-block Gaussian likelihood metric. Consider the causal two-tap ISI model $\bar r=H_{\mathrm{isi}}x(c)+w$, where $w\sim\mathcal N\bigl(\mathbf 0_n,\sigma_w^2I_n\bigr)$, $\sigma_w^2=0.8$, $n=64$, and the input symbol immediately preceding the block is known to be zero. The channel matrix is defined by $(H_{\mathrm{isi}})_{i,i}=1$, $i=1,\ldots,n$, and $(H_{\mathrm{isi}})_{i,i-1}=0.5$, $i=2,\ldots,n$, with all other entries zero. Since $H_{\mathrm{isi}}$ is unit lower triangular, it is nonsingular. Define $r\triangleq H_{\mathrm{isi}}^{-1}\bar r$. Then $r=x(c)+N$, $N\sim\mathcal N\!\left( \mathbf 0_n, \sigma_w^2H_{\mathrm{isi}}^{-1}H_{\mathrm{isi}}^{-\mathsf T} \right)$, and the corresponding precision matrix is
\begin{equation}
Q = \frac{1}{\sigma_w^2} H_{\mathrm{isi}}^{\mathsf T}H_{\mathrm{isi}}.
\label{eq:two_tap_precision}
\end{equation}
Consequently, for every candidate $c$,
\begin{equation}
\frac{1}{2} \bigl(r-x(c)\bigr)^{\mathsf T}Q\bigl(r-x(c)\bigr)
= \frac{1}{2\sigma_w^2} \left\lVert\bar r-H_{\mathrm{isi}}x(c)\right\rVert_2^2.
\label{eq:isi_metric_equivalence}
\end{equation}
The matrix $Q$ is tridiagonal and has half-bandwidth $\nu=1$.
The following generators use the metric in \eqref{eq:isi_metric_equivalence}:
\begin{enumerate}
\item the suffix-guided best-first LP-GRAND enumerator;
\item a state-list $K$-best Viterbi dynamic program that retains the $K$ lowest-cost partial paths at each state; and
\item a separately implemented first-order burst enumerator used as a validation reference.
\end{enumerate}

The burst enumerator first selects a minimum-energy binary sequence using a fixed deterministic tie-breaking rule. For any other sequence, the coordinates that differ from this baseline decompose uniquely into maximal contiguous runs. Because the metric contains only unary and nearest-neighbor pairwise terms, the excess cost of a set of separated runs equals the sum of the individual run excess costs, where each run cost includes the pairwise terms at its boundaries with unchanged coordinates. Our implementation evaluates the excess cost of each of the $n(n+1)/2$ possible contiguous runs and enumerates compatible sets of runs in nondecreasing total excess cost. Sets containing overlapping or adjacent runs are rejected, since such runs do not form distinct maximal components. This construction uses the error-burst decomposition principle of first-order SGRAND-ISI~\cite{li2026grand_isi}. Our burst enumerator was implemented separately and is used only as a same-metric validation reference; it is not a reproduction of the published SGRAND-ISI generator or its data structures.

For ten independently generated received vectors and $K\in\{128,512,2048\}$, we rescored the first $K$ outputs of each generator using the exact scaled-integer metric described in Section~\ref{ssec:complete_order_validation}. For every received vector and every value of $K$, the three generators produced identical nondecreasing sequences of exact costs. Any differences in candidate order occurred only among candidates with the same exact cost. Since all three generators use the same metric, the work counts and execution times in Table~\ref{tab:exact_enum} compare the stated implementations rather than different likelihood approximations.

\begin{table}[!b]
\centering
\caption{Same-metric enumeration for the two-tap finite-block ISI model with $n=64$ and $\nu=1$. Entries are medians over ten received vectors. The reported times are measurements of the stated reference Python implementations.}
\label{tab:exact_enum}
\scriptsize
\setlength{\tabcolsep}{2.5pt}
\resizebox{\columnwidth}{!}{%
\begin{tabular}{llccr}
\toprule
Generator
& $K$
& \shortstack{Preprocessing time (s)}
& \shortstack{Total time (s)}
& Recorded operations
\\
\midrule
Best-first LP-GRAND
& $128$
& $2.43\!\times\!10^{-4}$
& $7.51\!\times\!10^{-3}$
& $2526$ heap removals
\\
State-list $K$-best
& $128$
& $2.28\!\times\!10^{-2}$
& $2.35\!\times\!10^{-2}$
& $14847$ retained paths
\\
First-order burst reference
& $128$
& $3.22\!\times\!10^{-1}$
& $3.23\!\times\!10^{-1}$
& $2080$ run costs; $25$ rejected conflicts
\\
\midrule
Best-first LP-GRAND
& $512$
& $2.34\!\times\!10^{-4}$
& $2.34\!\times\!10^{-2}$
& $8201$ heap removals
\\
State-list $K$-best
& $512$
& $8.80\!\times\!10^{-2}$
& $9.07\!\times\!10^{-2}$
& $57343$ retained paths
\\
First-order burst reference
& $512$
& $3.20\!\times\!10^{-1}$
& $3.24\!\times\!10^{-1}$
& $2080$ run costs; $179$ rejected conflicts
\\
\midrule
Best-first LP-GRAND
& $2048$
& $2.49\!\times\!10^{-4}$
& $8.82\!\times\!10^{-2}$
& $29489$ heap removals
\\
State-list $K$-best
& $2048$
& $3.59\!\times\!10^{-1}$
& $3.69\!\times\!10^{-1}$
& $221183$ retained paths
\\
First-order burst reference
& $2048$
& $3.16\!\times\!10^{-1}$
& $3.33\!\times\!10^{-1}$
& $2080$ run costs; $1221$ rejected conflicts
\\
\bottomrule
\end{tabular}}
\end{table}

Since the last column in Table~\ref{tab:exact_enum} counts a different operation for each generator, its values are not directly comparable. The runtimes were obtained with our implementations and do not represent those of the published SGRAND-ISI implementation. The right panel of Fig.~\ref{fig:pathwidth_timing} shows the median total execution time, with interquartile ranges, of the three same-metric candidate generators as a function of $K$.

\subsection{Dependence on Code Rate, Correlation, and Query Budget}
\label{ssec:rate_correlation_abandonment}

Figure~\ref{fig:rate_sweep} reports results for $k\in\{40,44,48,52,56\}$ with $n=64$, nominal $(E_b/N_0)_{\mathrm{dB}}=2.5$, $\rho=0.5$, and $q_{\max}=20{,}000$. Each rate is evaluated over $5000$ random-code ensemble frames. All methods use the same code, transmitted message, and received vector within a frame. Separate frame sets are generated for the different rates.

LP-GRAND had the lowest empirical BLER and the smallest sample mean number of valid codebook-membership tests at every tested rate. At $R_{\mathrm c}=48/64=0.75$, its empirical BLER was $0.0028$, and the sample mean number of valid membership tests was $132.3$. The corresponding values were $0.0062$ and $227.5$ for ExactBlockProduct with $b=8$, and $0.0126$ and $350.6$ for our ORBGRAND-AI implementation with $b=8$.

As specified in~\eqref{eq:numerical_noise_variance}, fixing the nominal $E_b/N_0$ while varying $R_{\mathrm c}$ also changes the marginal noise variance. The rate sweep therefore changes both code redundancy and marginal noise variance; it is not a comparison at fixed symbol SNR.

\begin{figure*}[t]
\centering
\includegraphics[width=0.75\linewidth]
{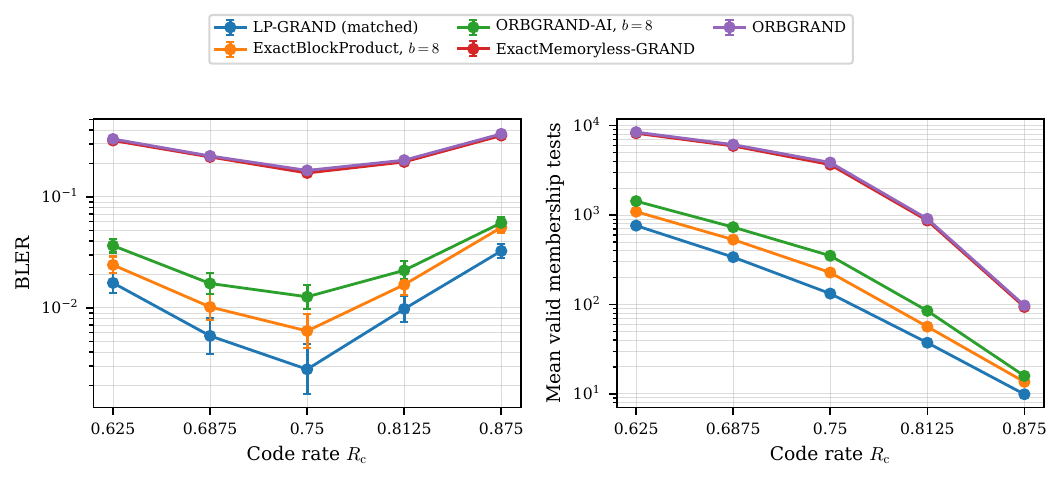}
\vspace{-4pt}
\caption{Random-linear-code ensemble results as a function of code rate with $n=64$, nominal $(E_b/N_0)_{\mathrm{dB}}=2.5$, $\rho=0.5$, $q_{\max}=20{,}000$, and $5000$ frames per rate. Left: empirical BLER. Right: sample mean number of valid codebook-membership tests. All methods use the same frames at each rate; separate frame sets are used for different rates.}
\label{fig:rate_sweep}
\end{figure*}

The left panel of Fig.~\ref{fig:rho_qmax} shows the empirical BLER as a function of the first-order Gauss--Markov correlation coefficient. Each value of $\rho$ is evaluated over $5000$ random-code ensemble frames at nominal $(E_b/N_0)_{\mathrm{dB}}=2$. At $\rho=0$, $\Sigma=\sigma^2 I_n$ and $Q=\sigma^{-2}I_n$. LP-GRAND, ExactMemoryless-GRAND, and ExactBlockProduct with $b=8$ therefore use the same candidate metric. Each had an empirical BLER of $0.3252$.

At $\rho=0.5$, the empirical BLERs of LP-GRAND, ExactMemoryless-GRAND, and ExactBlockProduct with $b=8$ were $0.0280$, $0.3396$, and $0.0434$, respectively. Our ORBGRAND-AI implementation with $b=8$ had an empirical BLER of $0.0542$. The matched Gaussian metric contains nearest-neighbor interaction terms when $\rho\neq0$. ExactMemoryless-GRAND omits these terms, whereas ExactBlockProduct retains the within-block dependence but omits dependence across block boundaries.

The right panel of Fig.~\ref{fig:rho_qmax} shows the empirical BLER as a function of the valid membership-test budget. These values are obtained from the verified $10^4$-frame paired experiment at nominal $(E_b/N_0)_{\mathrm{dB}}=2$ and $\rho=0.5$. For a budget $q_{\max}$, a frame is declared abandoned when its recorded first-hit query count exceeds $q_{\max}$; no additional decoding runs are required. For LP-GRAND, increasing $q_{\max}$ from $100$ to $5000$ reduced the empirical BLER from $0.1336$ to $0.0281$ and increased the codeword-return rate from $0.8718$ to $0.9977$.

At $q_{\max}=5000$, LP-GRAND abandoned $23$ frames and made $281$ block errors. In the remaining $258$ error events, the decoder returned a codeword, but the returned ML codeword differed from the transmitted codeword. Thus, these events are decoding errors rather than ordering failures.
At $q_{\max}=20{,}000$, LP-GRAND abandoned no frames and made $274$ block ML decision errors.

\begin{figure*}[t]
\centering
\begin{minipage}{0.4\linewidth}
\centering
\includegraphics[width=\linewidth]{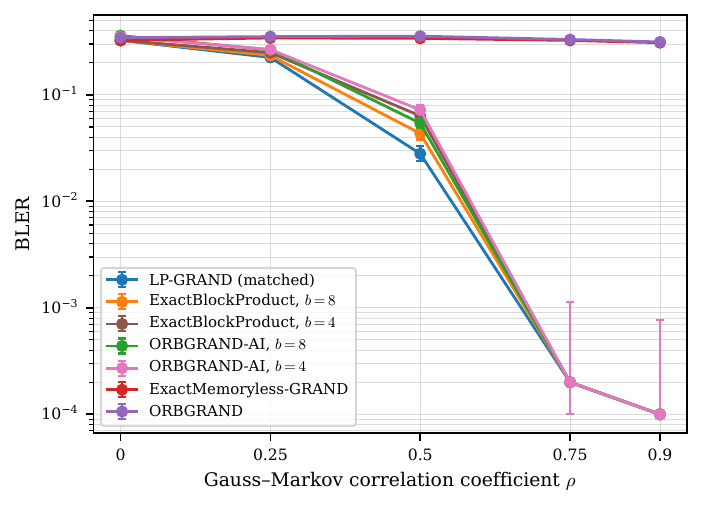}
\end{minipage}
\begin{minipage}{0.4\linewidth}
\centering
\includegraphics[width=\linewidth]{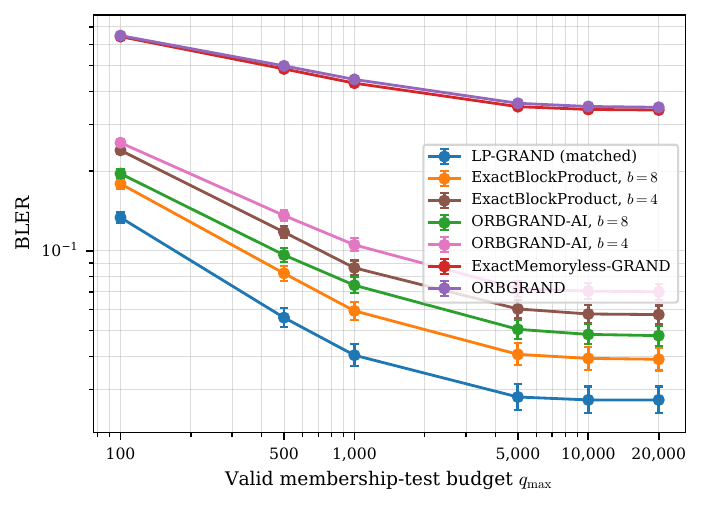}
\end{minipage}
\caption{Left: empirical BLER as a function of the first-order Gauss--Markov correlation coefficient at nominal $(E_b/N_0)_{\mathrm{dB}}=2$, using $5000$ random-code ensemble frames for each value of $\rho$.  Right: observed BLER as a function of the valid membership-test budget on a logarithmic horizontal axis, computed by truncating the first-hit ranks from the verified paired $10^4$-frame experiment at nominal $(E_b/N_0)_{\mathrm{dB}}=2$ and $\rho=0.5$.
}
\label{fig:rho_qmax}
\end{figure*}

\subsection{Fixed-Code Results at Blocklength 64}
\label{ssec:fixed_code_results}

\begin{table}[!b]
\centering
\caption{Fixed-code results for $n=64$, $k=52$, $\rho=0.5$, nominal
$(E_b/N_0)_{\mathrm{dB}}=2$, and $q_{\max}=20{,}000$. Each codebook is
evaluated over $1000$ common frames. Each entry gives the empirical BLER,
its $95\%$ Wilson score interval, and the error count in parentheses.}
\vspace{-4pt}
\label{tab:fixed64}
\scriptsize
\setlength{\tabcolsep}{2.4pt}
\renewcommand{\arraystretch}{1.15}
\resizebox{\columnwidth}{!}{%
\begin{tabular}{lccc}
\toprule
Code
& LP-GRAND
& ExactBlockProduct, $b=8$
& ORBGRAND-AI, $b=8$
\\
\midrule
RLC-0
& $0.022\ [0.015,0.033]\ (22)$
& $0.040\ [0.030,0.054]\ (40)$
& $0.046\ [0.035,0.061]\ (46)$
\\
RLC-1
& $0.028\ [0.019,0.040]\ (28)$
& $0.036\ [0.026,0.049]\ (36)$
& $0.046\ [0.035,0.061]\ (46)$
\\
RLC-2
& $0.027\ [0.019,0.039]\ (27)$
& $0.041\ [0.030,0.055]\ (41)$
& $0.058\ [0.045,0.074]\ (58)$
\\
RLC-3
& $0.016\ [0.010,0.026]\ (16)$
& $0.029\ [0.020,0.041]\ (29)$
& $0.040\ [0.030,0.054]\ (40)$
\\
RLC-4
& $0.024\ [0.016,0.035]\ (24)$
& $0.041\ [0.030,0.055]\ (41)$
& $0.059\ [0.046,0.075]\ (59)$
\\
CRC-12
& $0.026\ [0.018,0.038]\ (26)$
& $0.039\ [0.029,0.053]\ (39)$
& $0.054\ [0.042,0.070]\ (54)$
\\
\bottomrule
\end{tabular}%
}
\end{table}

We also evaluated five independently drawn systematic $[64,52]$ random linear codes (RLCs) and one systematic $[64,52]$ CRC-12 code. Each codebook was held fixed over $1000$ frames, and all methods used the same transmitted message and received vector within each frame. The CRC encoder uses the generator polynomial $g(x)=x^{12}+x^{11}+x^3+x^2+x+1$, MSB-first polynomial division, an all-zero initial register, and no final XOR. The operating point was nominal $(E_b/N_0)_{\mathrm{dB}}=2$, with $\rho=0.5$ and $q_{\max}=20{,}000$.

Table~\ref{tab:fixed64} gives the results for the six codebooks. For each codebook, LP-GRAND had the lowest empirical BLER, followed by ExactBlockProduct with $b=8$ and our ORBGRAND-AI implementation with $b=8$. Pooling the five RLC experiments gives error counts of $117/5000$, $187/5000$, and $249/5000$, respectively. The corresponding sample means of the valid codebook-membership-test counts are $102.4$, $153.7$, and $218.6$. These totals are based on the five RLCs tested here. LP-GRAND's agreement with exhaustive ML decoding for the two fixed $[20,12]$ codes is reported earlier in Section~\ref{ssec:direct_ml_and_exact_generators}.

\vspace{10pt}
Additional random-code-ensemble, ordering-robustness, and calibration results are provided in Appendix~\ref{sec:supp_additional_numerical}.

\section{Conclusion}
\label{sec:conclusion}

For BPSK transmission over Gaussian noise with a specified symmetric positive-definite precision matrix, LP-GRAND expresses the candidate-dependent matched negative log-likelihood as an observation-dependent quadratic pseudo-Boolean energy. Using a validated path decomposition, suffix dynamic programming, and best-first complete-path enumeration, it enumerates noise-effect patterns in nondecreasing energy. A path decomposition of width $w$ yields at most $2^{w+1}$ bag assignments per layer and therefore bounds the size of the layered graphical representation, but not the number of membership queries required to obtain the first codebook hit. Under complete enumeration, deterministic tie resolution, and no abandonment, the first codebook hit induces an ML codeword for any nonempty binary codebook with equiprobable codewords. If the specified precision matrix is mismatched or sparsified, the enumeration remains exact for the resulting modified metric, but a first codebook hit need not be ML under the true channel model. A finite query budget may additionally cause abandonment before any codeword is found. In complete-order validation using the stored coefficients, the enumerated order agreed with exhaustive sorting in all tested cases. For two fixed $[20,12]$ codes, LP-GRAND agreed with exhaustive matched-metric minimization over the codebook in all $10{,}000$ tested frames. For each of six fixed $[64,52]$ codebooks at nominal $E_b/N_0=2$ dB, its empirical BLER was lower than that of each of the two tested block-based approximations.

% ----------------------------
%       references
% ----------------------------
% \bibliographystyle{IEEEtran}
% \bibliography{references}

\putbib
\end{bibunit}

\clearpage
% ---------------------------
%     Appendix Sections
% --------------------------
\appendices

% \AppendixOnlyTOC

\begin{bibunit}

\renewcommand{\thesubsection}{\thesection.\arabic{subsection}}
\renewcommand{\thesubsectiondis}{\thesection.\arabic{subsection}.}

\section{Additional Gaussian Noise-Effect Metric Results}
\label{ssec:gm_supplement_start}

\subsection{Stationary First-Order Gauss--Markov Noise}
\label{ssec:gauss_markov}

Consider the stationary first-order Gauss--Markov covariance matrix
\begin{equation}
\Sigma_{ij} = \sigma^2\rho^{|i-j|},
\qquad 1\leq i,j\leq n,
\label{eq:gm_covariance}
\end{equation}
where $\sigma^2>0$ is the marginal noise variance and $|\rho|<1$. For $n\geq2$,  the associated precision matrix is
\begin{equation}
Q = \frac{1}{\sigma^2(1-\rho^2)}
\begin{bmatrix}
1 & -\rho & 0 & \cdots & 0\\
-\rho & 1+\rho^2 & -\rho & \ddots & \vdots\\
0 & -\rho & 1+\rho^2 & \ddots & 0\\
\vdots & \ddots & \ddots & \ddots & -\rho\\
0 & \cdots & 0 & -\rho & 1
\end{bmatrix}.
\label{eq:gm_precision}
\end{equation}
Hence, $Q$ is tridiagonal. If $\rho\neq0$, then $\mathcal{G}_Q$ is the path graph with edge set $\mathcal{E}_Q = \bigl\{\{i,i+1\}\mid i=1,\ldots,n-1\bigr\}$.  Accordingly, \eqref{eq:quadratic_metric} reduces to the nearest-neighbor form
\begin{equation}
W_r(z) = \sum_{i=1}^n \alpha_i z_i + \sum_{i=1}^{n-1}\beta_i z_i z_{i+1},
\label{eq:chain_metric}
\end{equation}
with $\alpha_i$ as in~\eqref{eq:alpha} and
\begin{equation}
\begin{aligned}
\beta_i &\triangleq \beta_{i,i+1} = 4s_i s_{i+1}Q_{i,i+1}\\
&= -\frac{4\rho}{\sigma^2(1-\rho^2)} s_i s_{i+1}, \qquad i=1,\ldots,n-1.
\end{aligned}
\label{eq:gm_beta}
\end{equation}

\begin{remark}[Memoryless Gaussian special case]
\label{rem:memoryless_case}
When $\rho=0$, \eqref{eq:gm_covariance} reduces to $\Sigma=\sigma^2I_n$, corresponding to independent and identically distributed Gaussian noise, and $Q=\sigma^{-2}I_n$. Therefore, $\beta_i=0$ for all $i$, and $\alpha_i = 2s_i(Qa)_i+2Q_{ii} = 2|r_i|/ \sigma^2$, where the final equality follows from $a=r-s$ and $s_i r_i=|r_i|$. Accordingly,
\begin{equation}
W_r(z) = \frac{2}{\sigma^2} \sum_{i=1}^n |r_i|z_i,
\label{eq:memoryless_metric}
\end{equation}
which is coordinate-additive in $z$. SGRAND orders candidates according to this exact reliability-magnitude-weighted metric, whereas ORBGRAND replaces the reliability magnitudes with an integer-valued rank-domain surrogate~\cite{solomon2020soft,duffy2022ordered}.
\end{remark}

\subsection{Exactness Under Precision-Matrix Modification}
\label{ssec:exact_contract}

The candidate ordering and partition-function computation are exact with respect to the symmetric positive-definite precision matrix used to define the quadratic metric. When the original matrix $Q$ is used without modification, every nonzero off-diagonal entry of $Q$ is retained in the interaction graph; no thresholding is applied to determine its off-diagonal sparsity pattern. All pseudo-Boolean coefficients and local costs in the graphical representation, and hence the partition function, are derived from the same matrix $Q$. For a banded construction of half-bandwidth $\nu$, exactness with respect to $Q$ additionally requires $Q_{ij}=0$ for all $i,j$ satisfying $|i-j|>\nu$. If this condition does not hold, the banded representation omits out-of-band interaction terms and therefore defines a different quadratic metric.

More generally, modifications of $Q$ such as sparsification or diagonal
loading change the reference metric underlying the exactness guarantee. When $Q$ is deliberately sparsified, let $\mathcal{S}$ denote the specified sparsification operator and let $\lambda\geq0$ denote a diagonal-loading parameter. Define
\begin{equation}
\widetilde Q \triangleq \mathcal{S}(Q)+\lambda I_n,
\label{eq:sparsified_precision}
\end{equation}
and assume that $\widetilde Q=\widetilde Q^{\mathsf T}\succ0$. All pseudo-Boolean coefficients and local costs of the graphical representation are then recomputed from $\widetilde Q$, and the partition function is evaluated using the resulting energy. The candidate enumeration and partition-function computation are therefore exact for the metric induced by $\widetilde Q$, but do not generally reproduce their counterparts for the metric induced by $Q$. We quantify the modification by the relative spectral-norm perturbation $\delta_Q \triangleq (\lVert Q-\widetilde Q\rVert_2) / \lVert Q\rVert_2$.

Let $\widehat Q$ be a possibly nonsymmetric numerical approximation to the precision matrix $Q$, and define its symmetric part as $\widehat Q_{\mathrm{sym}} \triangleq (\widehat Q+\widehat Q^{\mathsf T})/2$. For every $v\in\mathbb{R}^n$, $v^{\mathsf T}\widehat Qv = v^{\mathsf T}\widehat Q_{\mathrm{sym}}v$,  because the skew-symmetric part of $\widehat Q$ contributes zero to the quadratic form. Thus, symmetrization does not change the quadratic metric associated with $\widehat Q$. Interpreting $\widehat Q_{\mathrm{sym}}$ as a valid Gaussian precision matrix, and applying constructions that require such a matrix, additionally requires $\widehat Q_{\mathrm{sym}}\succ0$. By contrast, sparsification and diagonal loading generally change the quadratic metric. All exactness statements in our work are formulated in real arithmetic; finite-precision implementation and validation are considered separately.

\begin{proposition}[Banded Cholesky whitening representation]
\label{prop:whitening_relation}

Let $Q=Q^{\mathsf T}\succ0$, and let $U$ be its upper-triangular Cholesky factor with positive diagonal entries, such that
$Q=U^{\mathsf T}U$. Then, for every $x\in\mathbb{R}^n$,
\begin{equation}
\frac{1}{2}(r-x)^{\mathsf T}Q(r-x)
= \frac{1}{2}\lVert U(r-x)\rVert_2^2
= \frac{1}{2}\lVert Ur-Ux\rVert_2^2.
\label{eq:whitened_metric}
\end{equation}
If $N\sim\mathcal{N}(0,Q^{-1})$ and $V\triangleq UN$, then
$V\sim\mathcal{N}(0,I_n)$. Therefore, defining
$\overline R\triangleq UR$ gives
\begin{equation}
\overline R=Ux(C)+V.
\label{eq:whitened_channel}
\end{equation}
If $Q$ has half-bandwidth at most $\nu$ under the coordinate ordering $1,\ldots,n$, then its Cholesky factor $U$ has upper bandwidth at most $\nu$, that is, $U_{ij}=0$ whenever $j-i>\nu$. Consequently,
\begin{equation}
(Ux)_i = \sum_{j=i}^{\min\{i+\nu,n\}}U_{ij}x_j,
\qquad i=1,\ldots,n.
\label{eq:banded_whitening_action}
\end{equation}
Thus, each output coordinate of the whitening transform depends on at most $\nu+1$ consecutive input coordinates. Bandedness alone does not imply that $U$ is Toeplitz or shift invariant.
\end{proposition}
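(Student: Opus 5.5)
The statement has three parts, and I will prove them in the order they appear in Proposition~\ref{prop:whitening_relation}.

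\emph{Metric identity.} Since $U$ is a real Cholesky factor, $Q=U^{\mathsf T}U$. For any $v\in\mathbb{R}^n$ this gives $v^{\mathsf T}Qv=(Uv)^{\mathsf T}(Uv)=\lVert Uv\rVert_2^2$. Taking $v=r-x$ and using linearity, $U(r-x)=Ur-Ux$, which yields~\eqref{eq:whitened_metric}.

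\emph{Whitened noise.} $V=UN$ is a linear image of a zero-mean Gaussian vector, so $V$ is Gaussian with mean zero. Its covariance is $UQ^{-1}U^{\mathsf T}$. Using $Q^{-1}=U^{-1}U^{-\mathsf T}$ (valid because $U$ is invertible, having a positive diagonal), this equals $UU^{-1}U^{-\mathsf T}U^{\mathsf T}=I_n$. Then $\overline R=UR=U(x(C)+N)=Ux(C)+V$. Independence of $V$ and $C$ follows from the independence of $N$ and $C$.

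\emph{Bandwidth preservation.} This is the only step needing real care. I will argue via the column-oriented Cholesky recursion, written for $Q=U^{\mathsf T}U$:
\[
U_{ii}=\Bigl(Q_{ii}-\sum_{k<i}U_{ki}^2\Bigr)^{1/2},\qquad U_{ij}=\frac{1}{U_{ii}}\Bigl(Q_{ij}-\sum_{k<i}U_{ki}U_{kj}\Bigr),\quad j>i.
\]
I will show by induction on the row index $i$ that $U_{ij}=0$ whenever $j-i>\nu$.

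Fix $j>i+\nu$. Then $Q_{ij}=0$ by the half-bandwidth assumption. For each $k<i$ we have $j-k>j-i>\nu$, so by the induction hypothesis for row $k$, $U_{kj}=0$. Every term in the sum therefore vanishes, and $U_{ij}=0$. The base case $i=1$ is immediate, since $U_{1j}=Q_{1j}/U_{11}$.

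An alternative is to cite the standard fact that Cholesky factorization creates no fill outside the envelope of a banded matrix. I prefer the explicit induction because it is self-contained. Uniqueness of the Cholesky factor with positive diagonal ensures the recursion describes the given $U$.

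Finally, \eqref{eq:banded_whitening_action} follows from $(Ux)_i=\sum_j U_{ij}x_j$. Upper triangularity gives $U_{ij}=0$ for $j<i$, and the band property gives $U_{ij}=0$ for $j>i+\nu$, so only the indices $i\le j\le\min\{i+\nu,n\}$ contribute.

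The closing remark, that $U$ need not be Toeplitz, needs only a counterexample. The Gauss--Markov $Q$ in~\eqref{eq:gm_precision} works: its first diagonal entry differs from the interior ones, so $U_{11}\neq U_{22}$ in general. A direct $2\times2$ computation confirms this.
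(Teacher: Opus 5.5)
Your first two parts match the paper's proof: both rest on $Q=U^{\mathsf T}U$ and $\operatorname{Cov}(V)=UQ^{-1}U^{\mathsf T}=I_n$. For the bandwidth claim, the paper only cites the standard fact that unpivoted Cholesky factorization of a banded symmetric positive-definite matrix creates no fill outside the band. You instead prove it by induction on the row index, using the entrywise relations $Q_{ij}=\sum_{k\le i}U_{ki}U_{kj}$ for $j\ge i$. That induction is correct and makes the argument self-contained. You also do not need uniqueness of the factor: the given $U$ satisfies these relations, and $U_{ii}>0$ lets you solve for $U_{ij}$.

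The one incorrect step is your justification of the closing non-Toeplitz remark, which the paper's proof does not address. For the Gauss--Markov matrix in~\eqref{eq:gm_precision}, let $c=1/(\sigma^2(1-\rho^2))$. For every $n\ge2$, the recursion gives an upper bidiagonal $U$ with $U_{ii}=\sqrt{c}$ and $U_{i,i+1}=-\rho\sqrt{c}$ for $i\le n-1$, and $U_{nn}=\sqrt{c(1-\rho^2)}=1/\sigma$. For instance, when $n\ge3$, $U_{22}^2=Q_{22}-U_{12}^2=(1+\rho^2)c-\rho^2c=c=U_{11}^2$. So the larger interior diagonal of $Q$ is exactly compensated, and $U_{11}=U_{22}$. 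Your claim that ``its first diagonal entry differs from the interior ones, so $U_{11}\neq U_{22}$'' is therefore false for $n\ge3$. For $n=2$ you do get $U_{22}=\sqrt{1-\rho^2}\,U_{11}$, but there $Q_{11}=Q_{22}$, so the boundary of $Q$ is not the cause either. The conclusion still holds for $\rho\neq0$, because the last diagonal entry breaks the pattern: $U_{nn}=\sqrt{1-\rho^2}\,U_{n-1,n-1}\neq U_{n-1,n-1}$. An even simpler counterexample is any banded $Q$ with a nonconstant diagonal, such as $Q=\operatorname{diag}(1,2)$ with $\nu=0$. Replace that sentence accordingly.
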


\begin{proof}
See Appendix~\ref{app:proof_prop_whitening_relation}.
\end{proof}

\begin{remark}[Finite-block metric equivalence]
\label{rem:finite_block_metric_equivalence}

For a fixed observation, two candidate metrics $E_1$ and $E_2$ are order-equivalent if
\begin{equation}
E_2(x)=\eta E_1(x)+\kappa,
\qquad \eta>0,
\end{equation}
for every $x$ in the candidate set, where $\kappa$ is independent of $x$. Order-equivalent metrics induce the same strict pairwise orderings, tie classes, and sets of minimizing candidates.
Consider the finite-block linear Gaussian model $\overline R =Hx(C)+V$, $V\sim\mathcal N(\mathbf 0_m,\sigma_v^2I_m)$, where $H\in\mathbb R^{m\times n}$ has full column rank. Define
\begin{equation}
Q_H \triangleq \frac{H^{\mathsf T}H}{\sigma_v^2},
\qquad r_H \triangleq Q_H^{-1} \frac{H^{\mathsf T}\overline r}{\sigma_v^2}.
\end{equation}
Then, for every $x\in\{-1,+1\}^n$,
\begin{equation}
\frac{1}{2\sigma_v^2} \left\lVert \overline r-Hx \right\rVert_2^2
= \frac{1}{2} (r_H-x)^{\mathsf T} Q_H (r_H-x) + \kappa(\overline r),
\label{eq:linear_gaussian_metric_equivalence}
\end{equation}
where
\begin{equation}
\kappa(\overline r) \triangleq \frac{1}{2\sigma_v^2}
\lVert\overline r\rVert_2^2 - \frac{1}{2}r_H^{\mathsf T}Q_Hr_H
\end{equation}
is independent of $x$. Hence, the finite-block linear-channel negative log-likelihood metric and the precision-domain metric induce the same candidate-word order.

For a specified finite-block ISI matrix $H$, LP-GRAND and an ISI GRAND decoder that enumerates candidates in exact likelihood order therefore produce the same candidate-word order when they use the same channel matrix, noise variance, and boundary convention. This equivalence holds at the metric and candidate-order levels; their candidate-generation architectures and any auxiliary reference sequences may differ.

A stationary finite-block AR(1) precision matrix is not generally equal to $H^{\mathsf T}H/\sigma_v^2$ when $H$ is a finite-block Toeplitz convolution matrix associated with a two-tap channel, because the endpoint terms depend on the adopted finite-block boundary convention. Equal bandwidth alone therefore does not establish finite-block metric equivalence.
\end{remark}

\vspace{20pt}

% =======================================
\section{Stability of the Candidate Ordering}
\label{sec:robustness}

\subsection{Coefficient Quantization}

Let $\Delta>0$, and let $\mathcal{E}_Q$ denote the edge set of the interaction graph of $W_r$. Define the integer-valued coefficients
\begin{equation}
\widetilde{\alpha}_i \triangleq
\left\lfloor \frac{\alpha_i}{\Delta} \right\rceil,
\qquad i=1,\ldots,n,
\label{eq:quantized_alpha}
\end{equation}
and
\begin{equation}
\widetilde{\beta}_{ij} \triangleq
\left\lfloor \frac{\beta_{ij}}{\Delta} \right\rceil,
\qquad \{i,j\}\in\mathcal{E}_Q,
\label{eq:quantized_beta}
\end{equation}
where $\lfloor\cdot\rceil$ denotes rounding to the nearest integer, with a fixed deterministic convention for half-integer ties. The resulting integer-valued metric is
\begin{equation}
\widetilde{W}_{r,\Delta}(z) \triangleq
\sum_{i=1}^n \widetilde{\alpha}_i z_i
+ \sum_{\{i,j\}\in\mathcal{E}_Q} \widetilde{\beta}_{ij}z_i z_j.
\label{eq:integer_quantized_metric}
\end{equation}
Define the corresponding rescaled metric by $W_{r,\Delta}^{\mathrm q}(z) \triangleq \Delta\widetilde{W}_{r,\Delta}(z)$. Since $\Delta>0$, $\widetilde{W}_{r,\Delta}$ and $W_{r,\Delta}^{\mathrm q}$ induce identical strict candidate comparisons and tie classes.

% suppose that $Q$ has half-bandwidth at most $\nu$, where $0\leq\nu\leq n-1$?

Define the total number of unary and pairwise coefficients by $M_Q\triangleq n+|\mathcal{E}_Q|$. Suppose that $Q$ has half-bandwidth at most $\nu$, where $0\leq\nu\leq n-1$, and define
\begin{equation}
M_\nu(n) \triangleq n+\sum_{d=1}^{\nu}(n-d)
= n(\nu+1)-\frac{\nu(\nu+1)}{2}.
\label{eq:banded_coefficient_count}
\end{equation}
Since $Q_{ij}=0$ whenever $|i-j|>\nu$, $|\mathcal{E}_Q| \leq \sum_{d=1}^{\nu}(n-d)$, and hence
\begin{equation}
M_Q\leq M_\nu(n).
\label{eq:coefficient_count_bound}
\end{equation}

\begin{theorem}[Uniform metric-error bound under coefficient quantization]
\label{thm:quant}
For every $z\in\{0,1\}^n$,
\begin{equation}
\left|W_r(z)-W_{r,\Delta}^{\mathrm q}(z)\right|
\leq \frac{\Delta}{2}M_Q.
\label{eq:uniform_quantization_bound_general}
\end{equation}
If $Q$ has half-bandwidth at most $\nu$, then $M_Q\leq M_\nu(n)$ and hence
\begin{equation}
\left|W_r(z)-W_{r,\Delta}^{\mathrm q}(z)\right|
\leq \frac{\Delta}{2}M_\nu(n).
\label{eq:uniform_quantization_bound_banded}
\end{equation}
\end{theorem}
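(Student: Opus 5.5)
The plan is a termwise comparison followed by the triangle inequality. For each coefficient I would record the rounding error. Set $\varepsilon_i \triangleq \alpha_i-\Delta\widetilde\alpha_i$ for $i=1,\ldots,n$ and $\varepsilon_{ij}\triangleq\beta_{ij}-\Delta\widetilde\beta_{ij}$ for $\{i,j\}\in\mathcal E_Q$. The definition of $\lfloor\cdot\rceil$ gives $|x-\lfloor x\rceil|\le 1/2$ for every real $x$, whatever half-integer tie convention is used. Hence $|\varepsilon_i|\le\Delta/2$ and $|\varepsilon_{ij}|\le\Delta/2$.

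Next, I would write $W_r$ using only the edges in $\mathcal E_Q$. This is legitimate by Remark~\ref{rem:precision_graph}: $\beta_{ij}=0$ whenever $\{i,j\}\notin\mathcal E_Q$, so those terms vanish in \eqref{eq:quadratic_metric}. Subtracting $W^{\mathrm q}_{r,\Delta}(z)=\Delta\widetilde W_{r,\Delta}(z)$ from \eqref{eq:integer_quantized_metric} then gives, for every $z\in\{0,1\}^n$,
\begin{equation*}
W_r(z)-W^{\mathrm q}_{r,\Delta}(z)=\sum_{i=1}^n\varepsilon_i z_i+\sum_{\{i,j\}\in\mathcal E_Q}\varepsilon_{ij}z_iz_j .
\end{equation*}
Because $z_i\in\{0,1\}$ and $z_iz_j\in\{0,1\}$, the triangle inequality bounds the absolute value by $\sum_i|\varepsilon_i|+\sum_{\mathcal E_Q}|\varepsilon_{ij}|\le(\Delta/2)(n+|\mathcal E_Q|)=(\Delta/2)M_Q$. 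This is \eqref{eq:uniform_quantization_bound_general}.

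For the banded case, I would use the count already established before the theorem. If $Q$ has half-bandwidth at most $\nu$, every edge satisfies $1\le j-i\le\nu$. For each offset $d$ there are at most $n-d$ pairs with $j-i=d$, so $|\mathcal E_Q|\le\sum_{d=1}^{\nu}(n-d)$ and $M_Q\le M_\nu(n)$ by \eqref{eq:coefficient_count_bound}. Substituting into the general bound yields \eqref{eq:uniform_quantization_bound_banded}. The arithmetic identity in \eqref{eq:banded_coefficient_count} is just $\sum_{d=1}^\nu(n-d)=n\nu-\nu(\nu+1)/2$.

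There is no real obstacle here. The only step needing care is justifying that the pairwise sum in $W_r$ may be restricted to $\mathcal E_Q$, so that off-edge pairs carry no quantization error. That follows because $\beta_{ij}=0$ off $\mathcal E_Q$; such pairs have no quantized counterpart, since $\widetilde\beta_{ij}$ is defined only on $\mathcal E_Q$.
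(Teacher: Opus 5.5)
Your proposal is correct and matches the paper's proof: both define the coefficient rounding errors, bound each by $\Delta/2$, apply the triangle inequality using $z_i, z_iz_j\in\{0,1\}$, and then invoke \eqref{eq:coefficient_count_bound} for the banded case. Your explicit remark that pairs outside $\mathcal E_Q$ have $\beta_{ij}=0$, and so carry no quantization error, is a clarification the paper leaves implicit.
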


\begin{proof}
See Appendix~\ref{app:proof_thm_quant}.
\end{proof}

\begin{corollary}[Uniform sufficient condition for strict pairwise-order preservation]
\label{cor:quant_order}
Let $z,z'\in\{0,1\}^n$. If
\begin{equation}
\left| W_r(z)-W_r(z') \right| > \Delta M_Q,
\label{eq:uniform_order_condition}
\end{equation}
then
\begin{equation}
\operatorname{sgn} \bigl( W_r(z)-W_r(z') \bigr)
= \operatorname{sgn} \bigl( \widetilde{W}_{r,\Delta}(z)
- \widetilde{W}_{r,\Delta}(z') \bigr).
\label{eq:uniform_order_preservation}
\end{equation}
Hence, $z$ and $z'$ have the same strict order under $W_r$ and $\widetilde{W}_{r,\Delta}$. The bandwidth-based condition $\left| W_r(z)-W_r(z') \right| > \Delta M_\nu(n)$ is also sufficient.
\end{corollary}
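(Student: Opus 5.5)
The plan is a triangle-inequality argument on top of Theorem~\ref{thm:quant}, followed by a rescaling by $\Delta>0$.

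\textbf{Step 1 (setup).} Fix $z,z'\in\{0,1\}^n$ with $|W_r(z)-W_r(z')|>\Delta M_Q$. Without loss of generality, assume $W_r(z)-W_r(z')>\Delta M_Q>0$; the reverse case follows by swapping $z$ and $z'$. Note that the hypothesis forces a strict inequality, so the sign on the left of \eqref{eq:uniform_order_preservation} is nonzero.

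\textbf{Step 2 (bound the quantized difference).} By \eqref{eq:uniform_quantization_bound_general},
\[
W_{r,\Delta}^{\mathrm q}(z)\ge W_r(z)-\tfrac{\Delta}{2}M_Q
\quad\text{and}\quad
W_{r,\Delta}^{\mathrm q}(z')\le W_r(z')+\tfrac{\Delta}{2}M_Q .
\]
Subtracting these gives
\[
W_{r,\Delta}^{\mathrm q}(z)-W_{r,\Delta}^{\mathrm q}(z')\ge W_r(z)-W_r(z')-\Delta M_Q>0 .
\]

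\textbf{Step 3 (transfer to the integer metric).} Since $W_{r,\Delta}^{\mathrm q}=\Delta\widetilde W_{r,\Delta}$ with $\Delta>0$, dividing by $\Delta$ preserves the sign. Hence $\widetilde W_{r,\Delta}(z)-\widetilde W_{r,\Delta}(z')>0$, which is \eqref{eq:uniform_order_preservation}. The sign identity says exactly that the strict order of $z$ and $z'$ is the same under $W_r$ and under $\widetilde W_{r,\Delta}$.

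\textbf{Step 4 (banded variant).} When $Q$ has half-bandwidth at most $\nu$, \eqref{eq:coefficient_count_bound} gives $M_Q\le M_\nu(n)$. So $|W_r(z)-W_r(z')|>\Delta M_\nu(n)$ implies $|W_r(z)-W_r(z')|>\Delta M_Q$, and Steps 1--3 apply unchanged.

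There is no genuine obstacle here. The only points needing care are that the threshold must be $\Delta M_Q$, twice the per-candidate bound $\tfrac{\Delta}{2}M_Q$, because the errors at $z$ and $z'$ can have opposite signs; and that the strict inequality in the hypothesis is what rules out a quantized tie.
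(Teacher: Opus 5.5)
Your proof is correct and takes essentially the same route as the paper. Both arguments apply the per-candidate bound of Theorem~\ref{thm:quant} at $z$ and at $z'$ to bound the perturbation of the gap by $\Delta M_Q$, conclude that the quantized gap has the same sign as the unquantized one, and then pass to $\widetilde W_{r,\Delta}$ by dividing by $\Delta>0$. Your explicit reduction of the banded case to $M_Q\le M_\nu(n)$ spells out a step that the paper's proof leaves implicit.
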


\begin{proof}
See Appendix~\ref{app:proof_cor_quant_order}.
\end{proof}

\begin{corollary}[Pair-specific quantization bound]
\label{cor:pair_specific_quantization}
For $z,z'\in\{0,1\}^n$, define
\begin{equation}
D_Q(z,z') \triangleq \sum_{i=1}^n |z_i-z'_i|
+ \sum_{\{i,j\}\in\mathcal{E}_Q}
\left| z_i z_j-z'_i z'_j \right|.
\label{eq:pair_specific_difference_count}
\end{equation}
Then
\begin{equation}
\left| \left[ W_{r,\Delta}^{\mathrm q}(z) - W_{r,\Delta}^{\mathrm q}(z') \right]
- \left[ W_r(z)-W_r(z') \right] \right|
\leq \frac{\Delta}{2}D_Q(z,z').
\label{eq:pair_specific_gap_bound}
\end{equation}
Consequently, $z$ and $z'$ have the same strict order under $W_r$ and $W_{r,\Delta}^{\mathrm q}$ if
\begin{equation}
\left| W_r(z)-W_r(z') \right|
> \frac{\Delta}{2}D_Q(z,z').
\label{eq:pair_specific_certificate}
\end{equation}
\end{corollary}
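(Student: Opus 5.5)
The approach is to write the quantization error of each coefficient explicitly and then track how these errors enter the difference of the two energies. For each $i$, set $\epsilon_i\triangleq\Delta\widetilde\alpha_i-\alpha_i$, and for each $\{i,j\}\in\mathcal{E}_Q$, set $\epsilon_{ij}\triangleq\Delta\widetilde\beta_{ij}-\beta_{ij}$. Rounding to the nearest integer gives $|\alpha_i/\Delta-\widetilde\alpha_i|\le 1/2$ regardless of the half-integer tie convention. Hence $|\epsilon_i|\le\Delta/2$ and $|\epsilon_{ij}|\le\Delta/2$. This is the only property of the rounding that we need.

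Next we expand the quantized metric. By \eqref{eq:integer_quantized_metric} and the definition $W^{\mathrm q}_{r,\Delta}=\Delta\widetilde W_{r,\Delta}$, and because $W_r$ has nonzero pairwise coefficients only on $\mathcal{E}_Q$ (Remark~\ref{rem:precision_graph}), we obtain, for every $z$,
\begin{equation*}
W^{\mathrm q}_{r,\Delta}(z)-W_r(z)=\sum_{i=1}^n\epsilon_i z_i+\sum_{\{i,j\}\in\mathcal{E}_Q}\epsilon_{ij}z_iz_j .
\end{equation*}
Subtracting this identity at $z'$ from the identity at $z$ expresses the left-hand side of \eqref{eq:pair_specific_gap_bound} as
\begin{equation*}
\sum_{i}\epsilon_i(z_i-z'_i)+\sum_{\{i,j\}\in\mathcal{E}_Q}\epsilon_{ij}(z_iz_j-z'_iz'_j).
\end{equation*}
Applying the triangle inequality together with $|\epsilon|\le\Delta/2$ termwise then gives exactly $\frac{\Delta}{2}D_Q(z,z')$. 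This establishes \eqref{eq:pair_specific_gap_bound}.

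For the order statement, write $g\triangleq W_r(z)-W_r(z')$ and $\widetilde g\triangleq W^{\mathrm q}_{r,\Delta}(z)-W^{\mathrm q}_{r,\Delta}(z')$. The bound just proved gives $|\widetilde g-g|\le\frac{\Delta}{2}D_Q(z,z')<|g|$ under \eqref{eq:pair_specific_certificate}. Therefore $\widetilde g$ lies in the open interval $(g-|g|,\,g+|g|)$, which excludes $0$ and contains only numbers with the sign of $g$. Hence $\operatorname{sgn}\widetilde g=\operatorname{sgn} g\neq0$. Finally, $\Delta>0$, so the same conclusion transfers to $\widetilde W_{r,\Delta}$ if desired.

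No step is a real obstacle. The only point needing care is to restrict the pairwise sum to $\mathcal{E}_Q$, so that both metrics share the same support and no unquantized terms are left over. It is also worth noting that the certificate is strict, which rules out ties.
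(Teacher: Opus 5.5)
Your proof is correct and matches the paper's argument: the paper also writes the gap perturbation as $\sum_i e_i(z_i-z'_i)+\sum_{\{i,j\}\in\mathcal{E}_Q} e_{ij}(z_iz_j-z'_iz'_j)$ with $|e_i|,|e_{ij}|\le\Delta/2$ (your $\epsilon$'s are simply the negatives of its $e$'s). It then applies the triangle inequality and deduces sign preservation from the strict certificate.
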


\begin{proof}
See Appendix~\ref{app:proof_cor_pair_specific_quantization}.
\end{proof}

The quantity $D_Q(z,z')$ counts only the unary and pairwise monomials whose values differ between $z$ and $z'$. Consequently, $D_Q(z,z')\leq M_Q$, and the pair-specific bound is never weaker than the corresponding uniform bound.

\begin{remark}
\label{rem:quantized_metric_interpretation}
In our implementation, we use rounding to nearest, with ties to even. Because the unary and pairwise coefficients are quantized independently, the resulting coefficients need not satisfy the relations in \eqref{eq:alpha} and~\eqref{eq:beta} for any single symmetric positive-definite precision matrix and received observation. Accordingly, $W_{r,\Delta}^{\mathrm q}$ is regarded as a quadratic pseudo-Boolean metric rather than necessarily as a Gaussian negative log-likelihood. Enumeration is exact with respect to $W_{r,\Delta}^{\mathrm q}$, although the candidate order it induces may differ from that induced by $W_r$.
\end{remark}

\begin{corollary}[First-order Gauss--Markov case]
\label{cor:gm_quantization}
Consider the first-order Gauss--Markov model in~\eqref{eq:gm_precision}, with $n\geq2$. If $\rho\neq0$, then
\begin{equation}
\mathcal{E}_Q = \bigl\{ \{i,i+1\}:i=1,\ldots,n-1 \bigr\},
\end{equation}
and hence $M_Q=2n-1$. Therefore, by Theorem~\ref{thm:quant}, for every $z\in\{0,1\}^n$,
\begin{equation}
\left| W_r(z)-W_{r,\Delta}^{\mathrm q}(z) \right|
\leq \frac{\Delta}{2}(2n-1).
\end{equation}
If $\rho=0$, then $\mathcal{E}_Q=\varnothing$ and $M_Q=n$, which yields
\begin{equation}
\left| W_r(z)-W_{r,\Delta}^{\mathrm q}(z) \right|
\leq \frac{\Delta n}{2},
\qquad z\in\{0,1\}^n.
\end{equation}
\end{corollary}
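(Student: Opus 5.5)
The corollary follows from Theorem~\ref{thm:quant} once we compute $M_Q=n+|\mathcal{E}_Q|$ for the Gauss--Markov precision matrix in~\eqref{eq:gm_precision}. The plan is to determine the edge set $\mathcal{E}_Q$ exactly in each of the two cases $\rho\neq0$ and $\rho=0$, count its edges, and substitute the resulting value of $M_Q$ into~\eqref{eq:uniform_quantization_bound_general}.

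\emph{Case $\rho\neq0$.} From~\eqref{eq:gm_precision}, for $1\leq i<j\leq n$ the off-diagonal entries are
\begin{equation*}
Q_{i,i+1}=-\frac{\rho}{\sigma^2(1-\rho^2)},
\end{equation*}
and $Q_{ij}=0$ whenever $j-i\geq2$. Since $\sigma^2>0$ and $|\rho|<1$, the denominator is positive and finite. Hence $Q_{i,i+1}\neq0$ exactly when $\rho\neq0$. By the definition of $\mathcal{E}_Q$ in Remark~\ref{rem:precision_graph}, this gives $\mathcal{E}_Q=\{\{i,i+1\}:1\leq i\leq n-1\}$. The edge set has $n-1$ distinct elements because $n\geq2$, so $M_Q=n+(n-1)=2n-1$.

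\emph{Case $\rho=0$.} Here $Q=\sigma^{-2}I_n$ (see also Remark~\ref{rem:memoryless_case}). Every off-diagonal entry vanishes, so $\mathcal{E}_Q=\varnothing$ and $M_Q=n$.

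In both cases we then apply Theorem~\ref{thm:quant}, which holds for every $z\in\{0,1\}^n$ with whatever value $M_Q$ takes. This gives the bound $\frac{\Delta}{2}(2n-1)$ when $\rho\neq0$ and $\frac{\Delta n}{2}$ when $\rho=0$. As a consistency check, the case $\rho\neq0$ agrees with the banded bound~\eqref{eq:uniform_quantization_bound_banded}: half-bandwidth $\nu=1$ gives $M_1(n)=2n-1$.

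There is no substantive obstacle. The only point requiring care is the "if and only if" nature of the edge set. We must check that no off-diagonal entry outside the first super- and sub-diagonals is nonzero, and that the first-superdiagonal entries are nonzero. The latter relies on the scalar factor $1/(\sigma^2(1-\rho^2))$ being finite and nonzero under $|\rho|<1$.
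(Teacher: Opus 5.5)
Your proposal is correct and follows essentially the paper's own argument. You read off from \eqref{eq:gm_precision} that the only nonzero off-diagonal entries are $Q_{i,i+1}=-\rho/(\sigma^2(1-\rho^2))$ when $\rho\neq0$ (and none when $\rho=0$, where $Q=\sigma^{-2}I_n$), count $M_Q=2n-1$ or $M_Q=n$, and substitute into Theorem~\ref{thm:quant}; your consistency check against $M_1(n)=2n-1$ in \eqref{eq:uniform_quantization_bound_banded} is a harmless extra.
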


\subsection{Precision-Matrix Mismatch}

Suppose that the Gaussian noise covariance is $\Sigma=\Sigma^{\mathsf T}\succ0$, with precision matrix $Q=\Sigma^{-1}$, whereas the decoder uses a symmetric positive-definite matrix $\widehat{Q}=\widehat{Q}^{\mathsf T}\succ0$. Define the resulting mismatched decoding metric by
\begin{equation}
\widehat{E}_r(z) \triangleq \frac{1}{2} (r-x_z)^{\mathsf T} \widehat{Q} \, (r-x_z).
\label{eq:mismatched_energy}
\end{equation}

\begin{proposition}[Uniform metric-error and order-stability bounds under precision-matrix mismatch]
\label{prop:precision_mismatch}
Suppose that $\lVert Q-\widehat{Q}\rVert_2 \leq \varepsilon$ for some $\varepsilon\geq0$, where $\lVert\cdot\rVert_2$ denotes the spectral norm. Then, for every $z\in\{0,1\}^n$,
\begin{equation}
\left| E_r(z)-\widehat{E}_r(z) \right|
\leq \frac{\varepsilon}{2} \left( \lVert r\rVert_2+\sqrt{n} \right)^2.
\label{eq:uniform_energy_mismatch_bound}
\end{equation}
Furthermore, if $z,z'\in\{0,1\}^n$ satisfy
\begin{equation}
\label{eq:uniform_mismatch_order_condition}
\left| E_r(z)-E_r(z') \right|> \varepsilon\left(\lVert r\rVert_2+\sqrt{n}\right)^2,
\end{equation}
then
\begin{equation}
\operatorname{sgn} \bigl( E_r(z)-E_r(z') \bigr) =
\operatorname{sgn} \bigl( \widehat{E}_r(z)-\widehat{E}_r(z') \bigr).
\label{eq:mismatch_order_preservation}
\end{equation}
Hence, $z$ and $z'$ have the same strict order under $E_r$ and $\widehat{E}_r$.
\end{proposition}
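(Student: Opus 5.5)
The plan is to write the energy difference as a single quadratic form in the residual and bound it by Cauchy--Schwarz together with the spectral norm. Set $e_z \triangleq r-x_z$. Both $E_r$ and $\widehat E_r$ are quadratic forms in the same vector, so
\begin{equation*}
E_r(z)-\widehat E_r(z)=\tfrac12\, e_z^{\mathsf T}\bigl(Q-\widehat Q\bigr)e_z .
\end{equation*}
Since $Q-\widehat Q$ is symmetric, $|v^{\mathsf T}(Q-\widehat Q)v|\le \lVert Q-\widehat Q\rVert_2\lVert v\rVert_2^2$ for every $v$. Applying this with $v=e_z$ gives
\begin{equation*}
|E_r(z)-\widehat E_r(z)|\le \tfrac{\varepsilon}{2}\lVert e_z\rVert_2^2 .
\end{equation*}

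Next I bound $\lVert e_z\rVert_2$ uniformly in $z$. By \eqref{eq:xz_def}, $x_z\in\{-1,+1\}^n$, so $\lVert x_z\rVert_2=\sqrt n$. The triangle inequality then yields $\lVert e_z\rVert_2\le\lVert r\rVert_2+\sqrt n$, and squaring gives \eqref{eq:uniform_energy_mismatch_bound}. (A sharper bound via the componentwise form $a+2D_sz^{\mathsf T}$ from \eqref{eq:residual_linear} is possible, but the stated bound only needs the triangle inequality.)

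For order preservation, let $\eta\triangleq \frac{\varepsilon}{2}(\lVert r\rVert_2+\sqrt n)^2$. Writing
\begin{equation*}
\widehat E_r(z)-\widehat E_r(z')=\bigl[E_r(z)-E_r(z')\bigr]-\bigl[(E_r-\widehat E_r)(z)-(E_r-\widehat E_r)(z')\bigr],
\end{equation*}
the bracketed perturbation has absolute value at most $2\eta=\varepsilon(\lVert r\rVert_2+\sqrt n)^2$. Under \eqref{eq:uniform_mismatch_order_condition} this is strictly smaller than $|E_r(z)-E_r(z')|$. Hence $\widehat E_r(z)-\widehat E_r(z')$ is nonzero and has the same sign as $E_r(z)-E_r(z')$, which is \eqref{eq:mismatch_order_preservation}.

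There is no real obstacle here. The only step that needs care is the spectral-norm bound on the quadratic form, which requires $Q-\widehat Q$ to be symmetric; that holds because both $Q$ and $\widehat Q$ are symmetric by assumption. Note also that $E_r(0)$ and $\widehat E_r(0)$ differ in general, so the argument works directly with the full energies $E_r$ and $\widehat E_r$ rather than with $W_r$.
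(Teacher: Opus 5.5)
Your proposal is correct and follows the paper's proof essentially step for step: you write the energy difference as $\tfrac12 (r-x_z)^{\mathsf T}(Q-\widehat Q)(r-x_z)$, bound it by the spectral norm, use $\lVert x_z\rVert_2=\sqrt{n}$ with the triangle inequality, and show that the gap perturbation, at most twice the uniform bound, is strictly smaller than the matched gap. One minor remark: $|v^{\mathsf T}Av|\le\lVert v\rVert_2\lVert Av\rVert_2\le\lVert A\rVert_2\lVert v\rVert_2^2$ holds for any square $A$ by Cauchy--Schwarz, so symmetry of $Q-\widehat Q$ is not actually required for that step.
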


\begin{proof}
See Appendix~\ref{app:proof_prop_precision_mismatch}.
\end{proof}

\begin{proposition}[Pair-specific order stability under precision-matrix mismatch]
\label{prop:pair_specific_precision_mismatch}
Let $\Delta_Q \triangleq \widehat{Q}-Q$. For any $z,z'\in\{0,1\}^n$,
\begin{align}
&\left| \left[ \widehat{E}_r(z)-\widehat{E}_r(z') \right]
- \left[ E_r(z)-E_r(z') \right] \right| \nonumber\\
&\qquad\leq
\frac{\lVert\Delta_Q\rVert_2}{2} \lVert x_z-x_{z'}\rVert_2 \lVert x_z+x_{z'}-2r \rVert_2.
\label{eq:pair_specific_precision_bound}
\end{align}
Consequently, $z$ and $z'$ have the same strict order under $E_r$ and $\widehat{E}_r$ if
\begin{equation}
\left| E_r(z)-E_r(z') \right| > \frac{\lVert\Delta_Q\rVert_2}{2} \lVert x_z-x_{z'}\rVert_2 \lVert x_z+x_{z'}-2r\rVert_2.
\label{eq:pair_specific_precision_certificate}
\end{equation}
Furthermore,
\begin{equation}
\lVert x_z-x_{z'}\rVert_2 = 2\sqrt{d_{\mathrm H}(z,z')},
\label{eq:bpsk_hamming_distance}
\end{equation}
where $d_{\mathrm H}$ denotes the Hamming distance. Hence, the right-hand side of~\eqref{eq:pair_specific_precision_bound} can be written as $\lVert\Delta_Q\rVert_2 \sqrt{d_{\mathrm H}(z,z')} \lVert x_z+x_{z'}-2r\rVert_2$.
\end{proposition}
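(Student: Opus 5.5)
The plan is to write the difference of differences as a single quadratic-form expression in $\Delta_Q$ and then apply Cauchy--Schwarz together with the spectral-norm bound. For brevity set $e_z\triangleq r-x_z$ and $e_{z'}\triangleq r-x_{z'}$. By \eqref{eq:energy} and \eqref{eq:mismatched_energy},
\begin{equation*}
\widehat E_r(z)-E_r(z)=\tfrac12 e_z^{\mathsf T}\Delta_Q e_z,
\end{equation*}
and the same identity holds for $z'$. Therefore the left-hand side of \eqref{eq:pair_specific_precision_bound} equals
\begin{equation*}
\tfrac12\bigl|e_z^{\mathsf T}\Delta_Q e_z-e_{z'}^{\mathsf T}\Delta_Q e_{z'}\bigr|.
\end{equation*}

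The key step is the polarization identity for a symmetric matrix. Since $\Delta_Q=\widehat Q-Q$ is symmetric (both $Q$ and $\widehat Q$ are symmetric by assumption), we have
\begin{equation*}
e_z^{\mathsf T}\Delta_Q e_z-e_{z'}^{\mathsf T}\Delta_Q e_{z'}=(e_z-e_{z'})^{\mathsf T}\Delta_Q(e_z+e_{z'}).
\end{equation*}
This is verified by expanding the right-hand side: the cross terms $-e_{z'}^{\mathsf T}\Delta_Q e_z$ and $e_z^{\mathsf T}\Delta_Q e_{z'}$ are equal by symmetry and therefore cancel. Substituting $e_z-e_{z'}=x_{z'}-x_z$ and $e_z+e_{z'}=2r-x_z-x_{z'}$, and applying $|u^{\mathsf T}Av|\le\lVert A\rVert_2\lVert u\rVert_2\lVert v\rVert_2$, gives \eqref{eq:pair_specific_precision_bound}. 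The signs disappear inside the norms.

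The order certificate follows by a standard triangle argument. Write $g\triangleq E_r(z)-E_r(z')$ and $\hat g\triangleq\widehat E_r(z)-\widehat E_r(z')$. Then $|\hat g-g|\le B$, where $B$ is the bound in \eqref{eq:pair_specific_precision_bound}. If $|g|>B$, then $\hat g$ lies in an interval of radius $B$ around $g$ that excludes zero, so $\operatorname{sgn}\hat g=\operatorname{sgn} g$.

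For \eqref{eq:bpsk_hamming_distance}, use \eqref{eq:xz_def}: $(x_z)_i-(x_{z'})_i=2s_i(z'_i-z_i)$. Each such entry has magnitude $2$ exactly where $z_i\ne z'_i$ and is $0$ otherwise, so $\lVert x_z-x_{z'}\rVert_2^2=4d_{\mathrm H}(z,z')$. Substituting this into $B$ yields the rewritten form $\lVert\Delta_Q\rVert_2\sqrt{d_{\mathrm H}(z,z')}\,\lVert x_z+x_{z'}-2r\rVert_2$.

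No step is a real obstacle. The only point needing care is that the polarization identity relies on the symmetry of $\Delta_Q$. That symmetry is guaranteed here because both $Q$ and $\widehat Q$ are assumed symmetric.
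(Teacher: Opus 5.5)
Your proposal is correct and follows essentially the same route as the paper's proof. The paper also writes the gap perturbation as $\tfrac12\bigl(a^{\mathsf T}\Delta_Q a-b^{\mathsf T}\Delta_Q b\bigr)$ with $a=r-x_z$ and $b=r-x_{z'}$, uses the symmetry of $\Delta_Q$ to factor it as $(a-b)^{\mathsf T}\Delta_Q(a+b)$, applies the spectral-norm bound, argues sign preservation, and obtains $\lVert x_z-x_{z'}\rVert_2^2=4d_{\mathrm H}(z,z')$ coordinatewise.
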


\begin{proof}
See Appendix~\ref{app:proof_prop_pair_specific_precision_mismatch}.
\end{proof}

\begin{remark}[Comparison of pair-specific and uniform energy-gap perturbation bounds]
\label{rem:mismatch_bound_comparison}
The pair-specific bound in \eqref{eq:pair_specific_precision_bound} does not exceed the uniform energy-gap perturbation bound in~\eqref{eq:pairwise_gap_mismatch}. With $a\triangleq r-x_z$ and $b\triangleq r-x_{z'}$,
\begin{align}
\frac{\lVert\Delta_Q\rVert_2}{2} \lVert a-b\rVert_2 \lVert a+b\rVert_2
&\leq \frac{\lVert\Delta_Q\rVert_2}{4}
\left( \lVert a-b\rVert_2^2 + \lVert a+b\rVert_2^2 \right)
\nonumber\\
&= \frac{\lVert\Delta_Q\rVert_2}{2} \left( \lVert a\rVert_2^2 +
\lVert b\rVert_2^2 \right)
\nonumber\\
&\leq \lVert\Delta_Q\rVert_2 \left( \lVert r\rVert_2+\sqrt{n} \right)^2,
\end{align}
where the first inequality follows from $2uv\leq u^2+v^2$, and the equality follows from the parallelogram identity. Under the assumption $\lVert Q-\widehat{Q}\rVert_2\leq\varepsilon$, the energy-gap perturbation is therefore bounded by $\varepsilon\bigl(\lVert r\rVert_2+\sqrt{n}\bigr)^2$. Consequently, if $\left|E_r(z)-E_r(z')\right| > \varepsilon\bigl(\lVert r\rVert_2+\sqrt{n}\bigr)^2$, the matched and mismatched energy gaps have the same sign.
\end{remark}

\vspace{20pt}

% ======================================================================
\section{Full-Space Weights, Auxiliary First-Hit Correctness, and Abandonment}
\label{sec:softout}
% ======================================================================

The trellis used for likelihood-ordered enumeration also supports sum-product computation over the full assignment space $\{0,1\}^n$. Define
\begin{equation}
Z_r \triangleq \sum_{u\in\{0,1\}^n} \exp\bigl\{-W_r(u)\bigr\}
\label{eq:fullspace_partition}
\end{equation}
and
\begin{equation}
p_r(z) \triangleq \frac{\exp\bigl\{-W_r(z)\bigr\}}{Z_r},
\qquad z\in\{0,1\}^n.
\label{eq:fullspace_weight}
\end{equation}
Then $p_r$ is a probability mass function on $\{0,1\}^n$ and satisfies $\sum_{z\in\{0,1\}^n}p_r(z)=1$. LP-GRAND enumerates the noise-effect patterns in nondecreasing order of $W_r(z)$, equivalently in nonincreasing order of $p_r(z)$, subject to the same fixed deterministic tie-breaking rule. The mass $p_r(z)$ is associated with the pattern $z$, rather than with its rank in the enumeration.

Since $E_r(z)=E_r(0)+W_r(z)$, the candidate-independent term $E_r(0)$ cancels under normalization. Thus, $p_r$ is the normalized Gaussian likelihood mass over all $2^n$ BPSK candidate words. More precisely, for fixed $r$ and $y=y(r)$, the mapping $z\mapsto y\xor z$ is a bijection on $\{0,1\}^n$, and $p_r(z)$ equals the posterior mass assigned to the candidate word $y\xor z$ under a uniform prior on all binary words.

This full-space distribution differs from the posterior under the uniform codebook prior. If $C$ is uniform on $\calC$, then, for $z\in\{0,1\}^n$,
\begin{equation}
\!\!\!
\Pr\!\left\{C=y\xor z\mid R=r\right\} =
\frac{ \one\{y\xor z\in\calC\} \exp\bigl\{-W_r(z)\bigr\}}{\displaystyle
\sum_{\substack{u\in\{0,1\}^n\\ y\xor u\in\calC}} \exp\bigl\{-W_r(u)\bigr\}}.
\label{eq:codebook_conditioned_posterior}
\end{equation}
Therefore, $p_r$ generally differs from the codebook-conditioned posterior in~\eqref{eq:codebook_conditioned_posterior}.

\subsection{Partition Function and Full-Space Weights}

Under the zero-padded state convention introduced in Section~\ref{sec:graphical}, let the trellis state space be
\begin{equation}
\mathcal{U}_\nu \triangleq
\begin{cases}
\{0,1\}^{\nu}, & \nu\geq1,\\
\{\varnothing\}, & \nu=0.
\end{cases}
\label{eq:trellis_state_space_partition}
\end{equation}
Let $F_i(v)$ denote the total unnormalized weight of all partial paths
that have processed coordinates $1,\ldots,i$ and terminate in state
$v$. Initialize
\begin{equation}
F_0(v) =
\begin{cases}
1, & v=u_1,\\
0, & v\neq u_1,
\end{cases}
\qquad v\in\mathcal{U}_\nu,
\label{eq:forward_initialization}
\end{equation}
where $u_1$ is the initial state defined in Section~\ref{sec:graphical}. For $i=1,\ldots,n$, apply the forward sum-product recursion
\begin{equation}
F_i(v) = \sum_{u\in\mathcal{U}_\nu}
\sum_{\substack{b\in\{0,1\}\\ T(u,b)=v}}
F_{i-1}(u) \exp\bigl\{-\gamma_i(u,b)\bigr\},
\quad v\in\mathcal{U}_\nu.
\label{eq:forward_partition_recursion}
\end{equation}
Since each $z\in\{0,1\}^n$ corresponds to a unique trellis path whose branch metrics sum to $W_r(z)$, summing over all states after processing coordinate $n$ yields
\begin{equation}
Z_r = \sum_{v\in\mathcal{U}_\nu}F_n(v).
\label{eq:trellis_partition}
\end{equation}

\begin{proposition}[Exact full-space partition function and normalization]
\label{prop:fullspace_weights}
Suppose that $Q$ has half-bandwidth at most $\nu$, where $0\leq\nu\leq n-1$. In exact arithmetic, the recursion \eqref{eq:forward_initialization}--\eqref{eq:trellis_partition} computes the partition function in~\eqref{eq:fullspace_partition}. A full-state implementation that propagates the weights along both outgoing branches of every state at each layer performs exactly $2n2^\nu$ branch updates. An implementation restricted to reachable states performs no more than this number. The arithmetic complexity is therefore $\mathcal{O}(n2^\nu)$, and retaining only two consecutive forward layers requires $\mathcal{O}(2^\nu)$ working memory. For fixed $\nu$, both bounds are linear in $n$. For an emitted noise-effect pattern $z^{(q)}$, its normalized full-space weight is
\begin{equation}
p_r\bigl(z^{(q)}\bigr) = \frac{\exp\bigl\{-W_r(z^{(q)})\bigr\}}{Z_r}.
\label{eq:emitted_fullspace_weight}
\end{equation}
\end{proposition}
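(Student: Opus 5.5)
The plan is to prove correctness of the forward recursion by induction on the layer index, and then to read off the complexity and normalization claims directly. The inductive invariant is
\[
F_i(v)=\sum_{(z_1,\ldots,z_i)\in\mathcal{P}_i(v)} \exp\Bigl\{-\sum_{\ell=1}^{i}\gamma_\ell(u_\ell,z_\ell)\Bigr\}.
\]
Here $\mathcal{P}_i(v)$ is the set of length-$i$ label prefixes whose trellis path starts at $u_1$ and ends in state $u_{i+1}=v$. The states $u_\ell$ are generated deterministically from the prefix via $u_{\ell+1}=T(u_\ell,z_\ell)$.

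The base case $i=0$ is \eqref{eq:forward_initialization}: the only length-$0$ prefix is empty, its path ends at $u_1$, and its empty sum of costs has weight $1$.

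For the inductive step, observe that each prefix in $\mathcal{P}_i(v)$ decomposes uniquely as a prefix in $\mathcal{P}_{i-1}(u)$ followed by a label $b$ with $T(u,b)=v$. Uniqueness holds because $u$ is the state determined by the first $i-1$ labels and $b=z_i$. Since the weight is multiplicative, it factors as $\exp\{-\text{prefix cost}\}\exp\{-\gamma_i(u,b)\}$. Regrouping the sum over $\mathcal{P}_i(v)$ according to $(u,b)$ then gives exactly \eqref{eq:forward_partition_recursion}.

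At $i=n$, the sets $\mathcal{P}_n(v)$, for $v\in\mathcal{U}_\nu$, partition $\{0,1\}^n$, because the terminal state is a function of $z$. Summing over $v$ therefore gives $\sum_{z}\exp\{-\sum_i\gamma_i(u_i,z_i)\}$. Theorem~\ref{thm:trellis} identifies the exponent with $W_r(z)$ and supplies the bijection between paths and patterns, so \eqref{eq:trellis_partition} equals \eqref{eq:fullspace_partition}. The step needing the most care is the disjoint decomposition of prefixes by their last transition. It depends on $T$ being a function, so paths never merge ambiguously. It also depends on unreachable states carrying $F=0$, so including them in the sum is harmless.

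For the counts: a full-state implementation visits $|\mathcal{U}_\nu|=2^\nu$ states per layer (one state when $\nu=0$). It performs two branch updates per state, one per $b\in\{0,1\}$, over $n$ layers, which totals $2n2^\nu$ multiply-accumulate updates. Restricting to reachable states only removes updates, so the count cannot exceed this. Each update costs $O(1)$ arithmetic, given precomputed branch costs or on-the-fly evaluation of $\gamma_i$. I will treat $\gamma_i$ as $O(1)$ per branch for fixed $\nu$, which may need a brief remark. Because $F_i$ depends only on $F_{i-1}$, two arrays of length $2^\nu$ suffice, which gives $O(2^\nu)$ memory. For fixed $\nu$ both bounds are linear in $n$.

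Finally, \eqref{eq:emitted_fullspace_weight} is simply definition \eqref{eq:fullspace_weight} evaluated at $z=z^{(q)}$. Normalization follows because $Z_r>0$ (it is a finite sum of positive terms) and $\sum_z p_r(z)=Z_r/Z_r=1$.
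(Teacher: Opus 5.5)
Your proposal is correct and follows essentially the same route as the paper. The shared steps are: induction on the layer index with the invariant that $F_i(v)$ is the total weight of the length-$i$ prefixes from $u_1$ that end in $v$; Theorem~\ref{thm:trellis} to identify the terminal sum with $Z_r$; a direct count of $2^\nu$ states times two branches times $n$ layers; two-layer storage; and the definition of $p_r$ for the final identity. You note that each update costs $\mathcal{O}(1)$ only if the branch costs $\gamma_i(u,b)$ are precomputed or otherwise obtained in constant time, since evaluating \eqref{eq:branch} on the fly costs $\mathcal{O}(\nu)$; the paper leaves this point implicit too, so your caveat is a small refinement rather than a gap.
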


\begin{proof}
See Appendix~\ref{app:proof_prop_fullspace_weights}.
\end{proof}

\begin{remark}[Numerically stable log-domain evaluation]
\label{rem:log_domain_partition}
Our implementation evaluates the forward recursion and the cumulative full-space mass in the log domain. Let $\Lambda_i(v)\triangleq\log F_i(v)$, with the convention $\log 0=-\infty$. For a finite index set $\mathcal J$, if $\mathcal J=\varnothing$ or $a_j=-\infty$ for every $j\in\mathcal J$, define $\log\left(\sum_{j\in\mathcal J}e^{a_j}\right)=-\infty$. Otherwise, letting $m\triangleq\max_{j\in\mathcal J}a_j$, evaluate the logarithm as
\begin{equation}
\log\left(\sum_{j\in\mathcal J}e^{a_j}\right) = m+\log\left(\sum_{j\in\mathcal J}e^{a_j-m}\right).
\label{eq:stable_log_sum}
\end{equation}
which is applied to the forward recursion in~\eqref{eq:forward_partition_recursion} and to
\begin{equation}
\log Z_r = \log\left( \sum_{v\in\mathcal{U}_\nu}e^{\Lambda_n(v)} \right).
\label{eq:log_partition}
\end{equation}
For $C_q\triangleq\sum_{t=1}^{q}p_r\bigl(z^{(t)}\bigr)$, define $\ell_q \triangleq\log C_q$, with $\ell_0=-\infty$. After processing the $q$-th emitted pattern,
\begin{equation}
\ell_q = \log\left[ e^{\ell_{q-1}} + e^{-W_r(z^{(q)})-\log Z_r} \right],
\label{eq:log_covered_mass}
\end{equation}
where the two-term instance of~\eqref{eq:stable_log_sum} is used. We computed the  remaining full-space mass as
\begin{equation}
1-C_q = -\operatorname{expm1}(\ell_q),
\label{eq:stable_tail_evaluation}
\end{equation}
where $\operatorname{expm1}(x)\triangleq e^x-1$. This form avoids loss of precision from direct subtraction when $C_q$ is close to one. Before evaluating~\eqref{eq:stable_tail_evaluation}, in our implementation, we replace $\ell_q$ by $\min\{0,\ell_q\}$ to enforce the theoretical constraint $\ell_q\leq0$ under floating-point roundoff.
\end{remark}

\subsection{First-Hit Correctness Under an Auxiliary Random-Codebook Occupancy Model}

We apply the SOGRAND random-codebook occupancy argument to the first codebook hit, using the full-space weights in~\eqref{eq:fullspace_weight}~\cite{yuan2025soft,duffy2024soft}. Fix $r$ and condition throughout on $R=r$. Let $L\triangleq2^n$, and let $z^{(1)},\ldots,z^{(L)}$ denote the LP-GRAND query sequence ordered by nondecreasing $W_r$, with the fixed deterministic tie-breaking rule used by the enumerator. Define $c^{(q)} \triangleq y(r)\xor z^{(q)}$, $q=1,\ldots,L$. The sequence $c^{(1)},\ldots,c^{(L)}$ is therefore a permutation of $\{0,1\}^n$. Define also $w_q \triangleq p_r\bigl(z^{(q)}\bigr)$, $q=1,\ldots,L$. Consider an auxiliary random-codebook ensemble in which the codebook $\calC$ is selected uniformly from all $M$-element subsets of $\{0,1\}^n$, where $1\leq M\leq L$, and the transmitted codeword $C$ is then selected uniformly from $\calC$. Under this ensemble, the marginal prior on $C$ is uniform over $\{0,1\}^n$. Let $J\in\{1,\ldots,L\}$ be the unique query position satisfying $C=c^{(J)}$. It follows that
\begin{equation}
\Pr \left\{J=q\mid R=r\right\} = w_q,
\qquad q=1,\ldots,L.
\label{eq:occupancy_query_position_distribution}
\end{equation}
Conditional on $J=j$, define the set of query positions occupied by the remaining codewords as $\mathcal I \triangleq \left\{ q\in\{1,\ldots,L\}\setminus\{j\}: c^{(q)}\in\calC \right\}$. For every $I\subseteq\{1,\ldots,L\}\setminus\{j\}$ with $|I|=M-1$,
\begin{equation}
\Pr \left\{ \mathcal I=I \mid J=j,R=r \right\}
= \binom{L-1}{M-1}^{-1}.
\label{eq:uniform_random_codebook_occupancy}
\end{equation}
Thus, conditional on $J$, the positions of the remaining $M-1$ codewords are sampled uniformly without replacement from the other $L-1$ query positions.
Define the first codebook-hit index by $H \triangleq \min\bigl({J}\cup\mathcal I\bigr)$. When $M=1$, $\mathcal I=\varnothing$ and hence $H=J$. Finally, define the full-space tail mass after query $q$ by
\begin{equation}
T_q \triangleq \sum_{t=q+1}^{L}w_t
= \Pr \left\{J>q\mid R=r\right\},
\quad q=1,\ldots,L,
\label{eq:fullspace_query_tail}
\end{equation}
so that $T_L=0$.
For a fixed binary linear $[n,k]$ code, one sets $M=2^k$ when applying this auxiliary model. The uniform occupancy law in~\eqref{eq:uniform_random_codebook_occupancy} is exact for the random-codebook ensemble and is not implied solely by the linear structure of the fixed code.

\begin{theorem}[First-hit correctness probability under the auxiliary occupancy model]
\label{thm:single_app}
Under the auxiliary random-codebook occupancy model defined above, let $q\in\{1,\ldots,L-M+1\}$. For integers $a$ and $b$, adopt the convention $\binom{a}{b}=0$ unless $0\leq b\leq a$. Define
\begin{equation}
A_q \triangleq w_q\binom{L-q}{M-1},
\qquad B_q \triangleq T_q\binom{L-q-1}{M-2}.
\label{eq:first_hit_terms}
\end{equation}
Then the conditional probability that the first codebook hit is the transmitted codeword is
\begin{equation}
P_{\mathrm{cor}}(q\mid r) \triangleq
\Pr \left\{J=q\mid H=q,R=r\right\} = \frac{A_q}{A_q+B_q}.
\label{eq:first_hit_correctness}
\end{equation}
\end{theorem}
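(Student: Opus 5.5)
We apply Bayes' rule inside the conditional law given $R=r$, writing
\begin{equation*}
P_{\mathrm{cor}}(q\mid r)=\frac{\Pr\{J=q,\,H=q\mid R=r\}}{\Pr\{H=q\mid R=r\}} .
\end{equation*}
We then split the event $\{H=q\}$ into the disjoint pieces $\{J=q,\,H=q\}$ and $\{J<q,\,H=q\}\cup\{J>q,\,H=q\}$. If $J<q$, then $H\le J<q$, so the piece $\{J<q,\,H=q\}$ is empty. It therefore remains to show
\begin{equation*}
\Pr\{J=q,H=q\mid r\}=c\,A_q,\qquad \Pr\{J>q,H=q\mid r\}=c\,B_q
\end{equation*}
with a common constant $c=\binom{L-1}{M-1}^{-1}$. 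The ratio then gives the claim.

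\textbf{First term.} We condition on $J=q$, which has probability $w_q$ by \eqref{eq:occupancy_query_position_distribution}. Given $J=q$, the event $H=q$ holds exactly when all $M-1$ other occupied positions lie in $\{q+1,\ldots,L\}$. By the uniform occupancy law \eqref{eq:uniform_random_codebook_occupancy}, this has probability $\binom{L-q}{M-1}/\binom{L-1}{M-1}$. Hence $\Pr\{J=q,H=q\mid r\}=cA_q$.

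\textbf{Second term.} For each $j>q$, condition on $J=j$, which has probability $w_j$. The event $H=q$ requires $q\in\mathcal I$ and every remaining element of $\mathcal I$ to lie in $\{q+1,\ldots,L\}\setminus\{j\}$. That set has $L-q-1$ elements, and we must choose $M-2$ of them. So the conditional probability is $\binom{L-q-1}{M-2}/\binom{L-1}{M-1}$, and it does not depend on $j$. Summing over $j>q$ and using $\sum_{j>q}w_j=T_q$ from \eqref{eq:fullspace_query_tail} gives $cB_q$.

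\textbf{Edge cases.} These are the only delicate points.
\begin{itemize}
\item When $M=1$, $\mathcal I=\varnothing$ and $\binom{\cdot}{-1}=0$, so $B_q=0$. The formula gives $1$, consistent with $H=J$.
\item The restriction $q\le L-M+1$ ensures $\binom{L-q}{M-1}\ge1$. We also need $w_q>0$, which holds because every $p_r(z)>0$. Hence $A_q>0$, the denominator is positive, and the conditioning event has positive probability.
\end{itemize}
I expect the main care to be needed in justifying that the occupancy law is unchanged when we condition on $J=j$ together with $R=r$. This follows because, under the ensemble, the codebook given $C$ is a uniform random $(M-1)$-subset of the other words, independent of the noise. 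Given $C$, the observation $R$ depends only on $C$, so conditioning additionally on $R$ leaves that subset uniform.
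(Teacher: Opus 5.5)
Your proposal is correct and follows the paper's proof essentially step for step. Like the paper, you write $\{H=q\}$ as the disjoint union of $\{H=q,J=q\}$ and $\{H=q,J>q\}$. You evaluate the first piece as $w_q\binom{L-q}{M-1}/\binom{L-1}{M-1}$. You obtain the second by summing $w_t\binom{L-q-1}{M-2}/\binom{L-1}{M-1}$ over $t>q$ to produce $T_q$, and then take the ratio.

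Your additional checks are valid and fill in points the paper leaves implicit:
\begin{itemize}
\item the case $M=1$;
\item positivity of $A_q$, which follows from $w_q>0$ and $q\leq L-M+1$;
\item why the uniform occupancy law still holds after conditioning on $R=r$, using the conditional independence of $\calC\setminus\{C\}$ and $N$ given $C$.
\end{itemize}
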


\begin{proof}
See Appendix~\ref{app:proof_thm_single_app}.
\end{proof}

For $M>1$ and $q\in\{1,\ldots,L-M+1\}$, \eqref{eq:first_hit_correctness} is equivalent to
\begin{equation}
P_{\mathrm{cor}}(q\mid r) = \frac{w_q(L-q)}{w_q(L-q)+(M-1)T_q}.
\label{eq:first_hit_correctness_equivalent}
\end{equation}
This expression is exact under the auxiliary random-codebook occupancy model. For a fixed structured codebook, the occupied query positions do not generally follow the uniform-subset occupancy law, and the full-space masses $w_q$ need not equal the corresponding codebook-conditioned query-position probabilities. Consequently, \eqref{eq:first_hit_correctness} yields an auxiliary-model estimate of the first-hit correctness probability for a fixed structured codebook.

\subsection{Full-Space Cumulative Mass and Auxiliary Posterior Tail}

For fixed $r$, let $z^{(1)}(r),\ldots,z^{(2^n)}(r)$ denote the LP-GRAND query sequence ordered according to $W_r$, including the fixed deterministic tie-breaking rule in~\eqref{eq:order}. For $\tau\in\{0,\ldots,2^n\}$, define the queried prefix by
\begin{equation}
\mathcal{A}_\tau(r) \triangleq \left\{ z^{(q)}(r):1\leq q\leq\tau \right\}.
\label{eq:queried_prefix}
\end{equation}
In particular, $\mathcal{A}_0(r)=\varnothing$. Its normalized full-space cumulative mass is
\begin{equation}
C_\tau(r) \triangleq \sum_{z\in\mathcal{A}_\tau(r)}p_r(z) = \sum_{q=1}^{\tau}
p_r\bigl(z^{(q)}(r)\bigr),
\label{eq:prefix_coverage}
\end{equation}
where the empty sum is defined as zero. Define the corresponding remaining full-space mass by
\begin{equation}
T_\tau(r) \triangleq 1-C_\tau(r) = \sum_{z\notin\mathcal{A}_\tau(r)}p_r(z).
\label{eq:fullspace_tail}
\end{equation}
Consequently, $C_0(r)=0$, $C_{2^n}(r)=1$, $T_0(r)=1$, $T_{2^n}(r)=0$. 
When $p_r$ is formed from the same metric used for enumeration, $C_\tau(r)$ and $T_\tau(r)$ are, respectively, the exact normalized metric masses of the queried prefix and its complement. For a fixed codebook, these quantities alone do not equal the codebook-conditioned posterior probability that the transmitted codeword lies in the queried prefix, nor do they determine the conditional decoder-correctness probability. Such probabilities also depend on the locations of the codewords within the query sequence.

An exact posterior interpretation is obtained under the following auxiliary full-space model. Let $C^{\mathrm{aux}} \sim \operatorname{Unif}\bigl(\{0,1\}^n\bigr)$, $R^{\mathrm{aux}} = x\bigl(C^{\mathrm{aux}}\bigr)+N$, where $N\sim\mathcal{N}(0,\Sigma)$, and define
\begin{equation}
Z^{\mathrm{aux}} \triangleq y\bigl(R^{\mathrm{aux}}\bigr)
\xor C^{\mathrm{aux}}.
\label{eq:auxiliary_noise_effect_pattern}
\end{equation}
For the matched Gaussian metric,
\begin{equation}
p_r(z) = \Pr \left\{ Z^{\mathrm{aux}}=z \mid R^{\mathrm{aux}}=r \right\}.
\label{eq:auxiliary_posterior_identity}
\end{equation}
Consequently,
\begin{equation}
T_\tau(r)= \Pr\left\{ Z^{\mathrm{aux}}\notin\mathcal{A}_\tau(r)
\mid R^{\mathrm{aux}}=r \right\}.
\label{eq:auxiliary_tail_probability}
\end{equation}
Thus, for any $\varepsilon\in[0,1]$, the condition $T_\tau(r)\leq\varepsilon$ guarantees, under the auxiliary full-space model, that the posterior probability assigned to noise-effect patterns outside the queried prefix is at most $\varepsilon$.

If the normalized weights are formed from a metric different from the matched Gaussian metric, they do not generally equal the posterior in~\eqref{eq:auxiliary_posterior_identity} under the true channel. They nevertheless retain their interpretation as normalized metric-induced masses. If the modified metric is induced by a symmetric positive-definite precision matrix, the weights may also be interpreted as posterior masses under the corresponding auxiliary Gaussian channel. An independently coefficient-quantized quadratic pseudo-Boolean metric need not admit such a Gaussian posterior interpretation.

\begin{proposition}[Excess error probability under abandonment]
\label{prop:abandonment_tail_loss}
Let $C$ denote the transmitted codeword, uniformly distributed over $\calC$, let $R$ denote the received vector, and define the realized noise-effect pattern by $Z^n\triangleq y(R)\xor C$. The full decoder enumerates all $2^n$ noise-effect patterns in the received-vector-dependent order $z^{(1)}(R),\ldots,z^{(2^n)}(R)$ and returns the first candidate word $y(R)\xor z^{(q)}(R)$ that belongs to $\calC$. Let $\tau:\mathbb{R}^n\to\{0,\ldots,2^n\}$ be measurable. The decoder with abandonment uses the same query order, tie-breaking rule, and membership test, but returns $\mathsf{ABANDON}$ if no codeword is found among the first $\tau(R)$ queries.

Let $P_{\mathrm e}^{\mathrm{full}}$ and $P_{\mathrm e}^{\mathrm{ab}}$ denote the corresponding error probabilities, with abandonment counted as an error. Then
\begin{equation}
0 \leq P_{\mathrm e}^{\mathrm{ab}}
- P_{\mathrm e}^{\mathrm{full}}
\leq \Pr \left\{ Z^n\notin\mathcal{A}_{\tau(R)}(R) \right\}.
\label{eq:general_abandonment_bound}
\end{equation}
For $P_R$-almost every $r$, define the codebook-conditioned posterior tail by
\begin{equation}
T_{\tau}^{\calC}(r) \triangleq \Pr \left\{ Z^n\notin\mathcal{A}_{\tau}(r)
\mid R=r \right\}.
\label{eq:codebook_conditioned_tail}
\end{equation}
Then
\begin{equation}
P_{\mathrm e}^{\mathrm{ab}} \leq P_{\mathrm e}^{\mathrm{full}}
+ \mathbb{E}\left[ T_{\tau(R)}^{\calC}(R) \right].
\label{eq:codebook_tail_abandonment_bound}
\end{equation}
If, in addition,
\begin{equation}
p_r(z) = \Pr \left\{ Z^n=z\mid R=r \right\},
\qquad z\in\{0,1\}^n,
\label{eq:matched_flip_posterior}
\end{equation}
for $P_R$-almost every $r$, then $T_{\tau}^{\calC}(r)=T_{\tau}(r)$ and
\begin{equation}
P_{\mathrm e}^{\mathrm{ab}} \leq P_{\mathrm e}^{\mathrm{full}}
+ \mathbb{E} \left[ T_{\tau(R)}(R) \right].
\label{eq:posterior_tail_abandonment_bound}
\end{equation}
In particular, if $T_{\tau(r)}(r)\leq\varepsilon$ for $P_R$-almost every $r$, then
\begin{equation}
P_{\mathrm e}^{\mathrm{ab}} \leq P_{\mathrm e}^{\mathrm{full}}
+ \varepsilon.
\label{eq:uniform_tail_abandonment_bound}
\end{equation}
\end{proposition}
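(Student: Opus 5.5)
The plan is to compare the two decoders on a common probability space, event by event. Both use the same order $z^{(1)}(R),\ldots,z^{(2^n)}(R)$ and the same membership test. Let $Q^\star(R)$ be the first-hit index of the full decoder, which is well defined because $\calC\neq\varnothing$. The abandoning decoder returns exactly the same codeword as the full decoder whenever $Q^\star(R)\leq\tau(R)$, and returns $\mathsf{ABANDON}$ otherwise. Writing $\mathcal{E}^{\mathrm{full}}$ and $\mathcal{E}^{\mathrm{ab}}$ for the error events, this gives $\mathcal{E}^{\mathrm{full}}\subseteq\mathcal{E}^{\mathrm{ab}}$, because an abandonment is counted as an error. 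Hence $P_{\mathrm e}^{\mathrm{ab}}-P_{\mathrm e}^{\mathrm{full}}=\Pr\{\mathcal{E}^{\mathrm{ab}}\setminus\mathcal{E}^{\mathrm{full}}\}\geq0$. Measurability of $\tau$ and of the query order (which is a deterministic function of $r$ with finitely many outcomes) makes all these events measurable.

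The key step is to identify $\mathcal{E}^{\mathrm{ab}}\setminus\mathcal{E}^{\mathrm{full}}$. On this event the full decoder is correct, so it returns $C=y(R)\xor z^{(Q^\star)}$, which means $Z^n=z^{(Q^\star(R))}(R)$. The abandoning decoder fails, which forces $Q^\star(R)>\tau(R)$. The order is a permutation, so $Z^n$ occupies the unique position $Q^\star$, and that position lies beyond $\tau(R)$. Therefore $Z^n\notin\mathcal{A}_{\tau(R)}(R)$. This inclusion gives \eqref{eq:general_abandonment_bound}. I expect the main, if mild, obstacle to be this inclusion argument: it must use that a correct full decoding places $Z^n$ exactly at the first-hit position, and not merely that some codeword lies beyond $\tau$.

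For \eqref{eq:codebook_tail_abandonment_bound}, I would condition on $R$ and apply the tower property: $\Pr\{Z^n\notin\mathcal{A}_{\tau(R)}(R)\}=\mathbb{E}[\Pr\{Z^n\notin\mathcal{A}_{\tau(R)}(R)\mid R\}]$. Since $\tau(R)$ and $\mathcal{A}_{\tau(R)}(R)$ are $R$-measurable, the inner conditional probability equals $T^{\calC}_{\tau(r)}(r)$ evaluated at $r=R$, for almost every $r$. Using a regular conditional distribution of $(C,R)$, or simply the explicit posterior \eqref{eq:codebook_conditioned_posterior}, handles the almost-everywhere qualifier.

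Under assumption \eqref{eq:matched_flip_posterior}, summing over $z\notin\mathcal{A}_\tau(r)$ gives $T^{\calC}_\tau(r)=T_\tau(r)$ by \eqref{eq:fullspace_tail}, which yields \eqref{eq:posterior_tail_abandonment_bound}. Finally, the almost-sure bound $T_{\tau(R)}(R)\leq\varepsilon$ and monotonicity of expectation give \eqref{eq:uniform_tail_abandonment_bound}.
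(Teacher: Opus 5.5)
Your proposal is correct and follows the paper's proof essentially step for step. Both arguments use the same chain: the inclusion $\mathcal{E}^{\mathrm{full}}\subseteq\mathcal{E}^{\mathrm{ab}}$, the inclusion $\mathcal{E}^{\mathrm{ab}}\setminus\mathcal{E}^{\mathrm{full}}\subseteq\{Z^n\notin\mathcal{A}_{\tau(R)}(R)\}$, the tower property, and the identification $T^{\calC}_{\tau}=T_{\tau}$ under the matched-posterior assumption. The only cosmetic difference is in the second inclusion: the paper proves it via the contrapositive (if $Z^n\in\mathcal{A}_{\tau(R)}(R)$, the first hit lies in the prefix and both decoders return the same output), whereas you argue directly on the difference event. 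Your added care about measurability and the almost-everywhere conditioning step is a detail the paper leaves implicit.
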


\begin{proof}
See Appendix~\ref{app:proof_prop_abandonment_tail_loss}.
\end{proof}

\begin{remark}[Scope of the posterior-tail bound]
\label{rem:abandonment_bound_scope}
The event bound in~\eqref{eq:general_abandonment_bound} holds for any codebook and any measurable query limit that depends on the received vector. Identifying the codebook-conditioned tail probability with $T_{\tau(r)}(r)$ requires~\eqref{eq:matched_flip_posterior} to hold for $P_R$-almost every $r$. In particular, if $p_r$ is only a normalized full-space metric mass and does not equal the conditional distribution of the realized noise-effect pattern under the actual codebook model, then the bound in~\eqref{eq:posterior_tail_abandonment_bound} does not generally hold.
\end{remark}

\vspace{20pt}

% ======================================================================
\section{Additional Numerical Evaluation}
\label{sec:supp_additional_numerical}

\subsection{Random-Code Ensemble Performance and Paired Comparisons}
\label{ssec:random_code_results}

Figure~\ref{fig:main_results} summarizes results from $10^4$ simulated frames at each nominal SNR. A new systematic $[64,52]$ random linear code is generated for every frame. The reported results are therefore averaged over the specified random-code ensemble rather than over repeated transmissions using a fixed code. Among the methods shown, LP-GRAND had the lowest empirical BLER, mean number of valid codebook-membership tests, and sample $99$th percentile of the number of valid membership tests at every simulated SNR.

\begin{figure*}[t]
\centering
\begin{minipage}{0.32\linewidth}
\centering
\includegraphics[width=\linewidth]{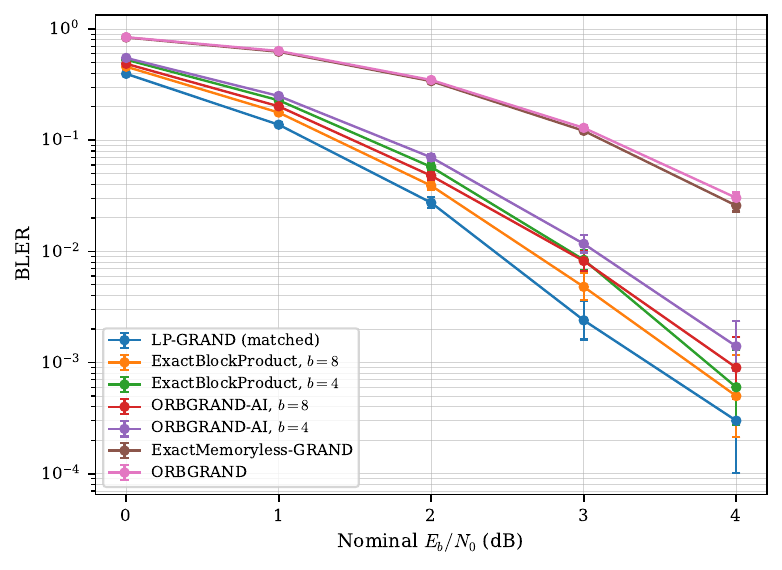}
\end{minipage}\hfill
\begin{minipage}{0.32\linewidth}
\centering
\includegraphics[width=\linewidth]{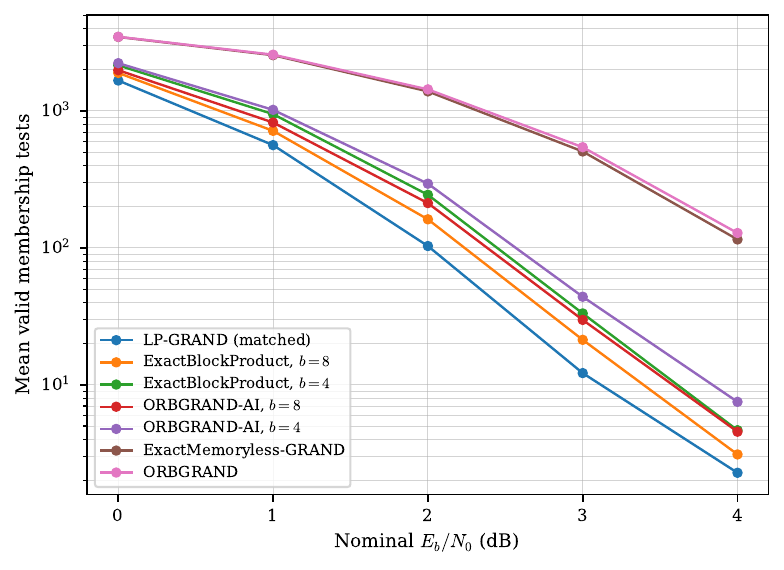}
\end{minipage}\hfill
\begin{minipage}{0.32\linewidth}
\centering
\includegraphics[width=\linewidth]{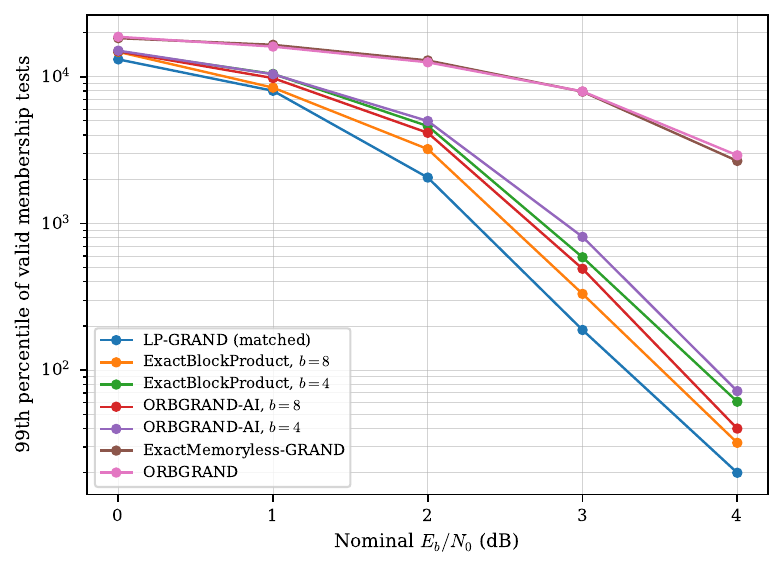}
\end{minipage}
\caption{Performance over the systematic $[64,52]$ random-linear-code ensemble with $\rho=0.5$, $q_{\max}=20{,}000$, and $10^4$ frames per nominal SNR. Left: BLER with $95\%$ Wilson score intervals. Center: sample mean number of valid codebook-membership tests. Right: sample $99$th percentile of the number of valid membership tests. Sample percentiles are computed by linear interpolation between adjacent order statistics. 
%ExactBlockProduct orders candidates by the block-product metric in~\eqref{eq:block_product_metric}, whereas our ORBGRAND-AI implementation orders them by the sum of their pooled block-substitution ranks. Neither ordering is generally identical to that induced by the matched full Gaussian energy.
}
\label{fig:main_results}
\end{figure*}

We also compare the methods on the same $10^4$ frames. For each frame, we generate one random code, one transmitted message, and one received vector and use them for all decoders. At nominal $(E_b/N_0)_{\mathrm{dB}}=2$, the empirical BLER differences are
\begin{align}
\widehat{P}_{\mathrm e}^{\mathrm{LP}} -\widehat{P}_{\mathrm e}^{\mathrm{EBP},\,b=8}
&=-0.0116,
\label{eq:paired_bler_difference_ebp}
\\
\widehat{P}_{\mathrm e}^{\mathrm{LP}} -\widehat{P}_{\mathrm e}^{\mathrm{AI},\,b=8}
&=-0.0205.
\label{eq:paired_bler_difference_ai}
\end{align}
Paired bootstrap resampling of the framewise error differences, using $10^4$ resamples, gives percentile $95\%$ intervals of $[-0.0141,-0.0090]$ and $[-0.0238,-0.0174]$, respectively.

\begin{table}[!b]
\centering
\caption{Empirical performance at nominal
$(E_b/N_0)_{\mathrm{dB}}=2$, $\rho=0.5$, and $q_{\max}=20{,}000$ over
$10^4$ paired random-code ensemble frames.}
\label{tab:main_2db}
\scriptsize
\setlength{\tabcolsep}{3.0pt}
\resizebox{\columnwidth}{!}{%
\begin{tabular}{lrrrr}
\toprule
Method
& Errors
& BLER
& \shortstack{Mean valid\\tests}
& \shortstack{$99$th percentile\\of valid tests}
\\
\midrule
LP-GRAND
& $274$
& $0.0274$
& $103.2$
& $2053.1$
\\
ExactBlockProduct, $b=8$
& $390$
& $0.0390$
& $161.7$
& $3210.2$
\\
ORBGRAND-AI, $b=8$
& $479$
& $0.0479$
& $212.4$
& $4145.0$
\\
ExactBlockProduct, $b=4$
& $575$
& $0.0575$
& $244.2$
& $4605.5$
\\
ORBGRAND-AI, $b=4$
& $701$
& $0.0701$
& $294.5$
& $4985.6$
\\
ExactMemoryless-GRAND
& $3400$
& $0.3400$
& $1389.5$
& $12902.2$
\\
ORBGRAND
& $3481$
& $0.3481$
& $1437.7$
& $12528.0$
\\
\bottomrule
\end{tabular}}
\end{table}

At nominal $(E_b/N_0)_{\mathrm{dB}}=4$, LP-GRAND, ExactBlockProduct with $b=8$, and our ORBGRAND-AI implementation with $b=8$ produced $3$, $5$, and $9$ block errors, respectively, among $10^4$ frames. These small error counts give substantial relative uncertainty and do not support a precise estimate of the SNR separation between the methods at a target BLER of $10^{-3}$. Appendix~\ref{app:exploratory_crossing} therefore reports the resulting log-linear interpolation only as an exploratory calculation.

Table~\ref{tab:baseline_work} reports sample means of selected operation counts at nominal $(E_b/N_0)_{\mathrm{dB}}=2$. Because the columns count different operations, they should not be interpreted as a common unit of computational complexity.

\begin{table}[!b]
\centering
\caption{Mean operation counts at nominal $(E_b/N_0)_{\mathrm{dB}}=2$. Entries are sample means over the same $10^4$ paired random-code ensemble frames used in Table~\ref{tab:main_2db}.}
\label{tab:baseline_work}
\scriptsize
\setlength{\tabcolsep}{2.8pt}
\resizebox{\columnwidth}{!}{%
\begin{tabular}{lrrcc}
\toprule
Method
& \shortstack{Valid\\tests}
& \shortstack{Queue\\removals}
& \shortstack{Local assignments\\evaluated}
& \shortstack{Suffix-DP\\state updates}
\\
\midrule
LP-GRAND
& $103.2$
& $1775.1$
& --
& $128$
\\
ExactBlockProduct, $b=8$
& $161.7$
& $516.9$
& $2048$
& --
\\
ORBGRAND-AI, $b=8$
& $212.4$
& $1276.8$
& $2048$
& --
\\
\bottomrule
\end{tabular}}
\end{table}

For LP-GRAND, queue removals count extractions of partial and complete trellis paths from the best-first priority queue. The suffix dynamic program updates $2^\nu$ states at each of the $n$ trellis stages, giving $n2^\nu=128$ state updates for $n=64$ and $\nu=1$. Each state update evaluates two outgoing branches, for a total of $n2^{\nu+1}=256$ branch evaluations. For the block-based generators, queue removals count subset-state extractions, including states rejected because more than one selected substitution belongs to the same block. The local-assignment count is $J2^b=8\cdot2^8=2048$, since all $2^b$ binary assignments are evaluated in each of the $J=8$ blocks. These counts describe different operations in the respective implementations and are not directly comparable.

To examine the effect of the abandonment-counter convention, Table~\ref{tab:orb_budget_modes} reports a separate result using $100$ common frames. In the principal experiments, $q_{\max}$ counts only conflict-free candidates submitted to the codebook-membership test. In Algorithm~1 of~\cite{duffy2023using}, the abandonment counter is incremented before the substitution-conflict check. In our implementation, the corresponding state-removal convention charges every priority-queue removal, including removal of a conflicting subset state, against the budget. Since the two conventions permit different numbers of codebook-membership tests for the same numerical value of $q_{\max}$, their BLER values are not obtained under an equal membership-test budget.

\begin{table}[!b]
\centering
\caption{Abandonment-counter result for our ORBGRAND-AI implementation with $b=8$ at nominal $(E_b/N_0)_{\mathrm{dB}}=2$ over $100$ common frames. State removals include conflicting subset states. The codeword-return rate is the fraction of frames in which the decoder returns a codeword before exhausting the budget.}
\label{tab:orb_budget_modes}
\scriptsize
\setlength{\tabcolsep}{2.6pt}
\resizebox{\columnwidth}{!}{%
\begin{tabular}{rlccrrr}
\toprule
$q_{\max}$
& Budget convention
& BLER
& \shortstack{Codeword-return\\rate}
& \shortstack{Mean valid\\tests}
& \shortstack{Mean state\\removals}
& \shortstack{Mean rejected\\conflicts}
\\
\midrule
$1000$
& Valid tests
& $0.08$
& $0.94$
& $108.5$
& $364.1$
& $255.7$
\\
$1000$
& State removals
& $0.10$
& $0.90$
& $66.2$
& $141.8$
& $75.6$
\\
$20000$
& Valid tests
& $0.07$
& $1.00$
& $396.2$
& $4695.8$
& $4299.6$
\\
$20000$
& State removals
& $0.07$
& $0.97$
& $242.5$
& $1076.4$
& $833.8$
\\
\bottomrule
\end{tabular}}
\end{table}

\subsection{Ordering Robustness Under Quantization and Precision-Matrix Mismatch}
\label{ssec:ordering_robustness}

Let $W$ and $\widehat W$ be two candidate metrics. Under the fixed deterministic tie-breaking rule, let $z_V^{(1)},z_V^{(2)},\ldots$ denote the configurations in nondecreasing $V$-cost, for $V\in\{W,\widehat W\}$. Define
\begin{equation}
\mathcal S_K(V) \triangleq \bigl\{z_V^{(1)},\ldots,z_V^{(K)}\bigr\},
\qquad V\in\{W,\widehat W\},
\end{equation}
and
\begin{equation}
\mathcal U_K(W,\widehat W) \triangleq \mathcal S_K(W)\cup\mathcal S_K(\widehat W).
\end{equation}
Thus, $\mathcal S_K(W)$ and $\mathcal S_K(\widehat W)$ describe top-$K$ membership, whereas comparisons within their union describe changes in relative order.

\begin{figure*}[t]
\centering
\begin{minipage}{0.4\linewidth}
\centering
\includegraphics[width=\linewidth]{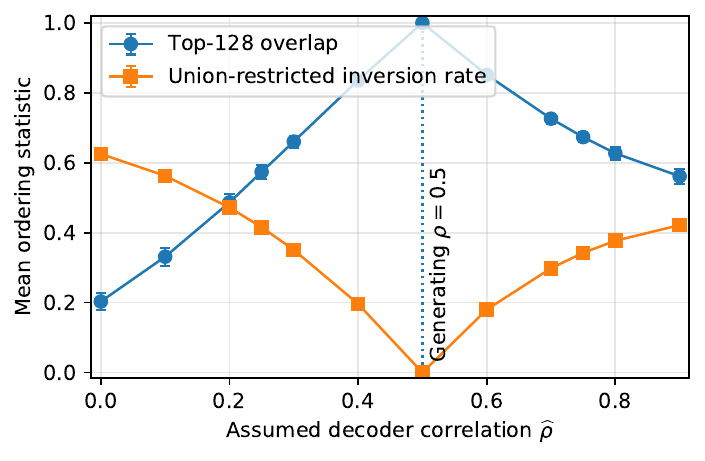}
\end{minipage}~
\begin{minipage}{0.4\linewidth}
\centering
\includegraphics[width=\linewidth]{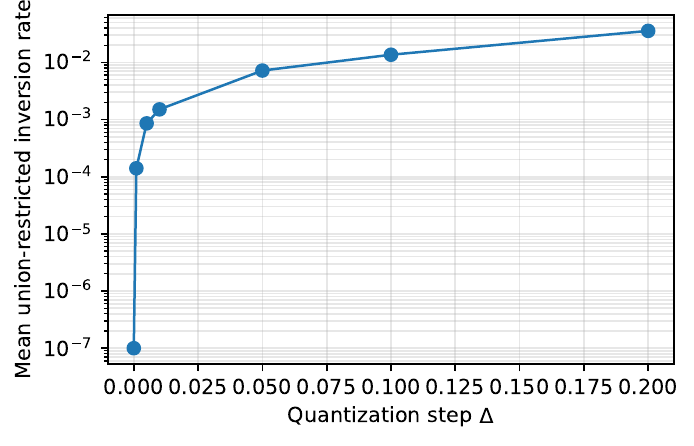}
\end{minipage}
\caption{Ordering sensitivity for $K=128$. Left: top-$128$ overlap and union-restricted inversion rate as functions of the assumed decoder correlation $\widehat\rho$, using $100$ common frames generated with $\rho=0.5$ at nominal $(E_b/N_0)_{\mathrm{dB}}=2.5$. Markers denote
sample means over frames, error bars denote nonparametric frame-bootstrap $95\%$ intervals, and the dotted vertical line marks the matched value. Right: mean union-restricted inversion rate as a function of the coefficient quantization step $\Delta$, using a separate set of $100$
common frames at the same operating point. The exact zero at $\Delta=0$ is displayed at $10^{-7}$ to permit the logarithmic vertical axis. Unordered pairs tied under either stored-coefficient metric are excluded from the inversion-rate denominator.}
\label{fig:robustness}
\end{figure*}

For distinct $z,z'\in\mathcal U_K(W,\widehat W)$, let
\begin{equation}
s_V(z,z') \triangleq \operatorname{sgn}\!\bigl(V(z)-V(z')\bigr),
\qquad V\in\{W,\widehat W\}.
\end{equation}
The comparable unordered pairs are
\begin{equation}
\begin{aligned}
\mathcal P_K(W,\widehat W) \triangleq \Bigl\{\{z,z'\}\ \Bigm|\ &z,z'\in\mathcal U_K(W,\widehat W),\ z\neq z',\\
&s_W(z,z')\neq0,\quad s_{\widehat W}(z,z')\neq0 \Bigr\}.
\end{aligned}
\end{equation}
A pair is therefore excluded when it is tied under either metric. When $\mathcal P_K(W,\widehat W)$ is nonempty, define the union-restricted inversion rate by
\begin{equation}
\begin{aligned}
I_K(W,\widehat W) \triangleq
\frac{\sum_{\{z,z'\}\in\mathcal P_K(W,\widehat W)}
\one\!\left\{ s_W(z,z')s_{\widehat W}(z,z')<0 \right\}}{|\mathcal P_K(W,\widehat W)|}
%
% \sum_{\{z,z'\}\in\mathcal P_K(W,\widehat W)}
% \\[-1mm]
% &\qquad\times
% \one\!\left\{
%  s_W(z,z')s_{\widehat W}(z,z')<0
% \right\}.
\end{aligned}
\label{eq:union_inversion_rate}
\end{equation}
The product of the two signs is unchanged when $z$ and $z'$ are interchanged, so~\eqref{eq:union_inversion_rate} is well defined for unordered pairs. We also use the top-$K$ overlap
\begin{equation}
O_K(W,\widehat W)
\triangleq
\frac{|\mathcal S_K(W)\cap\mathcal S_K(\widehat W)|}{K}.
\label{eq:topk_overlap}
\end{equation}
If position $K$ lies within a tie class under either metric, the top-$K$ set and hence $O_K(W,\widehat W)$ depend on the specified deterministic tie-breaking rule. By contrast, the inversion-rate denominator excludes pairs tied under either metric. For each frame, the number of pairs excluded because of a tie is $\binom{|\mathcal U_K(W,\widehat W)|}{2} -|\mathcal P_K(W,\widehat W)|$. Each reported mean is the arithmetic mean of the corresponding framewise statistic; in particular, the mean inversion rate is not formed by pooling all pair counts across frames.

All of our comparisons are exact with respect to the coefficients stored in IEEE~754 binary64 format. We convert every coefficient from both metrics to its exact dyadic-rational value and multiply all coefficients by one common positive power of two. Candidate-cost differences and ties are then evaluated using the resulting integer coefficients.

\paragraph{Coefficient quantization}
For this experiment, $W=W_r$ and $\widehat W=W_{r,\Delta}^{\mathrm q}$. We use $K=128$ and $100$ common systematic $[64,52]$ random-code ensemble frames at nominal $(E_b/N_0)_{\mathrm{dB}}=2.5$, $\rho=0.5$, and $q_{\max}=20{,}000$. The tested quantization steps are $\Delta\in\{0,0.001,0.005,0.01,0.05,0.1,0.2\}$. For a stored coefficient $a$ and $\Delta>0$, the implementation stores $\Delta\operatorname{rint}(a/\Delta)$, where $\operatorname{rint}$ uses round-to-nearest with ties to even; $\Delta=0$ denotes the unquantized metric.

At $\Delta=0.1$, the mean top-$128$ overlap is $0.9893$ and the mean union-restricted inversion rate is $1.36\times10^{-2}$. At $\Delta=0.2$, the corresponding values are $0.9697$ and $3.55\times10^{-2}$. The mean numbers of unordered pairs excluded because of an exact tie under at least one metric are $70.3$ and $153.0$, respectively. One block error was observed among the $100$ frames for the unquantized metric and for every tested positive value of $\Delta$, giving an empirical BLER of $0.01$ and a common $95\%$ Wilson interval of $[0.0018,0.0545]$. These error counts do not resolve a dependence of BLER on $\Delta$.

\paragraph{Precision-matrix mismatch}
We next compare the matched metric, constructed with the generating correlation $\rho=0.5$, with metrics constructed using $\widehat\rho\in \{0,0.1,0.2,0.25, 0.3,0.4,0.5, 0.6,0.7,0.75,0.8,0.9\}$. In our experiment we use $K=128$ and $100$ common systematic $[64,52]$ random-code ensemble frames at nominal $(E_b/N_0)_{\mathrm{dB}}=2.5$. At the matched value $\widehat\rho=0.5$, every frame has $O_{128}=1$ and $I_{128}=0$. At $\widehat\rho=0.25$, the mean overlap is $0.5738$, with a nonparametric frame-bootstrap $95\%$ interval of $[0.5532,0.5944]$, and the mean inversion rate is $0.4155$, with interval $[0.4011,0.4297]$. At $\widehat\rho=0.75$, the corresponding values are $0.6734$, with interval $[0.6590,0.6873]$, and $0.3424$, with interval $[0.3305,0.3551]$. No exact tie occurred under either stored-coefficient metric in this experiment, so every unordered pair in each union was comparable.

The measured ordering statistics are not symmetric about $\widehat\rho=\rho$. An order-preserving normalization of the first-order Gauss--Markov precision matrix gives a useful partial comparison. For $\eta\in[0,1)$, multiplying $Q(\eta)$ by $\sigma^2(1-\eta^2) / (1+\eta^2)$ preserves the candidate order. In the rescaled matrix, the interior diagonal entries equal one, the endpoint diagonal entries equal $1/(1+\eta^2)$, and the magnitude of each nearest-neighbor off-diagonal entry is
\begin{equation}
g(\eta) \triangleq \frac{\eta}{1+\eta^2}.
\end{equation}
In particular,
\begin{equation}
g(0.25)=\frac{4}{17},
\qquad
g(0.5)=\frac{2}{5},
\qquad
g(0.75)=\frac{12}{25},
\end{equation}
and hence
\begin{equation}
|g(0.75)-g(0.5)| =\frac{2}{25} <\frac{14}{85} =|g(0.25)-g(0.5)|.
\end{equation}
Thus, under this normalization, the interior coupling coefficient at $\widehat\rho=0.75$ is closer to its matched value than the coefficient at $\widehat\rho=0.25$. This comparison does not determine the complete candidate order because the endpoint diagonal entries and the received-vector-dependent unary coefficients also change with $\widehat\rho$.

A separate finite-budget experiment uses $50$ common frames at nominal $(E_b/N_0)_{\mathrm{dB}}=2.5$, generating correlation $\rho=0.5$, and $q_{\max}=20{,}000$. For $\widehat\rho\in\{0,0.25,0.5,0.75,0.9\}$, the observed error counts are $15$, $3$, $1$, $1$, and $1$, and the corresponding sample mean numbers of valid codebook-membership tests are $771.5$, $77.4$, $43.6$, $82.3$, and $110.2$. The empirical BLERs are therefore $0.30$, $0.06$, and $0.02$ for each of the final three settings. Their $95\%$ Wilson intervals are $[0.191,0.438]$, $[0.021,0.162]$, and $[0.004,0.105]$, respectively. The sample is too small to support a general performance ordering across mismatched decoder models.

\subsection{Calibration of the Random-Codebook First-Hit Estimate}
\label{ssec:first_hit_calibration}

\begin{figure*}[t]
\centering
\includegraphics[width=0.8\linewidth]
{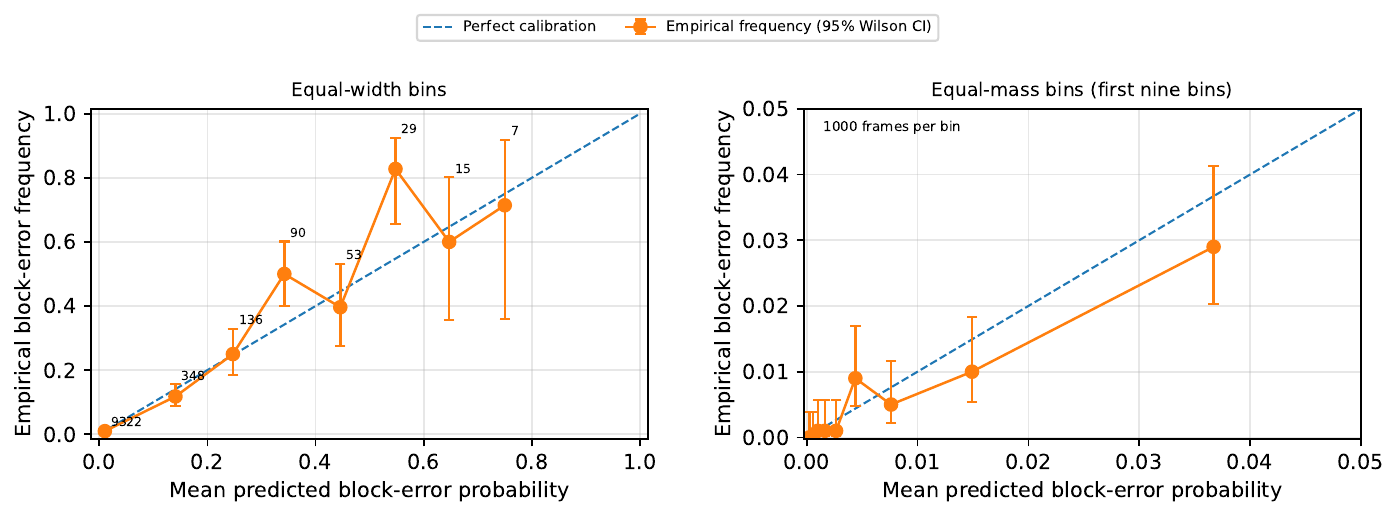}
\caption{Calibration of the random-codebook first-hit error estimate for $10^4$ systematic $[64,52]$ random-linear-code ensemble frames at nominal $(E_b/N_0)_{\mathrm{dB}}=2$ and $\rho=0.5$. Horizontal coordinates are the binwise mean predicted error probabilities, markers are the corresponding empirical block-error frequencies, and vertical bars are $95\%$ Wilson score intervals. The dashed line denotes perfect calibration. Left: ten equal-width bins on $[0,1]$; annotations give the number of frames in each nonempty bin. Right: the first nine equal-mass bins, each containing $1000$ frames. For the tenth equal-mass bin, the mean predicted error probability was $0.19357$, the empirical error frequency was $0.212$ $(212/1000)$, the $95\%$ Wilson score interval was $[0.1878,0.2384]$, and the prediction range was $[0.06310,0.79858]$.}
\vspace{30pt}
\label{fig:calibration_final}
\end{figure*}

We assess the first-hit correctness estimate in \eqref{eq:first_hit_correctness} using a separate experiment with $N_{\mathrm{cal}}=10^4$ frames from the systematic $[64,52]$ random-linear-code ensemble~\cite{yuan2025soft,duffy2024soft}. The operating point is nominal $(E_b/N_0)_{\mathrm{dB}}=2$, with $\rho=0.5$ and $q_{\max}=20{,}000$. For frame $i$, let $r_i$ denote the received vector, let $H_i$ denote the query index of the first candidate that belongs to the codebook, and let $\widehat C_i$ denote that codeword. We define
\begin{equation}
\widehat p_i \triangleq 1-P_{\mathrm{cor}}(H_i\mid r_i),
\qquad e_i \triangleq \one\!\left\{\widehat C_i\neq C_i\right\},
\label{eq:calibration_predictions}
\end{equation}
where $C_i$ is the transmitted codeword. Thus, $\widehat p_i$ is the predicted probability that the first codebook hit is incorrect, and $e_i$ is the observed block-error indicator. Every frame returned a codeword before reaching $q_{\max}$, so the calibration analysis contains no abandoned outcomes.

We evaluate the predictions using the Brier score \cite{glenn1950verification}
\begin{equation}
\operatorname{BS} \triangleq \frac{1}{N_{\mathrm{cal}}}
\sum_{i=1}^{N_{\mathrm{cal}}} \left(\widehat p_i-e_i\right)^2
\label{eq:calibration_brier}
\end{equation}
and the binary log loss
\begin{equation}
\operatorname{LL} \triangleq -\frac{1}{N_{\mathrm{cal}}} \sum_{i=1}^{N_{\mathrm{cal}}}
\left[ e_i\log \widehat p_i +(1-e_i)\log(1-\widehat p_i) \right].
\label{eq:calibration_logloss}
\end{equation}
The logarithm is natural. At the endpoints, we use the standard extended-real conventions $0\log 0\triangleq0$ and $-\log 0\triangleq+\infty$.

We also report reliability diagrams and expected calibration error (ECE) \cite{guo2017calibration} for two ten-bin partitions. For equal-width binning, the first nine bins are $[(b-1)/10,b/10)$, $b=1,\ldots,9$, and the final bin is $[0.9,1]$. For equal-mass binning, the frames are sorted by $\widehat p_i$ using a fixed deterministic tie-breaking rule and then divided into ten groups of $1000$ frames. For either partition, let $\mathcal I_b$ denote bin $b$, and let
\begin{equation}
\mathcal B
\triangleq
\left\{b\in\{1,\ldots,10\}:|\mathcal I_b|>0\right\}
\end{equation}
denote the nonempty bins. For each $b\in\mathcal B$, define
\begin{equation}
\overline p_b \triangleq \frac{1}{|\mathcal I_b|} \sum_{i\in\mathcal I_b}\widehat p_i,
\qquad
\overline e_b \triangleq \frac{1}{|\mathcal I_b|} \sum_{i\in\mathcal I_b}e_i.
\label{eq:calibration_bin_means}
\end{equation}
The ECE is
\begin{equation}
\operatorname{ECE} \triangleq \sum_{b\in\mathcal B} \frac{|\mathcal I_b|}{N_{\mathrm{cal}}}
\left|\overline e_b-\overline p_b\right|.
\label{eq:calibration_ece}
\end{equation}
Because ECE depends on the partition, we report it for both binning rules.

Across the $10^4$ frames, the empirical block-error rate was $0.0268$, and the sample mean of the predicted error probabilities was $0.02633$. The Brier score was $0.01945$, with a nonparametric frame-bootstrap $95\%$ interval of $[0.01729,0.02165]$, and the log loss was $0.07364$, with interval $[0.06601,0.08134]$. The equal-width ECE was $0.00407$, with interval $[0.00304,0.00732]$, and the equal-mass ECE was $0.00414$, with interval $[0.00197,0.00666]$.

Under equal-width binning, the lowest-probability bin contained $9322$ frames. Under equal-mass binning, each bin contained $1000$ frames, and the mean predicted error probability in each of the first nine bins was at most $0.03672$. In the tenth bin, the predicted probabilities ranged from $0.06310$ to $0.79858$, with mean $0.19357$. Its empirical error frequency was $0.212$ $(212/1000)$, with a $95\%$ Wilson score interval of $[0.1878,0.2384]$. The right panel of Fig.~\ref{fig:calibration_final} shows only the first nine equal-mass bins so that their low-probability range remains visible; the tenth-bin result is reported here and in the caption.

These results assess calibration for the tested systematic random-linear-code ensemble and operating point. They do not show that \eqref{eq:first_hit_correctness} is the exact codebook-conditioned posterior for an individual realized code. A systematic random linear codebook is not a uniformly sampled fixed-cardinality subset of $\{0,1\}^n$, and its codeword locations in the query order are dependent. The estimate in \eqref{eq:first_hit_correctness} should therefore be interpreted as an ensemble-based approximation for the structured codebooks considered here.

\vspace{20pt}

% =================================
\section{Proofs of the Main Results}
\label{sec:omitted_proofs}
% =================================

\subsection{Proof of Proposition~\ref{prop:quadratic_metric}}
\label{app:proof_prop_quadratic_metric}

\begin{proof}
The linear contribution in $z$ in~\eqref{eq:energy_expand} is
\begin{equation}
2zD_sQa = \sum_{i=1}^n 2s_i(Qa)_i z_i.
\label{eq:linear_energy_term}
\end{equation}
Since $Q$ is symmetric, $D_sQD_s$ is also symmetric. Hence,
\begin{align}
2zD_sQD_s z^{\mathsf T}
&= 2\sum_{i=1}^n s_i^2Q_{ii}z_i^2
+ 4\sum_{1\leq i<j\leq n} s_i s_jQ_{ij}z_i z_j \nonumber\\
&= 2\sum_{i=1}^n Q_{ii}z_i
+ 4\sum_{1\leq i<j\leq n} s_i s_jQ_{ij}z_i z_j,
\label{eq:quadratic_energy_term}
\end{align}
where the second equality follows from $s_i^2=1$ and $z_i^2=z_i$ for $z_i\in\{0,1\}$. Substituting \eqref{eq:linear_energy_term} and \eqref{eq:quadratic_energy_term} into~\eqref{eq:energy_expand} and collecting the unary and pairwise terms yields \eqref{eq:energy_decomposition}--\eqref{eq:beta}.
\end{proof}

\subsection{Proof of Proposition~\ref{prop:whitening_relation}}
\label{app:proof_prop_whitening_relation}

\begin{proof}
The identity in~\eqref{eq:whitened_metric} follows directly from
$Q=U^{\mathsf T}U$. Since $V=UN$ is a linear transformation of a
Gaussian random vector, it is Gaussian with zero mean and covariance
\begin{align}
\operatorname{Cov}(V)=
UQ^{-1}U^{\mathsf T}
= U\bigl(U^{-1}U^{-\mathsf T}\bigr)U^{\mathsf T}
= I_n.
\end{align}
Hence, $V\sim\mathcal{N}(0,I_n)$. Multiplying $R=x(C)+N$ by $U$ then yields~\eqref{eq:whitened_channel}.

For a symmetric positive-definite banded matrix, unpivoted Cholesky factorization preserves the bandwidth \cite{davis2006direct}. Therefore, if $Q$ has half-bandwidth at most $\nu$, then its upper-triangular Cholesky factor satisfies $U_{ij}=0$ whenever $j-i>\nu$, which establishes the stated bandedness property.
\end{proof}

\subsection{Proof of Theorem~\ref{thm:ml}}
\label{app:proof_thm_ml}

\begin{proof}
Define the codebook-compatible noise-effect set
\begin{equation}
\mathcal{Z}_{\calC}(y) \triangleq
\bigl\{ z\in\{0,1\}^n: y\xor z\in\calC \bigr\}.
\label{eq:feasible_noise_effect_set}
\end{equation}
Because the mapping $z\mapsto y\xor z$ is bijective on $\{0,1\}^n$ and $\calC$ is nonempty, $\mathcal{Z}_{\calC}(y)$ is nonempty. Hence, $q_\star$ in \eqref{eq:first_codebook_hit} exists under complete enumeration.
For every $z\in\mathcal{Z}_{\calC}(y)$,
\begin{equation}
f_{R|C}\bigl(r\mid y\xor z\bigr) =
K_\Sigma\exp\bigl\{-E_r(z)\bigr\},
\end{equation}
where
\begin{equation}
K_\Sigma \triangleq (2\pi)^{-n/2}\det(\Sigma)^{-1/2}
\end{equation}
is independent of $z$. Maximizing $f_{R|C}(r\mid c)$ over $c\in\calC$ is therefore equivalent, under the bijection $c=y\xor z$, to minimizing $E_r(z)$ over $\mathcal{Z}_{\calC}(y)$. Since $E_r(z)=E_r(0)+W_r(z)$ and $E_r(0)$ is independent of $z$, this is equivalent to minimizing $W_r(z)$ over $\mathcal{Z}_{\calC}(y)$.

By~\eqref{eq:order}, the first emitted pattern belonging to $\mathcal{Z}_{\calC}(y)$, namely $z^{(q_\star)}$, minimizes $W_r$ over
that set. Consequently, $\widehat{c}_{\mathrm{ML}}=y\xor z^{(q_\star)}$ maximizes $f_{R|C}(r\mid c)$ over $c\in\calC$ and is therefore an ML codeword. If several elements of $\mathcal{Z}_{\calC}(y)$ attain the minimum, the fixed deterministic tie-breaking rule selects one of the corresponding ML codewords. Because the prior on $\calC$ is uniform, the ML and MAP decision sets coincide.
\end{proof}

\subsection{Proof of Theorem~\ref{thm:trellis}}
\label{app:proof_thm_trellis}

\begin{proof}
For $\nu=0$, each length-$n$ path beginning at the initial state $\varnothing$ is uniquely determined by its branch-label sequence $z=(z_1,\ldots,z_n)$. Since the pairwise sum in \eqref{eq:banded_metric} is empty,
\begin{equation}
\sum_{i=1}^n\gamma_i(\varnothing,z_i) = \sum_{i=1}^n\alpha_i z_i = W_r(z).
\end{equation}
For $\nu\geq1$, each length-$n$ path beginning at $u_1=(0,\ldots,0)$ contains one branch label $z_i\in\{0,1\}$ at each layer and therefore determines a unique $z\in\{0,1\}^n$. Conversely, each $z\in\{0,1\}^n$, together with the initial state and the recursion $u_{i+1}=T(u_i,z_i)$, determines a unique length-$n$ path.
For $d=1,\ldots,\min\{\nu,i-1\}$, the state definition in
\eqref{eq:banded_state} gives $(u_i)_{\nu-d+1}=z_{i-d}$. Therefore,
\begin{equation}
\gamma_i(u_i,z_i) = \alpha_i z_i + \sum_{d=1}^{\min\{\nu,i-1\}} \beta_{i-d,i}z_{i-d}z_i.
\end{equation}
Summing over $i=1,\ldots,n$ includes each unary term exactly once and each pairwise term exactly once, at the layer indexed by the larger of its two coordinate indices. The identity in \eqref{eq:path_cost_identity} then follows from \eqref{eq:banded_metric}.
\end{proof}

\subsection{Proof of Corollary~\ref{cor:exact_trellis_queries}}
\label{app:proof_cor_exact_trellis_queries}

\begin{proof}
Theorem~\ref{thm:trellis} establishes a bijection between length-$n$ paths beginning at the initial state and noise-effect patterns $z\in\{0,1\}^n$, with the cost of the path labeled by $z$ equal to $W_r(z)$. Therefore, enumerating these paths in nondecreasing total cost,using the deterministic tie-breaking policy defining the sequence in~\eqref{eq:order}, produces the LP-GRAND ordering of noise-effect patterns and hence the corresponding candidate-word query order. Each layer contains at most $2^\nu$ states, giving at most $(n+1)2^\nu$ states across the $n+1$ state layers. Since each state in the first $n$ layers has at most two outgoing branches, the trellis contains at most $n2^{\nu+1}$ branches.
\end{proof}

\subsection{Proof of Theorem~\ref{thm:pathwidth}}
\label{app:proof_thm_pathwidth}

\begin{proof}
Assign each term in~\eqref{eq:general_pairwise_energy} to the first bag containing all variables on which that term depends. For each $i\in\mathcal{V}$, define
\begin{equation}
\tau(i) \triangleq \min\bigl\{ t:i\in\mathcal{B}_t \bigr\},
\end{equation}
and, for each $\{i,j\}\in\mathcal{E}$, define
\begin{equation}
\tau(i,j) \triangleq
\min\bigl\{ t:\{i,j\}\subseteq\mathcal{B}_t \bigr\}.
\end{equation}
These indices exist by the vertex- and edge-coverage properties of the path decomposition.
For each assignment $a\in\{0,1\}^{\mathcal{B}_t}$, define the local cost
\begin{equation}
\lambda_t(a) \triangleq
\sum_{\substack{i\in\mathcal{V}\\ \tau(i)=t}} \theta_i(a_i) +
\sum_{\substack{\{i,j\}\in\mathcal{E}\\ \tau(i,j)=t}}
\theta_{\{i,j\}}(a_i,a_j).
\label{eq:bag_local_cost}
\end{equation}
Layer $t$ contains one state for each assignment $a_t\in\{0,1\}^{\mathcal{B}_t}$. A directed edge connects a state $a_t$ in layer $t$ to a state $a_{t+1}$ in layer $t+1$ if and only if
\begin{equation}
a_t\big|_{\mathcal{B}_t\cap\mathcal{B}_{t+1}}
= a_{t+1}\big|_{\mathcal{B}_t\cap\mathcal{B}_{t+1}}.
\label{eq:bag_consistency}
\end{equation}
Add a source with an edge to every state in the first layer and a sink receiving an edge from every state in the last layer. Assign cost $\lambda_1(a_1)$ to the edge from the source to $a_1$, cost $\lambda_{t+1}(a_{t+1})$ to every edge entering state $a_{t+1}$ in layer $t+1$, and zero cost to every edge entering the sink.

Consider a source-to-sink path with bag assignments $a_1,\ldots,a_m$. For each $v\in\mathcal{V}$, the bags containing $v$ form an interval. The consistency condition in~\eqref{eq:bag_consistency} therefore forces all assignments associated with bags containing $v$ to agree on the value of $v$. Hence, the path defines a unique global assignment $z\in\{0,1\}^{\mathcal{V}}$. Conversely, each global assignment $z$ defines a unique source-to-sink path through $a_t=z\big|_{\mathcal{B}_t}$ for $t=1,\ldots,m$.
Each unary and pairwise term is assigned to exactly one bag and evaluated exactly once. Therefore, $\sum_{t=1}^m\lambda_t(a_t)= W(z)$. Since $|\mathcal{B}_t|\leq w+1$, each layer contains at most $2^{w+1}$ states.
\end{proof}

\subsection{Proof of Theorem~\ref{thm:quant}}
\label{app:proof_thm_quant}

\begin{proof}
Define the coefficient errors $e_i\triangleq\alpha_i-\Delta\widetilde{\alpha}_i$ for
$i=1,\ldots,n$ and $e_{ij}\triangleq\beta_{ij}-\Delta\widetilde{\beta}_{ij}$ for $\{i,j\}\in\mathcal{E}_Q$. By nearest-integer rounding, $|e_i|\leq\Delta/2$ and $|e_{ij}|\leq\Delta/2$. Therefore,
\begin{align}
\left| W_r(z)-W_{r,\Delta}^{\mathrm q}(z) \right|
&= \left| \sum_{i=1}^n e_i z_i + \sum_{\{i,j\}\in\mathcal{E}_Q}
e_{ij}z_i z_j \right|
\nonumber\\
&\leq
\sum_{i=1}^n |e_i|z_i + \sum_{\{i,j\}\in\mathcal{E}_Q} |e_{ij}|z_i z_j
\nonumber\\
&\leq \frac{\Delta}{2} \bigl(n+|\mathcal{E}_Q|\bigr)
= \frac{\Delta}{2}M_Q,
\end{align}
where the first inequality follows from the triangle inequality and the fact that $z_i,z_i z_j\in \{0,1\}$. The final inequality in~\eqref{eq:uniform_quantization_bound_banded} follows from \eqref{eq:coefficient_count_bound}.
\end{proof}

\subsection{Proof of Corollary~\ref{cor:quant_order}}
\label{app:proof_cor_quant_order}

\begin{proof}
By Theorem~\ref{thm:quant},
\begin{align}
&\left| \left[ W_{r,\Delta}^{\mathrm q}(z)
- W_{r,\Delta}^{\mathrm q}(z') \right]
- \left[ W_r(z)-W_r(z') \right] \right| \nonumber\\
&\qquad\leq \left| W_{r,\Delta}^{\mathrm q}(z)-W_r(z) \right|
+ \left| W_{r,\Delta}^{\mathrm q}(z')-W_r(z') \right| \nonumber\\
&\qquad \leq \Delta M_Q.
\end{align}
Under~\eqref{eq:uniform_order_condition}, the perturbation of the candidate-energy gap is strictly smaller than the magnitude of the unquantized gap. Therefore, the quantized and unquantized gaps have the same sign. Since $W_{r,\Delta}^{\mathrm q} =\Delta\widetilde{W}_{r,\Delta}$ and $\Delta>0$, the metrics $W_{r,\Delta}^{\mathrm q}$ and $\widetilde{W}_{r,\Delta}$ induce identical strict comparisons, which proves~\eqref{eq:uniform_order_preservation}.
\end{proof}

\subsection{Proof of Corollary~\ref{cor:pair_specific_quantization}}
\label{app:proof_cor_pair_specific_quantization}

\begin{proof}
Using the coefficient errors defined in the proof of Theorem~\ref{thm:quant}, the perturbation of the candidate-energy gap is
\begin{equation}
\sum_{i=1}^n e_i(z_i-z'_i)
+ \sum_{\{i,j\}\in\mathcal{E}_Q}
e_{ij} \bigl( z_i z_j-z'_i z'_j \bigr).
\end{equation}
The triangle inequality and the bounds $|e_i|\leq\Delta/2$ and $|e_{ij}|\leq\Delta/2$ yield \eqref{eq:pair_specific_gap_bound}. Under \eqref{eq:pair_specific_certificate}, the perturbation is strictly
smaller than the magnitude of the unquantized energy gap. Hence, the quantized and unquantized gaps have the same sign, which preserves the strict order of $z$ and $z'$.
\end{proof}

\subsection{Proof of Proposition~\ref{prop:precision_mismatch}}
\label{app:proof_prop_precision_mismatch}

\begin{proof}
For $z\in\{0,1\}^n$, define $v_z\triangleq r-x_z$. Then \begin{align}
\left| E_r(z)-\widehat{E}_r(z) \right|
& = \frac{1}{2} \left| v_z^{\mathsf T} (Q-\widehat{Q}) v_z \right|
\nonumber\\
&\leq \frac{1}{2} \lVert Q-\widehat{Q}\rVert_2 \lVert v_z\rVert_2^2
\nonumber\\
&\leq \frac{\varepsilon}{2} \lVert r-x_z\rVert_2^2.
\label{eq:candidate_energy_mismatch}
\end{align}
Since $x_z\in\{-1,+1\}^n$, we have $\lVert x_z\rVert_2=\sqrt{n}$. The triangle inequality therefore gives $\lVert r-x_z\rVert_2 \leq \lVert r\rVert_2+\sqrt{n}$, which proves \eqref{eq:uniform_energy_mismatch_bound}.
For any $z,z'\in\{0,1\}^n$,
\begin{align}
&\left| 
\left[ \widehat{E}_r(z)-\widehat{E}_r(z') \right]
- \left[ E_r(z)-E_r(z') \right] \right| \nonumber\\
&\qquad\leq
\left| \widehat{E}_r(z)-E_r(z) \right|
+ \left| \widehat{E}_r(z')-E_r(z') \right| \nonumber\\
&\qquad\leq \varepsilon \left( \lVert r\rVert_2+\sqrt{n} \right)^2.
\label{eq:pairwise_gap_mismatch}
\end{align}
If $\left|E_r(z)-E_r(z')\right| > \varepsilon\bigl(\lVert r\rVert_2+\sqrt{n}\bigr)^2$, then the bound in~\eqref{eq:pairwise_gap_mismatch} is strictly smaller than the magnitude of the matched-metric energy gap. Therefore, the matched and mismatched energy gaps have the same sign, which proves~\eqref{eq:mismatch_order_preservation}.
\end{proof}

\subsection{Proof of Proposition~\ref{prop:pair_specific_precision_mismatch}}
\label{app:proof_prop_pair_specific_precision_mismatch}

\begin{proof}
Let $a\triangleq r-x_z$, $b\triangleq r-x_{z'}$. Then
\begin{align}
&\left[ \widehat{E}_r(z)-\widehat{E}_r(z') \right]
- \left[ E_r(z)-E_r(z') \right] \nonumber\\
&\qquad= \frac{1}{2} \left( a^{\mathsf T}\Delta_Q a -
b^{\mathsf T}\Delta_Q b \right).
\label{eq:pair_gap_perturbation}
\end{align}
Since $\Delta_Q$ is symmetric,
\begin{equation}
a^{\mathsf T}\Delta_Q a - b^{\mathsf T}\Delta_Q b
= (a-b)^{\mathsf T}\Delta_Q(a+b).
\end{equation}
Moreover, $a-b=x_{z'}-x_z$, $a+b=2r-x_z-x_{z'}$. Therefore,
\begin{align}
&\left| \left[ \widehat{E}_r(z)-\widehat{E}_r(z') \right]
- \left[ E_r(z)-E_r(z') \right] \right| \nonumber\\
&\qquad= \frac{1}{2} \left| (x_z-x_{z'})^{\mathsf T} \Delta_Q
(x_z+x_{z'}-2r) \right| \nonumber\\
&\qquad\leq
\frac{\lVert\Delta_Q\rVert_2}{2} \lVert x_z-x_{z'}\rVert_2 \lVert x_z+x_{z'}-2r\rVert_2,
\end{align}
which proves \eqref{eq:pair_specific_precision_bound}. Under \eqref{eq:pair_specific_precision_certificate}, the perturbation is strictly smaller than the magnitude of the matched-metric energy gap. Therefore, the matched and mismatched energy gaps have the same sign, which preserves the strict order of $z$ and $z'$. Finally, since $x_z=s\odot\bigl(\mathbf{1}_n-2z^{\mathsf T}\bigr)$, the vector $x_z-x_{z'}$ has magnitude $2$ precisely at the coordinates where $z$ and $z'$ differ and is zero elsewhere. Hence, $\lVert x_z-x_{z'}\rVert_2^2 = 4d_{\mathrm H}(z,z')$, which proves~\eqref{eq:bpsk_hamming_distance}.
\end{proof}

\subsection{Proof of Proposition~\ref{prop:fullspace_weights}}
\label{app:proof_prop_fullspace_weights}

\begin{proof}
By Theorem~\ref{thm:trellis}, each length-$n$ trellis path beginning at the initial state corresponds to exactly one $z\in\{0,1\}^n$, and its total path cost is $W_r(z)$. Hence,
\begin{equation}
\sum_{\substack{P:\,\text{$n$-branch trellis path}\\
\text{beginning at }u_1}}
\!\!\!\!\!\!\!\!\!
\exp\bigl\{-\operatorname{cost}(P)\bigr\}
= \!\! \sum_{z\in\{0,1\}^n} \! \exp\bigl\{-W_r(z)\bigr\}
= Z_r.
\label{eq:path_partition_identity}
\end{equation}
We prove by induction on $i$ that $F_i(v)$ equals the sum of $\exp\{-\operatorname{cost}(P)\}$ over all length-$i$ partial paths beginning at $u_1$ and ending in state $v$. The initialization in~\eqref{eq:forward_initialization} establishes the claim for $i=0$. The recursion in~\eqref{eq:forward_partition_recursion} extends each length-$(i-1)$ partial path along every admissible outgoing branch and multiplies its weight by $\exp\{-\gamma_i(u,b)\}$. Thus, it accounts for every length-$i$ partial path exactly once. Summing $F_n(v)$ over all states $v\in\mathcal{U}_\nu$ gives \eqref{eq:path_partition_identity} and hence \eqref{eq:trellis_partition}.
The state space contains $2^\nu$ elements, including the singleton state space for $\nu=0$. In the full-state implementation, each state has two outgoing branches at each of the $n$ layers, yielding the stated branch-update count. Since $F_i$ depends only on $F_{i-1}$, two consecutive forward layers suffice. Finally, \eqref{eq:emitted_fullspace_weight} follows directly from~\eqref{eq:fullspace_weight}.
\end{proof}

\subsection{Proof of Theorem~\ref{thm:single_app}}
\label{app:proof_thm_single_app}

\begin{proof}
All probabilities in this proof are conditioned on $R=r$. Conditional on $J=j$, the occupancy set $\mathcal I$ is uniformly distributed over the $\binom{L-1}{M-1}$ subsets of $\{1,\ldots,L\}\setminus\{j\}$ having cardinality $M-1$. The event that the first hit occurs at position $q$ and corresponds to the transmitted codeword is $\{H=q,J=q\} = \left\{ J=q,\, \mathcal I\cap\{1,\ldots,q-1\}=\varnothing \right\}$. Conditional on $J=q$, all $M-1$ elements of $\mathcal I$ must belong to $\{q+1,\ldots,L\}$, which contains $L-q$ positions. Therefore,
\begin{equation}
\Pr \left\{H=q,J=q\mid R=r\right\} = w_q \frac{\binom{L-q}{M-1}}{\binom{L-1}{M-1}}
= \frac{A_q}{\binom{L-1}{M-1}}.
\label{eq:correct_first_hit_probability}
\end{equation}
For a fixed $t>q$, the event that the first hit occurs at position $q$ and is nontransmitted requires $J=t$, $\, q\in\mathcal I$, $\, \mathcal I\cap\{1,\ldots,q-1\}=\varnothing$.

Given $J=t$ and $q\in\mathcal I$, the remaining $M-2$ elements of $\mathcal I$ must be selected from $\{q+1,\ldots,L\}\setminus\{t\}$, which contains $L-q-1$ positions. Hence,
\begin{equation}
\Pr \left\{H=q,J=t\mid R=r\right\} = w_t \frac{\binom{L-q-1}{M-2}}{\binom{L-1}{M-1}},
\quad t=q+1,\ldots,L.
\end{equation}
Summing over $t=q+1,\ldots,L$ yields
\begin{equation}
\Pr\left\{H=q,J>q\mid R=r\right\} = T_q \frac{\binom{L-q-1}{M-2}}{\binom{L-1}{M-1}}
= \frac{B_q}{\binom{L-1}{M-1}}.
\label{eq:incorrect_first_hit_probability}
\end{equation}
Since $H=q$ implies $J\geq q$, the disjoint events $\{H=q,J=q\}$ and $\{H=q,J>q\}$ partition $\{H=q\}$. Consequently,
\begin{equation}
\Pr\left\{H=q\mid R=r\right\} = \frac{A_q+B_q}{\binom{L-1}{M-1}}.
\end{equation}
Dividing~\eqref{eq:correct_first_hit_probability} by this probability proves~\eqref{eq:first_hit_correctness}.
\end{proof}

\subsection{Proof of Proposition~\ref{prop:abandonment_tail_loss}}
\label{app:proof_prop_abandonment_tail_loss}

\begin{proof}
Let $\widehat{C}_{\mathrm{full}}$ and $\widehat{C}_{\mathrm{ab}}$ denote the outputs of the two decoders, with $\mathsf{ABANDON}$ distinct from every codeword, and define $\mathcal{E}_{\mathrm{full}} \triangleq \left\{\widehat{C}_{\mathrm{full}}\neq C\right\}$, $\mathcal{E}_{\mathrm{ab}} \triangleq \left\{\widehat{C}_{\mathrm{ab}}\neq C\right\}$. If the full decoder returns an incorrect codeword, the decoder with abandonment either returns the same first codebook hit or abandons before reaching it. Hence, $\mathcal{E}_{\mathrm{full}} \subseteq \mathcal{E}_{\mathrm{ab}}$.

On the event $Z^n\in\mathcal{A}_{\tau(R)}(R)$, the query corresponding to the transmitted codeword lies within the queried prefix. Therefore, the first codebook hit also lies within this prefix. Since the two decoders use the same query order, tie-breaking rule, and membership test, they return the same codeword. Thus,
\begin{equation}
\mathcal{E}_{\mathrm{ab}} \setminus \mathcal{E}_{\mathrm{full}}
\subseteq
\left\{ Z^n\notin\mathcal{A}_{\tau(R)}(R) \right\}.
\label{eq:additional_error_inclusion}
\end{equation}
Together with $\mathcal{E}_{\mathrm{full}}\subseteq\mathcal{E}_{\mathrm{ab}}$, this gives
\begin{equation}
P_{\mathrm e}^{\mathrm{ab}} - P_{\mathrm e}^{\mathrm{full}}
= \Pr\left\{ \mathcal{E}_{\mathrm{ab}} \setminus \mathcal{E}_{\mathrm{full}} \right\}
\leq \Pr \left\{ Z^n\notin\mathcal{A}_{\tau(R)}(R) \right\},
\end{equation}
which proves~\eqref{eq:general_abandonment_bound}.

By the law of total probability,
\begin{align}
\Pr \left\{ Z^n\notin\mathcal{A}_{\tau(R)}(R) \right\}
&= \mathbb{E} \left[ \Pr \left\{ Z^n\notin\mathcal{A}_{\tau(R)}(R) \mid R \right\} \right] \nonumber\\
&= \mathbb{E} \left[ T_{\tau(R)}^{\calC}(R) \right],
\end{align}
which proves~\eqref{eq:codebook_tail_abandonment_bound}. Under \eqref{eq:matched_flip_posterior}, $T_{\tau}^{\calC}(r)=T_{\tau}(r)$ for $P_R$-almost every $r$, yielding \eqref{eq:posterior_tail_abandonment_bound}. The final bound follows from $T_{\tau(r)}(r)\leq\varepsilon$ almost surely.
\end{proof}

\vspace{20pt}

% ======================================================================
\section{Extended Related Work}
\label{app:sec:related_work}
% ======================================================================

\paragraph{Code-structure-dependent soft decoding}
Many soft decoders are defined for a prescribed representation of the code. Belief-propagation decoding of LDPC codes operates on a parity-check factor graph, turbo decoding uses the constituent-code trellises and the interleaver, and CRC-aided successive-cancellation list decoding uses the polar transform and the CRC constraint~\cite{gallager1962low,kschischang2001factor,berrou1993near, arikan2009channel, niu2012crc,tal2015list}. Their message updates, state recursions, or list expansions therefore depend on the corresponding code representation. Consequently, a receiver that supports several code families may require distinct decoding algorithms or code-specific implementations.

\vspace{4pt}

\paragraph{Ordered-statistics decoding}
Ordered-statistics decoding (OSD) is a reliability-based soft-decision decoding method for binary linear block codes~\cite{fossorier1995soft, yue2025guesswork}. OSD ranks the received coordinates by decreasing reliability magnitude, applies the induced permutation to the columns of a generator matrix, and performs Gaussian elimination, with additional column exchanges when required, to obtain a systematic generator matrix whose information positions form a most reliable basis. The standard unpruned order-$m$ reprocessing stage applies test error patterns of Hamming weight at most $m$ to the hard decisions on the $k$ basis positions and re-encodes the resulting information vectors to produce candidate codewords. For $0\leq m\leq k$, this reprocessing set contains $\sum_{j=0}^{m}\binom{k}{j}$ test error patterns. Increasing $m$ enlarges the reprocessing set and can reduce the performance gap to ML decoding, but increases the number of pattern evaluations and re-encodings. Standard OSD also requires a binary generator-matrix representation and a reliability-dependent permutation and basis transformation for each received block. OSD is therefore applicable across binary linear codes, but its candidate-generation stage depends directly on the generator matrix; unlike GRAND, it does not access the codebook solely through a membership test.

\vspace{4pt}

\paragraph{GRAND}
GRAND formulates ML decoding as noise-effect enumeration rather than direct codeword search~\cite{duffy2019capacity}. For a binary additive channel, possibly with memory, let $y$ denote the hard-decision received word and let $z$ denote a candidate binary noise effect. GRAND tests whether the induced word $y\xor z$ belongs to the codebook. For equiprobable codewords, if the noise effects are queried in nonincreasing order of their joint block probabilities under the channel model, with ties resolved by a fixed deterministic rule, then, without abandonment, the first query satisfying $y\xor z\in\calC$ yields an ML codeword~\cite{duffy2019capacity}. GRAND with abandonment (GRANDAB) terminates after a prescribed number of membership queries. Under the random-codebook asymptotics of~\cite{duffy2019capacity}, GRANDAB remains capacity-achieving at rates below capacity when the query limit is chosen with an exponent strictly exceeding the entropy rate of the noise process. GRAND thereby separates the channel-dependent ordering of noise effects from the code-dependent membership test.

\vspace{4pt}

\paragraph{Soft-input GRAND and ORBGRAND}
For additive memoryless channels, SGRAND uses real-valued channel observations to construct, separately for each received vector, an exact likelihood ordering of binary noise effects~\cite{solomon2020soft}. Without abandonment, querying candidates in this order yields ML decoding and provides an exact soft-input reference for the memoryless setting. The original construction maintains a dynamic candidate set, which can be implemented using a max-heap, and updates the set after each query. Because the candidate priorities depend on the received vector, this construction does not provide the predetermined rank-domain query schedule used for direct parallel pattern generation in ORBGRAND.

ORBGRAND instead sorts the coordinates in nondecreasing order of reliability magnitude, with the least reliable coordinate assigned rank one, and combines the resulting permutation with an integer-valued model of the ordered reliability curve~\cite{duffy2022ordered}. Let $\tilde z_j$ indicate whether the coordinate with reliability rank $j$ is flipped. Basic ORBGRAND models the ordered reliability curve by a zero-intercept line, yielding the logistic weight $w_{\mathrm L}(\tilde z) =\sum_{i=1}^{n} i \, \tilde z_i$. Candidate noise effects are generated in nondecreasing order of $w_{\mathrm L}(\tilde z)$, with a fixed deterministic rule for equal weights. Full ORBGRAND replaces the single-slope model with a quantized piecewise-linear approximation to the ordered reliability curve while retaining integer-partition-based pattern generation~\cite{duffy2022ordered}. Integrated implementations have demonstrated a multi-code, multi-rate hard-decision GRAND decoder and a universal soft-decision ORBGRAND decoder~\cite{riaz2021multi,riaz2024sub}. A finite-blocklength analysis of ORBGRAND derives an ORBGRAND-specific random-coding union-type achievability bound on the ensemble-average error probability and a second-order achievable-rate expansion with a corresponding normal approximation for general binary-input memoryless channels~\cite{li2026finite}.

For a fixed reliability permutation, the ORBGRAND metric is additive across the ranked coordinates and contains no pairwise terms. It therefore contains no terms that represent pairwise interactions in a matched finite-memory likelihood metric. Moreover, coordinates that interact locally under the channel model need not have adjacent reliability ranks. Although the stored permutation maps each rank-domain pattern back to the corresponding physical coordinates, the rank-additive metric itself does not encode adjacency-dependent interactions.

\paragraph{Correlation-aware GRAND and exact path enumeration}
Several GRAND variants incorporate channel memory directly rather than use interleaving to separate correlated symbols in the decoder coordinate order~\cite{an2022keep, duffy2023using, duffy2025decoding}. ORBGRAND-AI partitions the received sequence into nonoverlapping blocks, evaluates channel-model probabilities for alternative substitutions within each block, and approximates the posterior probability of a complete noise-effect sequence by a product of blockwise terms. It therefore accounts for within-block dependence but omits cross-block dependence in the approximating metric. Universal noise-guessing decoders have also been developed for unknown discrete additive channels generated by finite-state unifilar processes. These decoders achieve random-coding universality relative to ML decoding and admit upper bounds on their average guessing complexity~\cite{miyamoto2025universal_journal}. That setting differs from the known continuous Gaussian precision model considered here. For memoryless soft inputs, an error-pattern-tree representation has been used to parallelize SGRAND while preserving ML optimality~\cite{wan2025parallelism}. That construction addresses parallel enumeration for a memoryless likelihood metric and does not represent the pairwise terms induced by a non-diagonal Gaussian precision matrix.

Li and Zhang introduced SGRAND-ISI for a specified finite-block linear Gaussian ISI channel, together with lower-complexity ORB-type approximations based on sequence reliability~\cite{li2026grand_isi}. Under complete, untruncated enumeration and without abandonment, SGRAND-ISI is equivalent to ML decoding for its specified finite-block ISI metric. The relation between that model and the present one is stated most precisely through their finite-block negative log-likelihoods. Consider $r=x+n$, $n\sim\mathcal N(0,Q^{-1})$, $Q\succ0$, and let \(A\in\mathbb R^{n\times n}\) satisfy \(Q=A^{\mathsf T}A\). The candidate-dependent part of the negative log-likelihood is $\frac{1}{2}(r-x)^{\mathsf T}Q(r-x) = \frac{1}{2}\lVert Ar-Ax\rVert_2^2$. Thus, defining $\bar r=Ar$ and $v=An$ gives the equivalent transformed model $\bar r=Ax+v$, $v\sim\mathcal N(0,I_n)$. If \(Q\) is banded, its upper-triangular Cholesky factor can be chosen with the same half-bandwidth. The resulting whitening transform is banded and triangular but need not be Toeplitz. A banded precision matrix therefore admits a finite-memory triangular whitening transform, but it need not correspond to a time-invariant finite-impulse-response convolution under a prescribed finite-block boundary convention.

LP-GRAND and SGRAND-ISI are finite-block metric-equivalent when their candidate-dependent negative log-likelihoods differ only by a positive scale and a candidate-independent term. Under this condition, they necessarily induce the same candidate-word order. Agreement of an interior impulse response or an asymptotic spectrum alone does not establish finite-block metric equivalence because boundary terms can alter candidate energies. Accordingly, the stationary AR(1) colored-noise experiment used for the principal BLER results is not treated as a reproduction of the finite-block ISI model in~\cite{li2026grand_isi}, and no novelty is claimed for exact ML GRAND when the two finite-block metrics coincide. The present formulation constructs the binary energy directly from a specified sparse precision matrix, supports exact enumeration using a validated low-width decomposition, uses the same graphical representation for min-sum and sum-product recursions, and derives order-stability bounds for coefficient quantization and precision-matrix mismatch.

Ranked path enumeration on trellises is classical. List-Viterbi algorithms and general $K$-shortest-path methods provide several constructions for ordered path lists~\cite{seshadri1994list,eppstein1998finding}. The numerical baseline used here retains up to the $K$ lowest-cost partial paths ending at each trellis state after every stage. We therefore refer to it as a \emph{batch state-list $K$-best Viterbi} dynamic program rather than as the classical serial list-Viterbi algorithm. LP-GRAND uses a suffix-guided priority queue to enumerate complete paths on the precision-induced layered graph, with priorities formed using suffix costs computed by dynamic programming. The method-specific component is the construction of the Gaussian binary-energy graph and its use with a codebook-membership test, not generic $K$-best path enumeration. Section~\ref{sec:numerical_evaluation} also includes an internally implemented first-order burst reference for an explicitly specified square two-tap finite-block ISI matrix. This reference is not an implementation of the published SGRAND-ISI algorithm.

Recent ORB-type analysis formalizes a class of GRAND algorithms whose test order depends only on the reliability ranking and derives exact finite-budget expressions for BLER, the stopping-time distribution, and the average number of tests for random-code ensembles~\cite{wan2026orbtype}. Within the error-pattern set under consideration, testing patterns in nonincreasing order of average guessing posterior minimizes both the ensemble-average BLER and the average number of tests. LP-GRAND lies outside this rank-only class because its query order depends on the received magnitudes and the pairwise couplings induced by the precision matrix. Consequently, results obtained using different finite-block channel matrices, codes, or stopping rules do not constitute a controlled pointwise comparison; such a comparison requires these elements to coincide.

\paragraph{Soft-output GRAND}
SOGRAND augments a soft-input GRAND search with posterior estimates of the correctness of the returned codewords and of the probability that the transmitted codeword has not yet been identified~\cite{yuan2025soft, duffy2024soft}. In list decoding, it estimates the posterior probability associated with each returned codeword and the probability that the correct codeword is absent from the returned list. These blockwise quantities can also be converted into bitwise soft outputs. The single-output first-hit expression in Section~\ref{sec:softout} addresses this framework under its random-codebook occupancy model. LP-GRAND's sum-product recursion supplies the full-space partition function required to normalize the noise-effect weights and compute the residual ambient probability mass outside the enumerated set. The occupancy model is then applied to estimate codeword-level posterior probabilities from these ambient probabilities. No new general SOGRAND identity is introduced here. In this paper, we focus on exact likelihood-ordered enumeration for the pairwise Gaussian binary energy induced by a sparse precision matrix.

\vspace{15pt}

\section{Exploratory BLER-Target Interpolation}
\label{app:exploratory_crossing}

Table~\ref{tab:snr_targets_appendix} reports descriptive estimates of the nominal SNR at which the empirical BLER reaches $p_\star=10^{-3}$. Let $\gamma_a<\gamma_b$ be two adjacent simulated values of $(E_b/N_0)_{\mathrm{dB}}$, and suppose that their empirical BLERs satisfy
\begin{equation}
0 < \widehat{P}_{\mathrm e,b} < p_\star < \widehat{P}_{\mathrm e,a}.
\end{equation}
Linear interpolation of $\log_{10}\widehat{P}_{\mathrm e}$ between these two SNR values gives
\begin{equation}
\widehat{\gamma}_\star \triangleq \gamma_a
+ \frac{\log_{10}p_\star-\log_{10}\widehat{P}_{\mathrm e,a}}{\log_{10}\widehat{P}_{\mathrm e,b} - \log_{10}\widehat{P}_{\mathrm e,a}}
\left(\gamma_b-\gamma_a\right).
\label{eq:exploratory_snr_crossing}
\end{equation}

Only $3$, $5$, and $9$ block errors were observed at $4$~dB for LP-GRAND, ExactBlockProduct with $b=8$, and our ORBGRAND-AI implementation with $b=8$, respectively. The interpolated values are therefore sensitive to the empirical BLERs at this sparse-error endpoint and should not be interpreted as precisely estimated SNR gains at BLER $10^{-3}$.

\begin{table}[h]
\centering
\caption{Descriptive log-linear interpolation at $\mathrm{BLER}=10^{-3}$. The final column gives the difference between each interpolated crossing and that of LP-GRAND.}
\label{tab:snr_targets_appendix}
\scriptsize
\setlength{\tabcolsep}{3.5pt}
\resizebox{\columnwidth}{!}{%
\begin{tabular}{lcc}
\toprule
Method
& \shortstack{Interpolated nominal\\
$(E_b/N_0)_{\mathrm{dB}}$}
& \shortstack{Difference from\\LP-GRAND (dB)}
\\
\midrule
LP-GRAND
& $3.42$
& --
\\
ExactBlockProduct, $b=8$
& $3.69$
& $0.27$
\\
ORBGRAND-AI, $b=8$
& $3.95$
& $0.53$
\\
\bottomrule
\end{tabular}%
}
% \vspace{30pt}
\end{table}

\vspace{25pt}

\section{First-Order Gauss--Markov Precision Matrix}
\label{app:gm_precision}

Let $N=(N_1,\ldots,N_n)^{\mathsf T} \sim\mathcal{N}(\mathbf{0}_n,\Sigma)$, where $\Sigma_{ij}=\sigma^2\rho^{|i-j|}$, $\sigma^2>0$, and $|\rho|<1$. For $n\geq2$, this distribution is equivalently generated by $N_1\sim\mathcal{N}(0,\sigma^2)$, $N_i=\rho N_{i-1}+V_i$, $i=2,\ldots,n$, where $V_2,\ldots,V_n \stackrel{\mathrm{i.i.d.}}{\sim} \mathcal{N}\!\left(0,\sigma^2(1-\rho^2)\right)$ and the innovations are independent of $N_1$. For \(u=(u_1,\ldots,u_n)^{\mathsf T}\), the negative log-density can be written as
\begin{equation}
-\log f_N(u) = C + \frac{u_1^2}{2\sigma^2} +
\sum_{i=2}^{n} \frac{(u_i-\rho u_{i-1})^2}{2\sigma^2(1-\rho^2)},
\label{eq:gm_negative_log_density}
\end{equation}
where $C$ is independent of $u$. Expanding the quadratic terms in \eqref{eq:gm_negative_log_density} and collecting the coefficients in $-\log f_N(u) = C+\frac{1}{2}u^{\mathsf T}Qu$ gives
\begin{align}
Q_{11}=Q_{nn} &= \frac{1}{\sigma^2(1-\rho^2)}, \\
Q_{ii} &= \frac{1+\rho^2}{\sigma^2(1-\rho^2)},
\qquad i=2,\ldots,n-1, \\
Q_{i,i+1}=Q_{i+1,i} &= -\frac{\rho}{\sigma^2(1-\rho^2)},
\qquad i=1,\ldots,n-1.
\end{align}
The second expression is present only for \(n\geq3\), and all remaining entries are zero. Therefore, \(Q=\Sigma^{-1}\) is the tridiagonal precision matrix given in~\eqref{eq:gm_precision}.

\vspace{30pt}

\section{Additional Query-Budget and Quantization Results}
\label{app:extra_qmax_quant}

The top panel of Fig.~\ref{fig:appendix_qmax_quant} shows the codeword-return rate as a function of the valid membership-test budget. For a given $q_{\max}$, the codeword-return rate is the fraction of frames for which the decoder returns a codeword within at most $q_{\max}$ valid membership tests. The values are computed from the recorded first-hit query counts of the verified paired $10^4$-frame experiment at nominal $(E_b/N_0)_{\mathrm{dB}}=2$ and $\rho=0.5$; no additional decoding runs are performed.

\begin{figure}[!t]
\centering
\begin{minipage}{0.96\linewidth}
\centering
\includegraphics[width=\linewidth]
{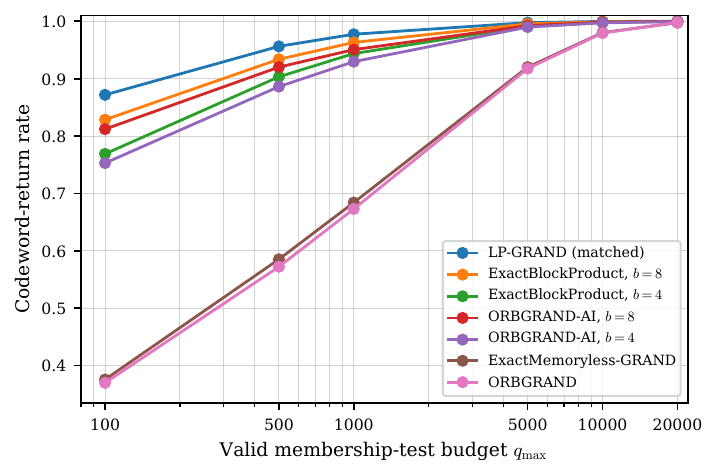}
\end{minipage}

\vspace{1mm}

\begin{minipage}{0.97\linewidth}
\centering
\includegraphics[width=\linewidth]
{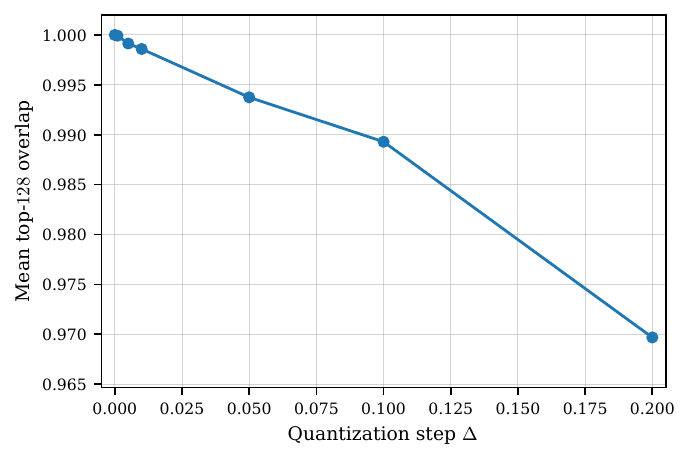}
\end{minipage}
\caption{Additional decoding and ordering results. Top: codeword-return rate as a function of the valid membership-test budget, shown on a
logarithmic horizontal axis and computed from the recorded first-hit query counts of the verified paired $10^4$-frame experiment at nominal
$(E_b/N_0)_{\mathrm{dB}}=2$ and $\rho=0.5$. Bottom: sample mean of the framewise top-$128$ overlap under coefficient quantization over $100$ common frames. The corresponding mean union-restricted inversion rates are reported in Fig.~\ref{fig:robustness}.}
\label{fig:appendix_qmax_quant}
\end{figure}

For LP-GRAND, increasing $q_{\max}$ from $100$ to $5000$ increases the codeword-return rate from $0.8718$ to $0.9977$ and decreases the empirical BLER from $0.1336$ to $0.0281$. The codeword-return rate and BLER are not complementary: an abandoned frame is counted as a block error, but a decoder that returns a codeword can also be in error if the first codebook hit is not the transmitted codeword.

The bottom panel shows the effect of coefficient quantization on the top-$128$ candidate set. For each of the same $100$ frames used in Fig.~\ref{fig:robustness}, we compute $O_{128}(W_r,W_{r,\Delta}^{\mathrm q})$ as defined in \eqref{eq:topk_overlap}. The plotted value is the arithmetic mean of this framewise overlap. Thus, the overlap measures agreement between the two top-$128$ candidate sets, whereas the union-restricted inversion rate in~\eqref{eq:union_inversion_rate} measures strict pairwise order reversals among comparable candidates in their union.

\vspace{20pt}

% ======================================================================
\section{Suffix-Guided Best-First Enumeration}
\label{app:k_shortest_paths}
% ======================================================================
 
This appendix presents the suffix-guided best-first enumerator used for the binary trellis in Section~\ref{sec:graphical}. The procedure performs best-first search on the tree of path prefixes. Each partial path is prioritized by the sum of its accumulated cost and the exact minimum cost of completing it to a full trellis path. Equivalently, the procedure is an A* search that uses the exact cost-to-go. Related best-first and ranked-path methods are well established~\cite{hart1968formal,eppstein1998finding}.

For a trellis state $u$ immediately before layer $i$, let $J_i(u)$ denote the minimum cost of completing layers $i,\ldots,n$. These suffix costs satisfy
\begin{equation}
J_i(u) = \min_{b\in\{0,1\}} \left\{
\gamma_i(u,b) + J_{i+1}\bigl(T(u,b)\bigr) \right\},
\quad i=n,\ldots,1,
\label{eq:suffix_dp}
\end{equation}
with $J_{n+1}(u)=0$ for every state at the terminal layer. Because the trellis is finite and acyclic, this recursion is valid for arbitrary real-valued branch costs.

Let $p_i=(z_1,\ldots,z_i)$ denote a partial path containing the first $i$ branch labels, and let $p_0$ denote the empty path. Its accumulated cost is
\begin{equation}
g(p_i) \triangleq \sum_{\ell=1}^{i}
\gamma_\ell(u_\ell,z_\ell),
\qquad g(p_0)\triangleq0,
\label{eq:partial_path_cost}
\end{equation}
where $u_\ell$ is the state immediately before layer $\ell$. If $u_{i+1}$ is the state reached by $p_i$, assign the path the priority
\begin{equation}
f(p_i) \triangleq g(p_i)+J_{i+1}(u_{i+1}),
\qquad i=0,\ldots,n.
\label{eq:astar_key}
\end{equation}
Thus, $f(p_i)$ is the minimum total cost of any complete path extending $p_i$. In particular, $f(p_n)=g(p_n)$ for every complete path.

The priority queue is initialized with $p_0$, whose state is the initial trellis state $u_1$ and whose priority is $J_1(u_1)$. Each queue entry is assigned the key $\bigl(f(p_i),s(p_i)\bigr)$, where $s(p_i)$ is a unique insertion index assigned in increasing order. Queue entries are removed in lexicographic order of these keys.
When a removed entry represents a complete path, its branch-label sequence is emitted. Otherwise, let the removed path be $p_i=(z_1,\ldots,z_i)$ with $i<n$. For each $b\in\{0,1\}$, define the child path $p_{i+1}\triangleq(z_1,\ldots,z_i,b)$, with
\begin{align}
u_{i+2} &=T(u_{i+1},b),
\label{eq:child_state}
\\
g(p_{i+1}) &=g(p_i)+\gamma_{i+1}(u_{i+1},b).
\label{eq:child_cost}
\end{align}
Each child is inserted with primary priority $f(p_{i+1})$ as defined in~\eqref{eq:astar_key}. Branch labels are processed in a fixed order, and the corresponding insertion indices are assigned in that order. These indices provide a deterministic ordering among queue entries having equal primary priority; they do not, in general, induce lexicographic ordering of the complete branch-label sequences. This insertion-index rule is the deterministic tie-breaking policy used throughout the paper.

For every prefix $p_i$, the value $f(p_i)$ equals the minimum cost of any complete path extending that prefix. Therefore, when a complete path is removed from the queue, no unemitted complete path has smaller cost. The emitted complete paths are consequently ordered by nondecreasing total cost, even when some branch costs are negative. Each nonempty prefix has a unique parent and is inserted only when that parent is expanded; hence, no complete path is emitted more than once. If the search continues until the queue is empty, every complete path is emitted exactly once. By Theorem~\ref{thm:trellis}, the corresponding branch-label sequences are distinct noise-effect patterns ordered by nondecreasing $W_r$. These statements apply in exact arithmetic; the ordering obtained using the stored binary64 coefficients is validated separately in Section~\ref{ssec:complete_order_validation}.

For a precision matrix of half-bandwidth at most $\nu$, a full-state suffix implementation that allocates all $2^\nu$ states at every layer stores $(n+1)2^\nu$ values $J_i(u)$, including those at the terminal layer. It evaluates both branch labels for each of the $n2^\nu$ nonterminal state--layer pairs and therefore performs exactly $n2^{\nu+1}$ branch evaluations. Its storage requirement is $\mathcal{O}(n2^\nu)$.

For online enumeration, let $B_K$ denote the number of priority-queue removals up to and including the removal that emits the $K$th complete path, and let $H_K$ denote the maximum queue occupancy during these iterations. Exactly $K$ removals emit complete paths. The remaining $B_K-K$ removals expand nonterminal prefixes and produce $2(B_K-K)$ child insertions, in addition to the initial insertion of $p_0$.

Assume that each prefix is represented by its terminal state, depth, accumulated cost, and a parent pointer, so that branch extension and priority evaluation require $\mathcal{O}(1)$ time. With a binary heap, the queue-processing time through the $K$th emitted path is $\mathcal{O}\!\left(B_K\log(1+H_K)\right)$. This bound excludes reconstruction of the emitted branch-label sequences. Explicitly materializing $K$ sequences of length $n$ requires an additional $\mathcal{O}(Kn)$ time. The queue stores at most $H_K$ entries, while retaining one predecessor record for every generated prefix requires $\mathcal{O}(B_K)$ additional storage through the $K$th emission. The quantities $B_K$ and $H_K$ depend on the requested number of outputs and the path costs; when $K$ is allowed to grow to $2^n$, both may be exponential in $n$.

\renewcommand{\refname}{Appendix References}
\putbib
\end{bibunit}

\end{document}